\newcommand{\Schatten}{{\mathfrak S}}
\newcommand{\per}{\mathrm{per}}
\newcommand{\Ran}{\mathrm{Ran}}
\newcommand{\Trper}{\underline{\Tr}}
\newcommand{\BZ}{\mathcal{B}}
\theoremstyle{plain}
\newtheorem{theorem}{Theorem}[section]
\newtheorem{assumption}[theorem]{Assumption}
\newtheorem{lemma}[theorem]{Lemma}
\newtheorem{corollary}[theorem]{Corollary}
\newtheorem{proposition}[theorem]{Proposition}
\theoremstyle{remark}
\newtheorem{remark}[theorem]{Remark}
\renewcommand{\vec}[1]{#1}
\tikzset{
  column sep/.code=\def\pgfmatrixcolumnsep{\pgf@matrix@xscale*(#1)},
  row sep/.code   =\def\pgfmatrixrowsep{\pgf@matrix@yscale*(#1)},
  matrix xscale/.code=%
    \pgfmathsetmacro\pgf@matrix@xscale{\pgf@matrix@xscale*(#1)},
  matrix yscale/.code=%
    \pgfmathsetmacro\pgf@matrix@yscale{\pgf@matrix@yscale*(#1)},
  matrix scale/.style={/tikz/matrix xscale={#1},/tikz/matrix yscale={#1}}}
\def\pgf@matrix@xscale{1}
\def\pgf@matrix@yscale{1}
\def\R{{\mathbb R}}
\def\N{{\mathbb N}}
\def\C{{\mathbb C}}
\def\Z{{\mathbb Z}}
\def\eps{{\varepsilon}}
\def\cB{{\mathcal B}}
\def\cH{{\mathcal H}}
\def\cL{{\mathcal L}}
\def\cR{{\mathcal R}}
\def\1{{\mathds 1}}
\def\dps{\displaystyle }
\title{Coherent electronic transport in periodic crystals}
\author{Eric Canc\`es$\dagger$, Clotilde Fermanian Kammerer*, Antoine Levitt$\dagger$, Sami Siraj-Dine$\dagger$*}
\begin{document}
\maketitle
\abstract{We consider independent electrons in a periodic crystal in
  their ground state, and turn on a uniform electric field at some
  prescribed time. We rigorously define the current per unit volume
  and study its properties using both linear response and adiabatic
  theory. Our results provide a unified framework for various
  phenomena such as the quantization of Hall conductivity of
  insulators with broken time-reversibility, the ballistic regime of
  electrons in metals, Bloch oscillations in the long-time response of
  metals, and the static conductivity of graphene. We identify
  explicitly the regime in which each holds. }

\tableofcontents
\section{Introduction} 

We consider a $d$-dimensional perfect crystal ($d=1$, $2$ or $3$) with periodic lattice $\cR$ and assume that its electronic structure can be described by an effective linear Hamiltonian~$H$ acting on some Hilbert space $\cH$. We focus here on the case of spinless continuous models, for which $\cH=L^2(\R^d;\C)$ and 
\begin{equation}\label{eq:periodic_Hamiltonian}
H=\dps \frac 1 2 (-i\nabla + {\mathcal A})^{2} + V,
\end{equation}
where ${\mathcal A} \in L^4_{\rm per}(\R^d;\R^d)$ and
$V \in L^2_{\rm per}(\R^d;\R)$ are $\cR$-periodic functions. We adopt
the Coulomb gauge choice, $\nabla \cdot \mathcal{A} = 0 $ in the sense
of distributions. At zero temperature, the ground-state density matrix
is given by
\begin{equation}\label{eq:GSDM}
\gamma(0) = \mathds 1(H \leq \mu_{\rm F}), 
\end{equation}
where $\mu_{\rm F} \in \R$ is the Fermi level, chosen to have a
prescribed number of electrons per unit cell.

\medskip

Depending on the position of $\mu_{\rm F}$ in the spectrum $\sigma(H)$
of $H$, this can model different types of physical systems. If
$\mu_{\rm F} \notin \sigma(H)$, the system is an insulator. If
$\mu_{\rm F}$ is an interior point of $\sigma(H)$, the system is a metal, or a semi-metal,
depending on the density of states of $H$ at $\mu_{\rm F}$. We refer
to Section~\ref{sec:insmetsem} for the precise hypotheses we use in
each case.

\medskip

The vector potential $\mathcal A$ is chosen to be periodic, which
excludes the case of a uniform external magnetic field. Our analysis
therefore does not directly cover the quantum Hall effect, but can be
adapted to do so (see Remark~\ref{rem:only_depends_on_Hk}). It is
directly applicable to the quantum anomalous Hall effect
\cite{haldane1988model}. We perform our analysis with this particular
Hamiltonian, but it can easily be extended to spin-dependent
continuous models, tight-binding models, or 2D materials such as
graphene (for which the physical space is three-dimensional while the
periodic lattice is two-dimensional); see
Remark~\ref{rem:only_depends_on_Hk} for the exact structure needed.

\medskip

The purpose of this article is to analyze mathematically the behavior
of the electrical current appearing in the crystal when a uniform
external electric field is turned on instantaneously at the initial
time $t=0$. In the case of a uniform stationary electric field of
magnitude $\eps > 0$ along a (not necessarily normalized) vector
$e_{\beta} \in \R^{d}$, the Hamiltonian of the system at time $t > 0$ is
\begin{align} \label{def:Hbetaeps}
    H^{\eps}_\beta = H + \eps x_{\beta},
  \end{align}
where $x_{\beta}=x\cdot e_{\beta}$. This operator is self-adjoint
on $L^2(\R^d;\C)$ (see Proposition~\ref{prop:setting} below),
and therefore gives rise to a unitary group $({\rm e}^{-itH^{\eps}_\beta})_{t \in \R}$ on $L^2(\R^d;\C)$. The electronic state of the system at time $t \ge 0$ then is
\begin{equation}\label{def:gammaeps}
\gamma_\beta^\eps(t) = {\rm e}^{-itH^{\eps}_\beta} \gamma(0) {\rm e}^{itH^{\eps}_\beta}.
\end{equation}
The electrical current in the $e_\alpha$-direction at time $t \ge 0$ is defined as
\begin{equation} \label{eq:current}
j^\eps_{\alpha,\beta}(t) = \underline {\rm Tr}(J_\alpha\gamma_\beta^\eps(t)),
\end{equation}
where $\underline {\rm Tr}$ is the trace per unit volume (which will
be precisely defined in Section \ref{sec:notation}) and $J_\alpha$ the
current operator along the vector $e_{\alpha} \in \R^{d}$ (not necessarily normalized nor orthogonal to $e_\beta$), defined as
\begin{align}\label{def:Jalpha}
  J_\alpha =  -\left( -i \nabla + \mathcal A\right) \cdot e_{\alpha}.
\end{align}

\begin{remark}[on units and sign convention] If a spinless particle
  with mass $m$ and charge $q$ is subjected to a electromagnetic field
  generated by a vector potential $\mathcal A$ and a scalar potential
  $-\eps x_\beta$ generated by a uniform electric field
  $\eps e_\beta$, its Hamiltonian in atomic units is
  $H=\frac 1{2m} (-i\nabla - q{\mathcal A})^2 -q \eps x_\beta$ and the
  charge current operator is $J=q(-i\nabla - q {\mathcal A})$. In our
  definitions ~\eqref{eq:periodic_Hamiltonian}, \eqref{def:Hbetaeps} and
  \eqref{def:Jalpha}, we have set $m=1$ and $q=-1$ (atomic units)
  which are the physical values for the electron: this corresponds to
  applying a force in the direction $-e_{\beta}$ to the electrons, and
  measuring their velocity in the direction $-e_{\alpha}$.
\end{remark}

In the limit of weak external fields ($\eps \ll 1$), the qualitative
properties of the function $t \mapsto j^\eps_{\alpha,\beta}(t)$
heavily depends on the physical nature of the material (insulator,
metal, semi-metal), as well as on the regime (short,
intermediate or long times). Our main results, stated in
Theorems~\ref{thm:insulator}, \ref{thm:metal}
and~\ref{theo:semi-metal}, show that the behavior is as follows (see
Figure \ref{fig:num_results} in Section~\ref{sec:numerics})
\begin{itemize}
\item For insulators, the time-averaged conductivity
  \begin{align}\label{def:conductivity}
    \sigma_{\alpha,\beta} = \lim_{t \to \infty} \frac 1 t \int_{0}^{t} \lim_{\varepsilon \to 0} \frac {j_{\alpha,\beta}^{\varepsilon}(t')}{\varepsilon} \, dt'
  \end{align}
  has a finite value, which is zero in longitudinal directions, and, for 2D materials, is
  proportional to the Chern number in the transverse direction (quantum anomalous Hall effect).
\item For metals, when $t \ll \varepsilon^{-1}$, the electrons are in
  the ballistic regime, and the current increases linearly:
  $j_{\alpha,\beta}^{\varepsilon}(t) \approx D_{\alpha,\beta}
  \varepsilon t$. Under some additional assumptions on the Bloch
  bands, the current displays Bloch oscillations of order $1$ when
  $\varepsilon^{-1} \ll t \ll {\varepsilon^{-1}{\rm log}
    (\eps^{-\zeta})}$ for some small enough $\zeta > 0$.
\item For time-reversible 2D semimetals such as graphene, the
  time-averaged conductivity $\sigma_{\alpha,\beta}$ defined
  in~\eqref{def:conductivity} has a finite value equal to $\frac 1
  {16} e_{\alpha} \cdot e_{\beta}$ times the number of Dirac points in
  the Brillouin zone.
\end{itemize}
Although our formalism is different, our results for insulators and
metals are formally consistent with those obtained using the semiclassical
equations of motion
$\dot x = \nabla \lambda_{n,k}, \;\dot k = -\nabla V + \dot x \times
(\nabla \times {\cal A})$ and their higher-order refinements in the case when the $n^{\rm th}$ band is
isolated, where the $\lambda_{n,k}$'s are the Bloch eigenvalues of $H$
(see Section~\ref{sec:notation}). We refer to
\cite{panati2003effective, stiepan2013semiclassical} for a mathematical analysis of the insulating case.

\medskip

Note that our results use an averaging in time, and we are unable to
conclude anything about what would be the naive definition of the
conductivity
\begin{align}
  \label{eq:conductivity_naive}
  \lim_{t \to  \infty} \lim_{\varepsilon \to 0} \frac{j_{\alpha,\beta}^{\varepsilon}(t)}{\varepsilon}.
\end{align}
A form of averaging of time fluctuations is always necessary to infer
zero-frequency behavior from step responses in non-dissipative
systems, even in the linear case. The easiest way to see this is by
the very simple model for the relationship between an input $I(t)$ and
an output $O(t)$:
  \begin{equation} \label{eq:toymodel}
    i \dot O(t) = \omega O(t) + I(t).
  \end{equation}
  This simplified model describes a forced oscillator with eigenfrequency $\omega$, and arises from the linear response of the time-dependent Schr\"odinger equation of a two-level system. For a
  constant input~$I_0$, there is a steady state solution
  $O_0 = \widehat R_0 I_0$, where $\widehat R_0 = -\frac 1 \omega$ is
  the zero-frequency transfer function of the system. However, since
  this system is oscillatory, this steady state may never be reached:
  if $I$ is brutally switched on at time $0$ with $I(t) = O(t) = 0,$
  if $t \leq 0$, $I(t) = I_0$ if $t > 0$, then
  $ O(t) = O_0(1- {\rm e}^{-i\omega t}) = \widehat{R}_0 I_0 (1- {\rm
    e}^{-i\omega t})$ and we cannot define $\widehat R_0$ as the limit
  of $O(t)/I_0$ when $t$ goes to infinity. However, by averaging we
  obtain
  \begin{align*}
     \lim_{t\to\infty}\frac 1 t  \int_{0}^{t} \frac{O(t')}{I_0} \, dt' = \widehat R_0.
   \end{align*}
   Another common way of retrieving the value of $\widehat R_0$ is by an
   adiabatic switching of the electric field
   $I(t) = I_0 {\rm e}^{\eta t}$ for $t \leq 0$, $I(t) = I_0$ for
   $t > 0$ \cite{bouclet2005linear,DeNittis2017}. Another possibility
   is to represent the relationship between $O$ and $I$ by a
   convolution with a causal response function $R(t)$:
   $O(t) = (R \ast I)(t)$, and define the zero-frequency transfer
   function as $\lim_{\eta \to 0^{+}} \widehat R(i \eta)$, as is often
   done implicitly in the physics literature. Yet another, more
   physical, possibility is to use a model with dissipation (in this
   case
   $i \dot O_\eta(t) + i \eta O_\eta(t) = \omega O_\eta(t) + I(t)$),
   compute the zero-frequency transfer function as the long-time limit
   of $O_\eta(t)/I_0$, and then let the dissipation $\eta$ tend to
   zero. A particular variant of this scheme is known as the
   relaxation time approximation \cite{bellissard1994noncommutative}
   (the relaxation time being proportional to $1/\eta$). For simple
   systems, all these methods are
   equivalent.
   
   Note that the problems in the toy model \eqref{eq:toymodel} are related to the
   presence of a \textit{resonance} at $\omega$ in the model, i.e. a
   pole in the Fourier transform of the response function. For our
   perfect crystal model however, the oscillatory components of the response
   are integrated over the Brillouin zone of the periodic crystal,
   which induces an averaging. Therefore, these procedures might not
   be necessary. Indeed, we observe numerically in simple tight-binding models that
   the naive limit in~\eqref{eq:conductivity_naive} seems to be
   well-defined (see Section \ref{sec:numerics}). Identifying precise
   conditions on the band structure so that this holds will be the
   subject of future work.
   
   \medskip 

In the metallic case, the conductivity is either infinity or zero,
depending on the definition adopted. Indeed, our results imply that
\begin{align*}
  \lim_{t \to \infty} \frac 1 t \int_{0}^{t}\lim_{\varepsilon \to 0} \frac {j_{\alpha,\alpha}^{\varepsilon}(t')}{\varepsilon} \, dt' = +\infty
\end{align*}
is infinite, because
$j_{\alpha,\alpha}^{\varepsilon}(t) \approx D_{\alpha\alpha} \varepsilon t$ in
the regime $t \ll \varepsilon^{-1}$. On the other hand, in
tight-binding models, a simple argument \cite[Proposition
4]{bellissard1994noncommutative} shows that
\begin{align*}
  \lim_{\varepsilon \to 0}\frac 1 \varepsilon \lim_{t \to \infty}\frac 1 t \int_{0}^{t}{j_{\alpha,\alpha}^{\varepsilon}(t')} \, dt' = 0.
\end{align*}
These two limits correspond to different regimes. In the first one, the
electrons undergo ballistic transport, being uniformly accelerated by
the electric field. In the second one, the electrons undergo Bloch
oscillations, a phenomenon whereby particles in a periodic potential
accelerated by a constant force oscillate rather than propagate, as
first noticed by Zener \cite{zener1934theory}.

\medskip 

Of course, our model is extremely simple. We assume that the electrons
are at zero temperature and we ignore electron-electron interactions,
the reaction of the lattice (electron-phonon interactions), and
electron scattering by impurities in the crystal. These collision
events play a relatively minor role in insulators at low temperatures, with the quantum
Hall effect in particular being very robust to perturbations
\cite{bellissard1994noncommutative}. However, they impact significantly the properties of
metals. In fact, in the linear response regime ($\varepsilon \ll 1$,
$t \ll \varepsilon^{-1}$), the current increases linearly,
representing ballistic transport (see Theorem \ref{thm:metal}). This
increase in the velocity of the electrons physically results in an increased
collision rate, which acts as dissipation and eventually limits the
current. This results in the finite conductivity observed
experimentally in macroscopic physics (Ohm's law). The mathematical understanding
of this effect in the mathematical framework considered here is left
to future work.

\medskip

The question of quantum transport in solids has attracted significant
attention in the mathematical community, with one of the main drivers
being the explanation of Anderson localization on the one hand, and
the quantum Hall effect and its relation to topological properties on
the other hand
\cite{Teufel2019,Monaco2017,FiorenzaDomenicoPanati,Panati2007,Marcelli2019}.
Other topics of interest include the properties of graphene (see for
instance \cite{giuliani2012universality}), and mesoscopic transport in
the Landauer-Buttiker formalism. Comparatively few works have looked
specifically at transport in metals. To the best of our knowledge, the
present work is the first to present mathematically rigorous results
on insulators, metals  and semi-metals in a unified framework.

\medskip

In our results in the linear response regime, we consider the quantity
\begin{align*}
  j_{\alpha,\beta}^{\rm LR}(t) = \lim_{\varepsilon \to 0} \frac {j_{\alpha,\beta}^{\varepsilon}(t)}{\varepsilon}.
\end{align*}
and then compute the conductivity in the limit $t \to \infty$. This
order of limits is necessary to ensure that the electronic state never
deviates significantly from its equilibrium. Considering the opposite
limit, i.e. the infinite-time dynamics of
$j_{\alpha,\beta}^{\varepsilon}(t)$ at $\varepsilon$ fixed, is an extremely
hard problem, as it is for any dynamics of non-dissipative systems.

\medskip 

The validity of this linear response approximation to derive
conductivities, pioneered by Kubo in \cite{Kubo1957}, is by no means
obvious. As was pointed out in \cite{vankampen1971case}, the most
obvious way to derive it, ``microscopic linearity'' is not physically
relevant: assuming a mean free path length $d$ for the electrons, the
validity condition that a free electron of (effective) mass $m$ and
charge $e$ is not significantly affected by an electric field $E$ in the time interval $[0,t]$ is $\frac{t^{2}}{2} \frac{e E}{m} \ll d$. Taking
macroscopic times $t$ and reasonable microscopic values for $e$ and
$m$, this limits fields to microscopic values
(\cite{vankampen1971case} quotes $10^{-18}$ Volt/cm), which is
unrealistic in practice.
The solution of this paradox is ``macroscopic linearity'':
$\gamma_\beta^\eps(t)$ has to be understood not as the state of a
single set of electrons, but rather as a thermodynamic ensemble. The
effective evolution of $\gamma_{\beta}^{\varepsilon}(t)$ then involves
a dissipative term (coming from electron-phonon, electron-impurity or
electron-electron interaction) that tends to restore the density
matrix to its equilibrium state. The condition of validity of linear
response is then that the driving force is negligible compared to the
restoring force, which is usually satisfied in practice
\cite{van1988van}. The mathematical justification of linear response
is then to consider a more sophisticated model involving a dissipation
strength $\eta$, and to perform the van Hove limit
$t \to \infty, \eta \to 0, t^{2} \eta = {\rm cst}$, at
$\varepsilon$ fixed \cite{van1988van}. Then $\varepsilon$ can be taken
to zero, and the results of linear response are recovered.
Alternatively, a Drude-type model like the relaxation-time
approximation \cite{bellissard2002coherent} can be used, with the
similar effect of returning the density matrix to equilibrium. Yet
another possibility is to never let the density matrix get out of its
equilibrium state by switching on adiabatically the external field
from negative infinity as $e^{\eta t} \varepsilon x_{\beta}$, and
to consider the limit $\eta \to 0$ first then $\varepsilon \to 0$
\cite{Bachmann2018}. In our naive model, we do not consider a
dissipation term, and therefore simply assume the validity of linear
response.

\medskip

Our method of proof is based on the standard gauge change
$\widetilde \psi(x,t) = {\rm e}^{i \varepsilon t x_{\beta}} \psi(x,t)$
that transforms the constant in time but non-spatially-periodic
Hamiltonian $H^{\eps}_\beta = H + \eps x_{\beta}$ into the
time-dependent Hamiltonian
$\widetilde{H}^\eps_\beta(t)= \frac 12 (-i\nabla + \mathcal{A} -
\eps\vec{e}_\beta t)^2 +V$. This Hamiltonian is spatially periodic,
and the study of its dynamics can be reduced via Bloch-Floquet theory
to that of its fibers
$\widetilde{H}^\eps_{\beta,k}(t)= \frac 12 (-i\nabla + k + \mathcal{A}
- \eps\vec{e}_\beta t)^2 +V$ acting on periodic functions (Section
\ref{sec:preliminaries}), for all values of the pseudo-momentum $k \in \R^{d}$. Fiber by fiber, this
time-dependent Hamiltonian can then be treated using the tools of
time-dependent perturbation theory (Section \ref{sec:adiabatic}).
Since time is scaled by $\varepsilon$, the Hamiltonian can be seen as
either a small perturbation of the rest Hamiltonian $H$ for small
times (in which case we can use linear response to expand
$j_{\alpha,\beta}^{\varepsilon}(t)$ to first order in $\varepsilon$
for a fixed $t$, Proposition \ref{prop:Dyson}), or as a slow
perturbation (in which case the adiabatic theorem allows us to access
larger time scales $t \approx \frac 1 \varepsilon$, Proposition
\ref{prop:adiab2}). For insulators and metals in the short-time
regime, both tools are applicable and yield the same result. For
metals in the Bloch oscillations regime, only the adiabatic theorem is
applicable, and for semimetals, only linear response is applicable due
to the gap closing at the Dirac points.

\medskip 

The techniques we use (linear response and adiabatic theory) are not
new, nor are our results particularly surprising to experts in the
field. Rather, we see the contribution of this paper as unifying in
the same framework disparate studies on different systems, as well as
providing insights on the current response without any specific
regularization technique (such as adiabatic switching or dissipation).
Our results on Bloch oscillations also appear to be new in the
mathematical literature.

\medskip

The structure of the paper is as follows. We describe our results in Section \ref{sec:results}: we define the
current in Proposition~\ref{prop:setting}, and study its properties
for insulators, metals and semi-metals in
Theorems~\ref{thm:insulator}, \ref{thm:metal}
and~\ref{theo:semi-metal}. We illustrate numerically the different
behaviors we obtain in each of the three settings in Section
\ref{sec:numerics}. We devote Section~\ref{sec:preliminaries} to
preliminaries about the regularity and Bloch decomposition of the
current. Section \ref{sec:adiabatic} states and proves results in
adiabatic and linear response perturbation theory.
Sections~\ref{sec:ins}, \ref{sec:met} and~\ref{sec:semimet} are
devoted to the proof of our results in the case of insulators, metals
and semi-metals. Finally two short Appendices are devoted to technical
issues.


\section{Main results: electrical current in periodic materials}\label{sec:results}

\subsection{Notation}
\label{sec:notation}
In
this paper we fix
${\mathcal A} \in L^4_{\rm per}(\R^d;\R^d)$,
$V \in L^2_{\rm per}(\R^d;\R)$ (see below for the definition of these
spaces), $\mu_{\rm F} \in \R$, and 
$\cR$ is the lattice of the $d$-dimensional crystal. We fix a (non-necessarily orthonormal) basis
$(\vec{e_{\alpha}})_{\alpha=1,\dots,d}$ of the momentum space $\R^d$, and set
$x_{\alpha} = \vec x \cdot \vec e_{\alpha}, \mathcal A_{\alpha} =
\mathcal A \cdot \vec e_{\alpha}$ for $\alpha = 1,\dots,d$.
We denote by $\cR^\ast$ the {\it dual lattice} of the periodic lattice
$\cR$, by $\Omega$ an arbitrary unit cell in the physical space, and
by $\cB$ an arbitrary unit cell in the reciprocal space (which we will
call by abuse of language the {\it Brillouin zone}). In the special case of
a cubic crystal of lattice parameter $a > 0$, we have $\cR=a\Z^d$,
$\cR^\ast=\frac{2\pi}a\Z^d$, and we can take $\Omega=[0,a)^d$,
$\cB=[-\frac \pi a, \frac \pi a)^d$. 

\medskip 

The {\it $\cR$-periodic Lebesgue and
Sobolev spaces} are denoted by
\begin{align*}
L^p_{\rm per}&:=\left\{ u \in L^p_{\rm loc}(\R^d;\C) \; | \; u \mbox{ $\cR$-periodic} \right\}, \\
H^s_{\rm per}&:=\left\{ u \in H^s_{\rm loc}(\R^d;\C) \; | \; u \mbox{ $\cR$-periodic} \right\}.
\end{align*}
The space of bounded linear operators on a Hilbert space~$\mathcal{H}$ is denoted by ${\mathcal L}(\mathcal{H})$, and the {Schatten class} of bounded operators~$A \in {\mathcal L}(\mathcal{H})$ such that $\Tr(|A|^p) < \infty$ by $\Schatten_{p}(\mathcal{H})$.
For $R \in \cR$, we denote by~$\tau_R$ the {translation operator} formally defined by $\tau_R \phi = \phi(\cdot - R)$. Depending on the context, $\tau_R$ will be seen as a unitary operator on $L^2(\R^d;\C)$, or as a linear operator on some $\cR$-translation invariant subspace of ${\mathcal D}'(\R^d;\C)$.
A bounded operator on $L^2(\R^d;\C)$ is called $\cR$-periodic if it commutes with~$\tau_R$ for all $R \in \cR$. An unbounded self-adjoint operator on $L^2(\R^d;\C)$ is called $\cR$-periodic if its resolvent is $\cR$-periodic. A bounded $\cR$-periodic operator $A \in {\mathcal L}(L^2(\R^d;\C))$ is called locally trace-class if $\chi A \chi \in \Schatten_{1}(L^2(\R^d;\C))$ for any compactly supported function $\chi \in L^\infty(\R^d;\C)$. For $p \ge 1$, we denote by $\Schatten_{p,\per}$ the space of $\cR$-periodic operators $A \in {\mathcal L}(L^2(\R^d;\C))$ such that $|A|^p$ is locally trace class. Any operator $A \in \Schatten_{1,\per}$ has a density $\rho_A \in L^1_{\rm per}$ characterized by 
$$
\forall \chi \in C^\infty_{\rm c}(\R^d;\C), \quad \Tr(A\chi) = \int_{\R^d} \rho_A \chi.
$$
The trace per unit volume of an operator $A \in \Schatten_{1,\per}$ is defined as
$$
\Trper(A) = \frac{1}{|\Omega|} \Tr_{L^2(\R^d;\C)}(\1_\Omega A \1_\Omega) =
\fint_{\Omega}  \rho_A,
$$
where $\1_\Omega$ is the characteristic function of the unit cell
$\Omega$, and $\fint_\Omega$ is a shorthand notation for $\frac 1
{|\Omega|} \int_{\Omega}$. This formula is independent of the choice
of the unit cell $\Omega$.

\medskip

Since we are dealing here with periodic materials, we will use the {Bloch transform} (also
called Bloch-Floquet transform) \cite{ReedSimon4, Kuchment, Wilcox}. For $K \in \cR^\ast$, let $T_K$ be the unitary multiplication operator on $L^2_{\rm per}$ defined by 
$$
\forall v \in L^2_{\rm per}, \quad (T_Kv)(x) = {\rm e}^{-iK\cdot x} v(x) \quad \mbox{ for a.a. } x \in \R^d,
$$
and 
\begin{align*}
&L^2_{\rm qp}(L^2_{\rm per})\\
&:=\left\{ \R^d \ni k \mapsto u_k \in L^2_{\rm per} \; | \; \int_\cB \|u_k\|_{L^2_{\rm per}}^2 \, dk < \infty, \; u_{k+K}=T_K u_k \mbox{ for all } K \in \cR^\ast \mbox{ and a.a. } k \in \R^d \right\},
\end{align*}
the Hilbert space of $\cR^\ast$-quasi-periodic $L^2_{\rm per}$-valued functions on $\R^d$ endowed with the inner product
$$
\langle u,v\rangle_{L^2_{\rm qp}(L^2_{\rm per})}= \fint_\cB \langle u_k,v_k\rangle_{L^2_{\rm per}} \, dk.
$$
Here and below, the subscript qp refers to the quasi-periodicity property.
The Bloch transform then is the unitary map from $L^2(\R^d;\C)$ to $L^2_{\rm qp}(L^2_{\rm per})$ defined for $u \in C^\infty_{\rm c}(\R^d;\C)$ by
\begin{equation}\label{eq:Bloch_transform}
\forall k \in \R^{d}, \quad \forall x \in \R^d, \quad u_k(x) = \sum_{R \in \cR} u(x+R) \, {\rm e}^{-ik \cdot (x+R)}.
\end{equation}
Its inverse is given by
\begin{equation}\label{eq:inverse_Bloch_transform}
u(x) =\fint_\cB {\rm e}^{ik\cdot x} u_{k}(x) \, dk, \quad \mbox{for a.a. } x \in \R^d.
\end{equation}
Any $\cR$-periodic
operator $A \in {\mathcal L}(L^2(\R^d;\C))$ is decomposed by the Bloch
transform in the sense that there exists a function
$k \mapsto A_k$ in $L^\infty_{\rm qp}({\mathcal L}(L^2_{\rm per}))$ such that for any
$u \in L^2(\R^d;\C)$ and almost all $k \in \R^d$, $(Au)_k = A_ku_k$, and
\begin{equation}\label{eq:qpBloch}
A_{k+K} = T_K A_k T_K^\ast, \quad \mbox{ for all } K \in \cR^\ast \mbox{ and a.a. } k \in \R^d.
\end{equation}
The $A_{k}$'s are called the fibers of the operator $A$.
If $A \in \Schatten_{1,\per}$, then the function $k \mapsto A_{k}$ is
in $L^1_{\rm qp}(\Schatten_{1}(L^2_{\rm per}))$, the function $k
\mapsto \Tr(A_k)$ is in $L^1_{\rm loc}(\R^{d})$, $\cR^{*}$-periodic, and we have
$$
\Trper(A) =(2\pi)^{-d} \int_\cB \Tr(A_k) \, dk.
$$
The Bloch decomposition theorem can be extended to unbounded
$\cR$-periodic self-adjoint operators using the resolvent \cite{ReedSimon4}. 

\medskip

In the case of the periodic Hamiltonian operator $H$ given by \eqref{eq:periodic_Hamiltonian}, we have 
  \begin{align} \label{eq:Hk}
    H_{k} = \frac 1 2 (-i\nabla + k+ {\mathcal A})^{2} + V.
  \end{align}
  For each $k \in \R^d$, $H_k$ is a bounded below self-adjoint operator on $L^{2}_{\per}$ with domain $H^{2}_{\per}$ and compact resolvent.
  Let $(\lambda_{n,k})_{n \in \N^\ast}$ be the non-decreasing sequence of eigenvalues of $H_k$ counting multiplicities
  $$
  \lambda_{1,k}\leq \lambda_{2,k}\leq \lambda_{3,k} \leq \cdots , \quad \lim_{n \to \infty} \lambda_{n,k}=+\infty,
  $$
  and we use the convention $\lambda_{0,k} = -\infty$. We denote by
  $(u_{n,k})_{n \in \N^\ast} \in (H^{2}_{\per})^{\mathbb N^{*}}$ an $L^2_{\rm per}$-orthonormal basis of associated eigenfunctions:
  \begin{align*}
    H_{k} u_{n,k} = \lambda_{n,k} u_{n,k}, \quad \langle u_{m,k},u_{n,k}\rangle_{L^2_{\rm per}}=\delta_{m,n}.
  \end{align*}
  For $N \in \N^{*}$ and $k \in \R^d$, we will denote by
  \begin{align}
    P_{N,k} = \1(H_{k} \le \lambda_{N,k}).  
  \end{align}
  Whenever $\lambda_{N,k} < \lambda_{N+1,k}$, $P_{N,k}$ is the spectral projector on the eigenspace associated with the lowest
  $N$ eigenvalues of $H_{k}$ (counting multiplicities):
  \begin{align}
    \label{eq:def_PNk}
    P_{N,k} = \sum_{n=1}^{N} |u_{n,k}\rangle\langle  u_{n,k}|.
\end{align}
  Since $H_{k}$ is quasi-periodic, so is $P_{N,k}$, and the eigenvalues $\lambda_{n,k}$ are
  ${\mathcal R}^\ast$-periodic functions of $k$. By a min-max argument
  (see e.g. \cite{ReedSimon4,cances2018numerical}), there exists
  $\underline{C_1}, \overline{C_1} \in \R$, and
  $\underline{C_2},\overline{C_2}>0$ such that
  \begin{align}
    \label{eq:asymptotics_eigenvalues}
 \underline{C_1}+ \underline{C_2} n^{2/d}  \leq \lambda_{n,k} \leq  \overline{C_1}+ \overline{C_2} n^{2/d}.
\end{align}
 Denoting by $N_k$ the number of eigenvalues below the Fermi level $\mu_{\rm F}$ at $k$
  \begin{equation}\label{def:Nk}
  N_k = \Big|\Big\{\lambda_{n,k}\leq \mu_{\rm F}, n \in \N^\ast \Big\}\Big|,
  \end{equation}
we see that $N_k$ is bounded uniformly in $k$.

\medskip 

Let us now consider the {ground-state density matrix}
$\gamma(0) = \mathds 1(H \leq \mu_{\rm F})$ defined in~\eqref{eq:GSDM}. Its Bloch fibers  are
  \begin{equation}\label{eq:gk0}
  \gamma_k(0) =  \mathds 1(H_k \leq \mu_{\rm F}) = P_{N_{k},k}.
  \end{equation}
  The {\it current operator}
  $J_\alpha= -(-i \nabla + {\mathcal A}) \cdot e_{\alpha}$ defined
  in~(\ref{def:Jalpha}) is also $\cR$-periodic, with fibers
 $$
 J_{\alpha,k} = -(-i \nabla + k + {\mathcal A}) \cdot e_{\alpha} =
 - \nabla_k H_k \cdot e_{\alpha} =: -\partial_{\alpha} H_{k}.
 $$
 Note that the notation $\partial_\alpha$ denotes a derivative
 along the (not necessarily normalized) vector $e_{\alpha}$.
 
 \medskip 
  
Lastly, for each $q \in \R^d$, we denote that $G_q$ the unitary multiplication operator on $L^2(\R^d;\C)$ defined by 
\begin{equation}\label{eq:defGq}
\forall u \in L^2(\R^d;\C), \quad (G_qu)(x) = {\rm e}^{iq\cdot x} u(x) \quad \mbox{ for a.a. } x \in \R^d.
\end{equation}
The operator $G_q$ is not $\cR$-periodic, except 
when $q \in \cR^\ast$ (in which case~$G_{q}$ is fibered, with
$G_{q,k} = T_{-q}$ for all $k$). However, for any $\cR$-periodic
operator $A \in {\mathcal L}(L^2(\R^d;\C))$ and any $q \in \R^{d}$,
the operator $G_qAG_q^\ast$ is $\cR$-periodic and its Bloch
decomposition is given by
\begin{equation}\label{eq:fibers_eiqx}
(G_qAG_q^\ast)_k = A_{k-q}, \quad \mbox{ for a.a. } k \in \R^d.
\end{equation}

\subsection{The Bloch theorem}
Before attacking the well-posedness of the current
$j_{\alpha,\beta}^{\eps}(t) = \underline {\rm Tr}(J_\alpha
\gamma_\beta^\eps(t))$ for $\varepsilon, t \neq 0$, we first study an easier special case.
\begin{proposition}[Bloch theorem]\label{prop:null_current} The current satisfies
\begin{align*}
& j^0_{\alpha,\beta}(t) = 0, \quad \forall t \ge 0 \;\;\; \mbox{(no current in the absence of external field)}, \\
& j^\eps_{\alpha,\beta}(0)=0, \quad \forall \eps \ge 0 \;\;\;  \mbox{(continuity of the current at $t=0$).}
\end{align*}
\end{proposition}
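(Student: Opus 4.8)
The two assertions both reduce to the single identity $\Trper\!\big(J_\alpha\gamma(0)\big)=0$. Indeed, $H^0_\beta=H$ and $\gamma(0)=\1(H\le\mu_{\rm F})$ is a bounded Borel function of $H$, hence commutes with the unitary group $(\mathrm e^{-itH})_{t}$; therefore $\gamma^0_\beta(t)=\mathrm e^{-itH}\gamma(0)\mathrm e^{itH}=\gamma(0)$ for every $t$, and $j^0_{\alpha,\beta}(t)=\Trper(J_\alpha\gamma(0))$ independently of $t$. Likewise $\mathrm e^{-i\cdot0\cdot H^\eps_\beta}=\mathrm{Id}$ gives $\gamma^\eps_\beta(0)=\gamma(0)$, so $j^\eps_{\alpha,\beta}(0)=\Trper(J_\alpha\gamma(0))$ for every $\eps\ge0$. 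First I would record — using the material of Section~\ref{sec:notation} and Proposition~\ref{prop:setting} — that $\gamma(0)\in\Schatten_{1,\per}$ (its fibre $\gamma_k(0)=P_{N_k,k}$ has rank $N_k$, bounded uniformly in $k$ by~\eqref{def:Nk}) and that $J_\alpha\gamma(0)\in\Schatten_{1,\per}$ (its fibre $J_{\alpha,k}\gamma_k(0)=-(\partial_\alpha H_k)P_{N_k,k}$ is finite-rank, with $\Schatten_1$-norm at most $N_k\,\|(\partial_\alpha H_k)P_{N_k,k}\|$, bounded uniformly in $k\in\cB$ by the relative boundedness of $J_{\alpha,k}$ with respect to $H_k$). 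This makes $j^0_{\alpha,\beta}(t)$ and $j^\eps_{\alpha,\beta}(0)$ well defined and licenses the Bloch trace formula.

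By the Bloch decomposition,
\[
\Trper(J_\alpha\gamma(0))=(2\pi)^{-d}\!\int_\cB\Tr\!\big(J_{\alpha,k}\gamma_k(0)\big)\,dk=-(2\pi)^{-d}\!\int_\cB\Tr\!\big((\partial_\alpha H_k)P_{N_k,k}\big)\,dk ,
\]
using $J_{\alpha,k}=-\partial_\alpha H_k$ and $\gamma_k(0)=P_{N_k,k}$ from~\eqref{eq:gk0}. (The appearance of a $k$-derivative is natural: $J_\alpha=\mathrm i[x_\alpha,H]$, and the commutator $[x_\alpha,\cdot]$ acts on Bloch fibres as $\mathrm i\,\partial_\alpha$.) The key point is that the fibre integrand is, for a.e.\ $k$, an exact $\partial_\alpha$-derivative of an $\cR^\ast$-periodic function. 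Set
\[
\widetilde E(k):=\Tr\!\big((H_k-\mu_{\rm F})P_{N_k,k}\big)=\sum_{n\ge 1}\min\!\big(\lambda_{n,k}-\mu_{\rm F},\,0\big).
\]
By the Weyl-type bounds~\eqref{eq:asymptotics_eigenvalues}, only the finitely many indices $n$ with $\underline{C_1}+\underline{C_2}n^{2/d}\le\mu_{\rm F}$ contribute, uniformly in $k$; since each $\lambda_{n,\cdot}$ is $\cR^\ast$-periodic and locally Lipschitz, $\widetilde E$ is $\cR^\ast$-periodic and globally Lipschitz on $\R^d$.

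On the set of $k$ with $\mu_{\rm F}\notin\sigma(H_k)$ — which is all of $\R^d$ for an insulator, and the complement of a Lebesgue-null set in the metallic case (non-degeneracy of the band functions / absence of flat bands) — the projector $P_{N_k,k}=\1(H_k\le\mu_{\rm F})$ is real-analytic in $k$ (Riesz projector onto the lowest $N_k$ eigenvalues), so $\widetilde E$ is differentiable there and, as $(H_k-\mu_{\rm F})P_{N_k,k}$ is a smooth finite-rank family,
\[
\partial_\alpha\widetilde E(k)=\Tr\!\big((\partial_\alpha H_k)P_{N_k,k}\big)+\Tr\!\big((H_k-\mu_{\rm F})\,\partial_\alpha P_{N_k,k}\big).
\]
The second trace vanishes: writing $P=P_{N_k,k}$, differentiating $P^2=P$ gives $P(\partial_\alpha P)P=0$ and $(\1-P)(\partial_\alpha P)(\1-P)=0$, so $\partial_\alpha P$ is finite-rank and off-diagonal for the splitting $\Ran P\oplus\Ran(\1-P)$, while $H_k-\mu_{\rm F}$ is block-diagonal for it (it commutes with $P$); hence $(H_k-\mu_{\rm F})\partial_\alpha P$ has vanishing diagonal blocks, hence vanishing trace. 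Therefore $\Tr\big((\partial_\alpha H_k)P_{N_k,k}\big)=\partial_\alpha\widetilde E(k)$ for a.e.\ $k$ — a trace-class incarnation of the Hellmann–Feynman formula. Since $\widetilde E$ is Lipschitz and $\cR^\ast$-periodic, expressing the integral in a basis of $\cR^\ast$ and using the fundamental theorem of calculus on lines (valid for absolutely continuous functions) together with Fubini gives $\int_\cB\partial_\alpha\widetilde E(k)\,dk=0$, whence $\Trper(J_\alpha\gamma(0))=0$ and the proposition follows.

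The one genuinely delicate step is the metallic case. There $k\mapsto\gamma_k(0)=P_{N_k,k}$ is discontinuous, because the occupation number $N_k$ jumps across the Fermi surface, so the naive band energy $\sum_{n\le N_k}\lambda_{n,k}$ is itself discontinuous and the integral over $\cB$ of its a.e.\ derivative is \emph{not} zero. The remedy is the shift by $\mu_{\rm F}$, which replaces each band contribution by the continuous function $\min(\lambda_{n,k}-\mu_{\rm F},0)$ and makes $\widetilde E$ genuinely Lipschitz; one must also know that $\{k:\mu_{\rm F}\in\sigma(H_k)\}$ is Lebesgue-null, which is immediate for insulators and follows for metals from the non-degeneracy of the band functions.
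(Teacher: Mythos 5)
Your proof is correct, but it takes a genuinely different route from the paper's, and it carries a small extra hypothesis that the paper's argument avoids.

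Both proofs reduce to $\Trper(J_\alpha\gamma(0))=0$. From there, the paper uses a variational argument in the spirit of Bohm's original proof of Bloch's theorem: it forms the momentum-shifted trial state $\gamma^\delta=e^{-i\delta e_\alpha\cdot x}\gamma(0)\,e^{i\delta e_\alpha\cdot x}$, expands $\underline{\Tr}(H\gamma^\delta)$ exactly to second order in $\delta$ using the affine dependence of $J_{\alpha,k}=-\partial_\alpha H_k$ on $k$ (so no regularity of $k\mapsto\gamma_k(0)$ is invoked), and observes that the variational minimality of the ground state forces the first-order coefficient, which is precisely $-\Trper(J_\alpha\gamma(0))$, to vanish. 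You instead recognize the fibre integrand as the a.e.\ gradient of the shifted band energy $\widetilde E(k)=\Tr\big((H_k-\mu_{\rm F})\gamma_k(0)\big)$, show $\widetilde E$ is $\cR^\ast$-periodic and Lipschitz, and integrate by parts over $\cB$. The Hellmann--Feynman step and the $\mu_{\rm F}$-shift to restore continuity across the Fermi surface are both handled correctly.

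The one substantive difference in scope: your argument needs $\{k:\mu_{\rm F}\in\sigma(H_k)\}$ to be Lebesgue-null in order to invoke the smoothness of $P_{N_k,k}$ and the Hellmann--Feynman identity almost everywhere, as you yourself note. This is satisfied under each of Assumptions~\ref{ass:insulators}, \ref{ass:metal}, \ref{ass:semimetal}, so it covers all cases the paper ever uses. But Proposition~\ref{prop:null_current} is stated without any hypothesis on the band structure, and the paper's variational proof is genuinely unconditional: it never differentiates $\gamma_k(0)$ and hence is indifferent to flat bands at the Fermi level or any other pathology of the Fermi surface. (One can close your gap with the additional observation that the gradient of a Lipschitz function vanishes a.e.\ on each of its level sets, so $\partial_\alpha\widetilde E(k)=\Tr((\partial_\alpha H_k)P_{N_k,k})$ a.e.\ regardless; but as written, your proof assumes what the paper does not.) In exchange, your route makes the mechanism very transparent — the current is literally a boundary term — whereas the paper's proof is shorter and exploits the variational characterization of $\gamma(0)$.
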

\begin{proof}
  This is a classical statement going back to Bloch,
valid in a more general context. We adapt here the proof in \cite{bohm1949note}. We have
\begin{align*}
  J := j_{\alpha,\beta}^{0}(t) = j_{\alpha,\beta}^{\varepsilon}(0) = \underline{\Tr}(J_{\alpha} \gamma(0)) = {- (2\pi)^{-d}\int_{\BZ} } \Tr(\partial_{\alpha} H_{k} \gamma_{k}(0)) \, dk.
\end{align*}
Assume that this quantity is non-zero. Construct for $\delta \in \R$ a trial state
\begin{align*}
  \gamma^{\delta} = e^{-i \delta \vec{e}_{\alpha} x} \gamma(0) e^{i \delta \vec{e}_{\alpha} x},
\end{align*}
a periodic operator with fibers
\begin{align*}
  \gamma^{\delta}_{k} = \gamma_{k+\delta \vec{e}_{\alpha}},
\end{align*}
and compute
\begin{align*}
  \underline{\Tr}(H \gamma^{\delta}) &= (2\pi)^{-d} \int_{\BZ} \Tr(H_{k} \gamma_{k + \delta \vec{e}_{\alpha}}(0))\, dk\\
  &= (2\pi)^{-d} \int_{\BZ} \Tr(H_{k-\delta \vec{e}_{\alpha}} \gamma_{k}(0))\, dk\\
  &= \underline{\Tr}(H \gamma(0)) - \delta {(2\pi)^{-d}\int_{\BZ} } \Tr(\partial_{\alpha} H_{k} \gamma_{k}(0)) \, dk + O(\delta^{2}).
\end{align*}
Since $\gamma(0)$ is the ground state, $\underline{\Tr}(H \gamma^{\delta}) \le \underline{\Tr}(H
\gamma(0))$ for all $\delta$, and therefore $J = 0$.

\end{proof}

\subsection{Definition of the current}
  
For $\eps > 0$, the operator 
\begin{equation*}
H_\beta^\eps=H + \eps x_\beta=\frac 12 (-i\nabla + \mathcal{A} )^2 +V + \eps x \cdot e_{\beta}
\end{equation*} 
already introduced in~\eqref{def:Hbetaeps} is not $\cR$-periodic, and we would naively expect that the density matrix 
$$
\gamma_\beta^\eps(t) = {\rm e}^{-itH^{\eps}_\beta} \gamma(0) {\rm e}^{itH^{\eps}_\beta}
$$ 
at time $t > 0$ (already introduced in~\eqref{def:gammaeps}) is not either. Yet, this operator is in fact
$\cR$-periodic. Physically, this is due
to the fact that although the potential $V_{\rm el}(x):=\eps x \cdot e_{\beta}$
is not periodic, the field
${\mathcal E}=-\nabla V_{\rm el} = -\eps \vec \vec e_\beta$ to which
the electrons are subjected is constant, hence periodic. The proof of
this result relies on the standard gauge transform
\begin{align}
  \label{eq:gauge_change}
  \widetilde \psi(x,t) = \left((G_{\varepsilon t \vec{e}_{\beta}}\psi(\cdot,t)\right)(x) = {\rm e}^{i \varepsilon t x_{\beta}} \psi(x,t),
\end{align}
where the operator $G_q$ has been defined in~(\ref{eq:defGq}), and the introduction of the gauge-transformed operators
\begin{align}
  \label{gauge_u}
\widetilde {\mathcal U}_\beta^\eps(t,t') := G_{\varepsilon t \vec e_{\beta}} {\rm e}^{-i(t-t')H^{\eps}_\beta} G_{\varepsilon t' \vec e_{\beta}}^\ast, 
\end{align}
and
\begin{equation}\label{gauge}
\widetilde \gamma_\beta^\eps(t) := G_{\varepsilon t \vec e_{\beta}}  \gamma_\beta^\eps(t)  G_{\varepsilon t \vec e_{\beta}}^\ast = \widetilde {\mathcal U}_\beta^\eps(t)  \gamma(0) \widetilde {\mathcal U}_\beta^\eps(t)^\ast,
\end{equation}
where $\widetilde {\mathcal U}_\beta^\eps(t)$ is a short-hand notation for
$$
\widetilde {\mathcal U}_\beta^\eps(t) := \widetilde {\mathcal U}_\beta^\eps(t,0)  = G_{\varepsilon t \vec e_{\beta}} {\rm e}^{-itH^{\eps}_\beta}.
$$ 
Through the change of gauge \eqref{eq:gauge_change}, the dynamics induced by the time-independent but non-periodic Hamiltonian $H_{\beta}^{\varepsilon}$ is equivalent to the dynamics induced by the time-dependent periodic Hamiltonian
\begin{align}
  \label{widetilde_H}
  \widetilde{H}^\eps_\beta(t)= G_{\eps t \vec e_\beta} H_{\beta}^{\varepsilon} G_{\eps t \vec e_\beta}^\ast=\frac 12 (-i\nabla + \mathcal{A} - \eps\vec{e}_\beta t)^2 +V.
\end{align}
This change of gauge is standard in both the mathematical and physical
literature, as it turns the spatially inhomogeneous electric potential
$V_{\rm el} = - \eps x_{\beta}$ into a homogeneous (but
time-dependent) magnetic potential
${\mathcal A}_{\rm el} = - \eps \vec{e}_{\beta} t$, more convenient to
deal with here because it does not break periodicity. Physically, this
is a manifestation of the gauge invariance of the Schr\"odinger
equation, where an electric field
${\mathcal E} = -\nabla V_{\rm el} - \frac{\partial \mathcal A_{\rm el}}{\partial t}$
can be realized either through a scalar or vector potential. The Bloch
fibers of $\widetilde{H}^\eps_\beta(t)$ are
\begin{align}
  \label{widetilde_H_k}
\widetilde{H}^\eps_{\beta,k}(t) = \frac 1 2 (-i\nabla + k+ \mathcal A - \varepsilon \vec e_{\beta} t)^{2} + V = H_{k-\varepsilon \vec e_{\beta} t}.
\end{align}
We sum up these arguments in the proposition below, together with elements that we shall use for defining the current. The reader can refer to the articles~\cite{bouclet2005linear,leinfelder1981schrodinger} where part of the results of that Proposition are proved. 

 \begin{proposition}\label{prop:setting} 
 Let ${\mathcal A} \in L^4_{\rm per}(\R^d;\R^d)$ such that $\nabla \cdot {\mathcal A}=0$, and $V \in L^2_{\rm per}(\R^d;\R)$.
  \begin{enumerate}
  \item  For all $\eps \in \R$, the operator 
  $H^{\varepsilon}_{\beta} $ defined in~(\ref{def:Hbetaeps})
   is essentially self-adjoint on $C^{\infty}_{\rm c}(\R^{d};\C)$, and therefore admits
    a unitary propagator $({\rm e}^{-it H_{\varepsilon}^{\beta}})_{t \in \R}$ in $L^{2}(\R^{d};\C)$. 
  \item For all $t\in \R$, and $\eps \in \R$, the operator
$\widetilde{H}^\eps_\beta(t)$ defined in~(\ref{widetilde_H}) is
self-adjoint on $L^2(\R^d)$ with domain $H^2(\R^d;\C)$, and
$\cR$-periodic. The strongly continuous unitary propagator
$(\widetilde{\mathcal{U}}^\eps_\beta(t,t'))_{(t,t') \in \R \times \R}$
on $L^{2}(\R^{d};\C)$ defined in \eqref{gauge_u} is
 $\cR$-periodic for all $t,t'\in \R$, with fibers $\widetilde{\mathcal{U}}^\eps_{\beta,k}(t,t')$ solving
\begin{equation} \label{eq:propagator2}
i\partial_t \widetilde{\mathcal{U}}^\eps_{\beta,k}(t,t') =\widetilde{H}^\eps_{\beta,k}(t)\widetilde{\mathcal{U}}^\eps_{\beta,k}(t,t'), \quad \widetilde{\mathcal{U}}^\eps_{\beta,k}(t',t') = \mathrm{Id}_{L^2_{\rm per}}. 
\end{equation}
\item For all $t \ge 0$, and $\eps \in \R$,
  $J_\alpha \gamma_\beta^\eps(t) \in \Schatten_{1,{\rm per}}$. The
  current
  $j_{\alpha,\beta}^{\eps}(t) = \underline {\rm Tr}(J_\alpha
  \gamma_\beta^\eps(t))$ is well-defined and
    \begin{align} \label{eq:current-k}
j_{\alpha,\beta}^{\eps}(t) &= -  (2\pi)^{-d}\int_{\BZ} \Tr(\partial_{\alpha} \widetilde H_{\beta,k}^{\varepsilon}(t) \widetilde \gamma^{\varepsilon}_{\beta,k}(t))\, dk\\
&= - (2\pi)^{-d} \int_{\BZ} \Tr(\partial_{\alpha} H_{k-\eps \vec{e}_\beta t}  \widetilde{\mathcal U}_{\beta,k}^{\varepsilon}(t)\gamma_k(0) \widetilde{\mathcal U}_{\beta,k}^{\varepsilon}(t)^{*}) \, dk
  \end{align}

   \end{enumerate}
 \end{proposition}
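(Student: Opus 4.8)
The plan is to establish the three items in turn, importing the self-adjointness statements from the literature and concentrating the real work on the trace-class property of item~3; throughout, the gauge-transformed picture is used systematically, which is precisely what makes the estimates tractable.

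\textbf{Items 1 and 2.} For $\mathcal A\in L^4_{\rm per}$ with $\nabla\cdot\mathcal A=0$ and $V\in L^2_{\rm per}$ in dimension $d\le 3$, the operator $H$ is essentially self-adjoint on $C^\infty_{\rm c}(\R^d;\C)$ with self-adjointness domain $H^2(\R^d;\C)$, by Leinfelder--Simader-type results \cite{leinfelder1981schrodinger} (by Sobolev embeddings the magnetic and electric terms are operator-bounded relative to $-\Delta$ with relative bound $0$). Adding the linear potential $\eps x_\beta$ preserves essential self-adjointness on $C^\infty_{\rm c}$ via a Faris--Lavine-type commutator theorem with comparison operator $N=-\tfrac12\Delta+x_\beta^2+{\rm cst}$, for which $H^\eps_\beta$ is $N$-bounded and $\pm i[H^\eps_\beta,N]\le{\rm cst}\,(N+1)$ as quadratic forms on $C^\infty_{\rm c}$ (see \cite{bouclet2005linear}). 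The operator $\widetilde{H}^\eps_\beta(t)$ has vector potential $\mathcal A-\eps e_\beta t\in L^4_{\rm per}$, still divergence-free, hence is again self-adjoint on $H^2(\R^d;\C)$ and $\cR$-periodic. For the propagator I would bypass the general theory of time-dependent Hamiltonians and instead \emph{define} $\widetilde{\mathcal U}^\eps_\beta(t,t')$ by~\eqref{gauge_u} and verify~\eqref{eq:propagator2} by differentiation: since $i\partial_t G_{\eps t e_\beta}=-\eps x_\beta\,G_{\eps t e_\beta}$ and $G_{\eps t e_\beta}H^\eps_\beta G_{\eps t e_\beta}^*=\widetilde{H}^\eps_\beta(t)+\eps x_\beta$, the two $\eps x_\beta$ contributions cancel when one differentiates the product $G_{\eps t e_\beta}\,{\rm e}^{-i(t-t')H^\eps_\beta}\,G_{\eps t' e_\beta}^*$, giving $i\partial_t\widetilde{\mathcal U}^\eps_\beta(t,t')=\widetilde{H}^\eps_\beta(t)\widetilde{\mathcal U}^\eps_\beta(t,t')$; strong continuity is immediate from that of $q\mapsto G_q$. (Kato's theorem with the constant domain $H^2$ would work equally well, $t\mapsto\widetilde{H}^\eps_\beta(t)(\widetilde{H}^\eps_\beta(0)+c)^{-1}$ being polynomial in $t$.)

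\textbf{$\cR$-periodicity and well-definedness of the current.} Conjugation by $\tau_R$ multiplies $G_q$ by ${\rm e}^{-iq\cdot R}$ and, since $\tau_R x_\beta\tau_R^*=x_\beta-R\cdot e_\beta$, multiplies ${\rm e}^{-i(t-t')H^\eps_\beta}$ by ${\rm e}^{i\eps(t-t')\,R\cdot e_\beta}$; the three phases picked up by the factors of $\widetilde{\mathcal U}^\eps_\beta(t,t')=G_{\eps t e_\beta}\,{\rm e}^{-i(t-t')H^\eps_\beta}\,G_{\eps t' e_\beta}^*$ cancel, so $\widetilde{\mathcal U}^\eps_\beta(t,t')$ is $\cR$-periodic, hence so are $\widetilde{\gamma}^\eps_\beta(t)=\widetilde{\mathcal U}^\eps_\beta(t)\gamma(0)\widetilde{\mathcal U}^\eps_\beta(t)^*$ and, by~\eqref{eq:fibers_eiqx}, $\gamma^\eps_\beta(t)=G_{\eps t e_\beta}^*\widetilde{\gamma}^\eps_\beta(t)G_{\eps t e_\beta}$. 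Bloch-decomposing $i\partial_t\widetilde{\mathcal U}^\eps_\beta(t,t')=\widetilde{H}^\eps_\beta(t)\widetilde{\mathcal U}^\eps_\beta(t,t')$ fibrewise yields~\eqref{eq:propagator2}; since $\widetilde{H}^\eps_{\beta,k}(t)=H_{k-\eps e_\beta t}$ has the $t$-independent domain $H^2_{\rm per}$, the fibre propagator $\widetilde{\mathcal U}^\eps_{\beta,k}(t,t')$ maps $H^2_{\rm per}$ into itself, so $\widetilde{\gamma}^\eps_\beta(t)$, and therefore $\gamma^\eps_\beta(t)$ as well (multiplication by $G_q$ preserves every $H^s$ at fixed $t$), has range in $H^2(\R^d;\C)$; in particular $J_\alpha\gamma^\eps_\beta(t)$ is a well-defined operator.

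\textbf{Item 3.} Since $G_{\eps t e_\beta}J_\alpha G_{\eps t e_\beta}^*=-\partial_\alpha\widetilde{H}^\eps_\beta(t)$ and $\Trper$ is invariant under conjugation by $G_q$ --- this conjugation shifts every Bloch fibre by $q$ (see~\eqref{eq:fibers_eiqx}) while $k\mapsto\Tr(A_k)$ is $\cR^*$-periodic --- it suffices to show $\partial_\alpha\widetilde{H}^\eps_\beta(t)\,\widetilde{\gamma}^\eps_\beta(t)\in\Schatten_{1,\per}$, and then~\eqref{eq:current-k} follows from the fibre-trace formula for $\Trper$ together with $(\partial_\alpha\widetilde{H}^\eps_\beta(t))_k=\partial_\alpha H_{k-\eps e_\beta t}$ and $\widetilde{\gamma}^\eps_{\beta,k}(t)=\widetilde{\mathcal U}^\eps_{\beta,k}(t)\gamma_k(0)\widetilde{\mathcal U}^\eps_{\beta,k}(t)^*$. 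On the $k$-fibre the operator to control is $\partial_\alpha H_{k'}\,\widetilde{\mathcal U}^\eps_{\beta,k}(t)\gamma_k(0)\widetilde{\mathcal U}^\eps_{\beta,k}(t)^*$ with $k'=k-\eps e_\beta t$, where $\gamma_k(0)=P_{N_k,k}$ has rank $N_k$ bounded uniformly in $k$ (by~\eqref{def:Nk} and the remark after it). Writing $\partial_\alpha H_{k'}=\bigl(\partial_\alpha H_{k'}(H_{k'}+c)^{-1/2}\bigr)(H_{k'}+c)^{1/2}$ for $c$ large, the first bracket is bounded, with a bound $\cR^*$-periodic in $k'$ hence uniform, because $\|(-i\nabla+k'+\mathcal A)\cdot e_\alpha\,\psi\|\le|e_\alpha|\,\langle\psi,2(H_{k'}-V)\psi\rangle^{1/2}\lesssim\|(H_{k'}+c)^{1/2}\psi\|$ ($V$ being form-bounded by $H_{k'}$ with constants independent of $k'$). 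For the remaining factor, the rank bound gives $\|\partial_\alpha H_{k'}\widetilde{\gamma}^\eps_{\beta,k}(t)\|_{\Schatten_1}\lesssim N_k\max_{1\le n\le N_k}\|(H_{k'}+c)^{1/2}\psi_n(t)\|$ with $\psi_n(s):=\widetilde{\mathcal U}^\eps_{\beta,k}(s)u_{n,k}$. The instantaneous energy $E_n(s):=\langle\psi_n(s),(\widetilde{H}^\eps_{\beta,k}(s)+c)\psi_n(s)\rangle$ satisfies, the Schr\"odinger part cancelling, $\tfrac{d}{ds}E_n(s)=\langle\psi_n(s),(\partial_s\widetilde{H}^\eps_{\beta,k}(s))\psi_n(s)\rangle=-\eps\,\langle\psi_n(s),\partial_\beta\widetilde{H}^\eps_{\beta,k}(s)\psi_n(s)\rangle$, whence $|\tfrac{d}{ds}E_n(s)|\lesssim|\eps|\,E_n(s)^{1/2}$ by the same form bound, and Gronwall yields $E_n(t)\le\bigl(\sqrt{\lambda_{n,k}+c}+C|\eps|t\bigr)^2\le\bigl(\sqrt{\mu_{\rm F}+c}+C|\eps|t\bigr)^2$, uniformly in $n\le N_k$ and in $k$ (using $E_n(0)=\lambda_{n,k}+c$ and $\lambda_{n,k}\le\mu_{\rm F}$). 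Hence $k\mapsto\partial_\alpha H_{k-\eps e_\beta t}\,\widetilde{\gamma}^\eps_{\beta,k}(t)$ is trace-class with $\Schatten_1$-norm bounded uniformly in $k$; its measurability and quasi-periodicity being inherited from those of $H_{k'}$, $\widetilde{\mathcal U}^\eps_{\beta,k}$ and $\gamma_k(0)$, it is the Bloch decomposition of an element of $\Schatten_{1,\per}$, which completes the proof.

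\textbf{Main obstacle.} The single substantive step is the uniform-in-$k$ trace-class bound: $J_\alpha$ is unbounded while $\widetilde{\gamma}^\eps_\beta(t)$ is accessible only through the time-dependent propagator $\widetilde{\mathcal U}^\eps_\beta(t)$, so one cannot simply commute $J_\alpha$ past it, and the Gronwall estimate on the instantaneous energy is what quantifies how $\widetilde{\mathcal U}^\eps_\beta(t)$ transports the $H^{1/2}$-regularity of $\Ran\,\gamma(0)$, at a cost growing only linearly in $|\eps|t$. The remaining ingredients --- self-adjointness, $\cR$-periodicity, and the $G_q$-invariance of $\Trper$ --- are standard or available in \cite{leinfelder1981schrodinger,bouclet2005linear}; the only other point requiring slight care is that $\gamma^\eps_\beta(t)$ has range in $H^2(\R^d;\C)$, which is transparent in the gauge-transformed representation but not obvious from the original expression $\mathrm e^{-itH^\eps_\beta}\gamma(0)\,\mathrm e^{itH^\eps_\beta}$.
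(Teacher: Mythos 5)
Your proof is correct and follows the same overall strategy as the paper's: self-adjointness via Leinfelder--Simader and a Faris--Lavine argument, $\cR$-periodicity of the gauge-transformed propagator, Bloch decomposition, and then the trace-class property from finite rank plus propagated regularity. Two points of comparison are worth noting. First, for the propagator equation, you define $\widetilde{\mathcal U}^\eps_\beta(t,t')$ by~\eqref{gauge_u} and differentiate the product directly; the formal cancellation of the two $\eps x_\beta$ terms is exactly right, but on the full space $L^2(\R^d)$ this requires some care with the domain of $x_\beta$ under ${\rm e}^{-i(t-t')H^\eps_\beta}$ (the propagator of a Stark-type Hamiltonian does not obviously preserve $D(x_\beta)\cap H^2$), which is why the paper instead constructs the fibered propagators abstractly on $L^2_{\rm per}$ via a Kato-type lemma (Lemma~\ref{lemmaUnitaryPropagator}), where the fibers $H_{k-\eps e_\beta t}$ share the fixed domain $H^2_{\rm per}$ and the issue disappears. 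You do flag the Kato alternative, so this is a presentational rather than substantive difference. Second, for item~3 the paper contents itself with the observation that $\gamma^\eps_{\beta,k}(t)$ is finite rank with range in $H^2_{\rm per}$ (via Lemma~\ref{lemmaUnitaryPropagator}) and that $\partial_\alpha H_k:H^2_{\rm per}\to L^2_{\rm per}$ is bounded, asserting the $k$-uniformity; your Gr\"onwall bound on the instantaneous energy $E_n(s)=\langle\psi_n(s),(H_{k-\eps e_\beta s}+c)\psi_n(s)\rangle$ makes that uniformity explicit and quantitative. It is in fact the same mechanism the paper uses later in Proposition~\ref{prop:adiab2} (estimate~\eqref{eq:boundHUH-1}), except that you control $\|(H+c)^{1/2}\psi_n\|$ rather than $\|(H+c)\psi_n\|$: because $\dot H(s)$ is a first-order operator while $H(s)$ is second order, the $H^{1/2}$-level Gr\"onwall closes with a polynomial bound $(1+|\eps|t)$ in place of the exponential ${\rm e}^{\eta\eps t}$. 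This sharper bound is not needed for Proposition~\ref{prop:setting} itself, but it is a clean observation, and it is the right level of regularity since $J_\alpha$ is only first order.
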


The results of Proposition~\ref{prop:setting} are not new (some are classical) but are nevertheless proved in Section~\ref{sec:preliminaries} for the sake of completeness. The situation can be summed up in the commutative diagrams of Figure~\ref{fig:diagram}.

\begin{figure}[h!]
  \centering
  {
  \begin{tikzcd}[matrix scale=3]
    \gamma(0) \arrow{r}{{\rm e}^{-it H_\beta^\varepsilon}} \arrow{d}{Id} & \gamma^\varepsilon_\beta(t) \arrow{d}{{G_{\varepsilon t e_\beta}}} \\
    \widetilde{\gamma}(0)\arrow{r}{\widetilde{\mathcal U}_\beta^\varepsilon(t)}&\widetilde{\gamma}^\varepsilon_\beta(t)
  \end{tikzcd}
  \hspace{1.4cm}
  \begin{tikzcd}[matrix scale=3]
    \{\gamma_k(0)\}_{k\in \R^d } = \{\gamma^\varepsilon_{\beta,k}(0)\}_{k\in \R^d} \arrow{r}{} \arrow{d}{\{Id\}_{k\in \R^d}} & 
    \{\gamma^\varepsilon_{\beta,k}(t)\}_{k\in \R^d} \arrow{d}{{\rm translation} \;{\rm in} \; k \; {\rm space}\; {\rm by }  -\eps t \vec{e}_\beta } \\
    \{\widetilde \gamma^\varepsilon_{\beta,k}(0)\}_{k\in \R^d}= \{\gamma^\varepsilon_{\beta,k}(0)\}_{k\in \R^d} \arrow{r}{\{\widetilde{\mathcal U}_{\beta,k}^\varepsilon(t)\}_{k\in \R^d}}&   
    \{\widetilde{\gamma}^\eps_{\beta,k}(t) \}_{k\in \R^d} = \{\gamma^\varepsilon_{\beta,k-\varepsilon{ t e_\beta}}(t)\}_{k\in \R^d} 
  \end{tikzcd}
}

  \caption{\small Commutative diagrams of the relationships between density
    matrices $\gamma^{\varepsilon}_{\beta}$ and $\widetilde{\gamma}^{\varepsilon}_{\beta}$ (top) and the fibers
    $\gamma^{\varepsilon}_{\beta,k}$ of $\gamma^{\varepsilon}_{\beta}$ which decompose both  $\gamma^{\varepsilon}_{\beta}$ and $\widetilde
    \gamma^{\varepsilon}_{\beta}$
    (bottom). In the top diagram, $A \overset{U}{\to} B$ means that $B = U A U^{*}$. In the bottom diagram $\{A_k\}_{k\in \R^d} \overset{\{U_k\}_{k\in \R^d}}{\to} \{B_k\}_{k\in \R^d}$ means that $A$ and $B$ are $\cR$-periodic and that their fibers are related by $B_k= U_k A_k U_k^{*}$.}
  \label{fig:diagram}
\end{figure}

\medskip 

\begin{remark}
  \label{rem:only_depends_on_Hk}
  This proposition reduces the study of
  $j_{\alpha,\beta}^{\varepsilon}(t)$ to that of the dynamics of the
  time-dependent Hamiltonian $H_{k-\varepsilon e_{\beta} t}$. In
  particular, although we have focused on the specific Hamiltonian $H$
  given by~\eqref{eq:periodic_Hamiltonian}, all computations beyond
  the proof of this proposition will be based on the use of the three
  formulae: for all $k \in \R^{d}, t \in \R_+$,
\newcommand*\widefbox[1]{\fbox{\hspace{2em}#1\hspace{2em}}}
  \begin{empheq}[box=\widefbox]{align}
      &  \gamma_k(0) = \mathds 1(H_k \leq \mu_{\rm F}), \label{eq:current1} \\
      & i\partial_t \widetilde{\mathcal{U}}^\eps_{\beta,k}(t) =H_{k-\varepsilon \vec e_{\beta} t}\widetilde{\mathcal{U}}^\eps_{\beta,k}(t), \quad \widetilde{\mathcal{U}}^\eps_{\beta,k}(0) =\mathrm{Id}_{{\mathcal H}_{\rm f}}, \label{eq:current2} \\
      & j_{\alpha,\beta}^{\eps}(t) = - (2\pi)^{-d} \int_{\BZ} \Tr(\partial_{\alpha} H_{k-\eps \vec{e}_\beta t} \widetilde{\mathcal U}_{\beta,k}^{\varepsilon}(t)\gamma_k(0) \widetilde{\mathcal U}_{\beta,k}^{\varepsilon}(t)^{*}) \, dk, \label{eq:current3}
  \end{empheq}
 where the fiber ${\cal H}_{\rm f}$ is equal to $L^2_{\rm per}$ in our setting. 
Our results in the following sections can therefore be extended to
other Hamiltonians where $(H_{k})_{k \in \R^d}$ is a family of
bounded below self-adjoint operators on a Hilbert space $\cH_{\rm f}$ with
compact resolvent satisfying the quasi-periodicity conditions
  $$
 H_{k+K} = T_K H_k T_K^\ast, \quad \forall K \in \cR^\ast, \;  k \in \R^d,
  $$
  where $(T_K)_{K \in {\cal R}^\ast}$ is a unitary representation of
  the group ${\cal R}^\ast$ on~$\cH_{\rm f}$
  (see \eqref {eq:qpBloch}), and
  the boundedness conditions in Section \ref{sec:adiabatic}. This
  includes in particular spin-dependent continuous models,
  tight-binding lattice models (for which $\cH_{\rm f}=\C^M$), and 2D
  materials. It also contains the case of systems with a constant
  magnetic field where the flux per unit cell satisfies an adequate
  commensurability condition (see \cite{Freund_Teufel} or \cite{MoPaPiTe}, where ideas from~\cite{Zak} are implemented).
\end{remark}

\subsection{Insulators, non-degenerate metals, semimetals}
\label{sec:insmetsem}
As we said before, the position of the Fermi level in the band diagram $(\lambda_{n,k})_{n \in \N^\ast,\, k \in \cB}$ is key to determining the electronic properties of the medium. We define the Fermi surface sheets
\begin{align*}
  \mathcal S_{n}= \{k\in{\mathcal B} \; | \; \lambda_{n,k} =\mu_{\rm F}\}, \;\;n\in\N^\ast
\end{align*}
and the Fermi surface
\begin{equation}\label{eq:Fermi_surface}
  \mathcal S= \bigcup_{n \in \N^\ast} \mathcal S_{n} = \{k\in{\mathcal B} \; | \; \exists n\in\N^\ast \mbox{ s.t. } \lambda_{n,k}
   =\mu_{\rm F}\}.
   \end{equation}
We will be interested here in three types of systems that we now describe in
three mutually exclusive assumptions.

\begin{assumption}[insulator]
  \label{ass:insulators}
  The Fermi surface $\mathcal S$ is empty, and there exists $N_{\rm ins} \in
  \N^\ast$ such that $N_{k} = N_{\rm ins}$ for all $k \in \BZ$, i.e.
    $$
    \forall k \in \cB, \quad \lambda_{N_{\rm ins},k} < \mu_{\rm F} < \lambda_{N_{\rm ins}+1,k},
    $$
    or equivalently $\mu_{\rm F} \notin \sigma(H)$. 
\end{assumption}
In the case of
insulators, we have for all $k \in \R^{d}$
\[
  \gamma_{k}(0) = P_{N_{\rm ins},k},
\]
and $\gamma_k(0)$ is a real-analytic
${\mathcal R}^\ast$-quasi-periodic function.

\begin{assumption}[non-degenerate metal]
  \label{ass:metal}
The Fermi surface~$\mathcal S$ is
  non-empty and the following conditions are satisfied: for all $n \in \N^\ast$,
     \begin{itemize}
     \item $\mathcal S_{n} \cap \mathcal S_{n+1} = \emptyset$ (no crossing at the Fermi level);
     \item for all $k \in \mathcal S_{n}$, $\nabla \lambda_{n,k} \neq 0$ (no
       flat bands at the Fermi level).
     \end{itemize}
   \end{assumption}
   Note that this assumption was used in~\cite{cances2018numerical}.
   It ensures a smooth density of states at the Fermi level.
   In this case, the Fermi surface consists of a finite union of
   disjoint smooth closed surfaces~$\mathcal S_n$. Letting
   \begin{align*}
     \mathcal B_{n} = \{k \in \BZ \; | \;  \lambda_{n,k} < \mu_{\rm F} < \lambda_{n+1,k}\},
   \end{align*}
   we obtain a partitioning
   \begin{align*}
     \mathcal B = \mathcal S \bigcup \left(\bigcup_{n \in \N^\ast}
   \BZ_{n}\right).
   \end{align*}
   Both $N_{k}$ and the fibers $\gamma_{k}(0) = P_{N_{k},k}$ of the
   density matrix $\gamma(0)$ are smooth on each $\BZ_{n}$, and have
   {discontinuities} on the sheets $\mathcal S_{n}$.
   
   \medskip
   
   \begin{assumption}[semimetal]
  \label{ass:semimetal}
  The dimension $d$ is equal to 2, there is $N_{\rm sm}$ such that
  $\lambda_{N_{\rm sm}, k} \le \mu_{F}$ for all $k \in \BZ$, and the
  Fermi surface ${\mathcal S}$ consists of a finite number of isolated
  points $(k_{i})_{i \in \mathcal I}$ (``Dirac points''). All these points are conical
  crossings: for all $i \in \mathcal I$,
  \begin{eqnarray}
    \label{hyp:semimetals_1}
    &\lambda_{N_{\rm sm}-1,k_i} < \lambda_{N_{\rm sm},k_i}=\mu_{\rm F} =\lambda_{N_{\rm sm}+1,k_i} < \lambda_{N_{\rm sm}+2,k_i},&\\
    \label{hyp:semimetals_2}
   & \lambda_{N_{\rm sm},k} = \mu_{\rm F} - v_{F,i} |k - k_{i}| + O(|k-k_{i}|^{2}),&\\
    \label{hyp:semimetals_3}
  &  \lambda_{N_{\rm sm}+1,k} = \mu_{\rm F} + v_{F,i} |k - k_{i}| + O(|k-k_{i}|^{2}), &
  \end{eqnarray}
  for some $v_{F,i} \in \R$. Furthermore, in this case we assume that
  $\mathcal A = 0$, so that the system has the time-reversal symmetry
  $H_{-k} = \overline{H_{k}}$.
\end{assumption}

Note that we assumed in Assumption~\ref{ass:semimetal} that
$\mathcal A = 0$ to ensure time-reversal symmetry. We require more
regularity on $V$ than in the previous assumptions to be able to prove a
Dyson expansion for the propagator (see Proposition~\ref{prop:Dyson}).
For the sake of clarity, we consider a model of 2D semimetals set in
$\R^2$, but our arguments can be adapted to the more physical case of
a model set in $\R^3$ (see also Remark~\ref{rem:only_depends_on_Hk}).

Assumption \ref{ass:semimetal} is generic in the case of potentials possessing the
symmetry of honeycomb lattices, such as graphene
\cite{fefferman2012honeycomb}.  In this case, there are two non-equivalent
Dirac points in the Brillouin zone ($|\mathcal I| = 2$), usually denoted by $K$ and $K'$, and we have $K'=-K$ and $v_{F,1}=v_{F,2}$. The constant $v_F=v_{F,1}=v_{F,2}$ is known as the
Fermi velocity.
More generally, Dirac points generate specific dynamical behaviors that have been studied in~\cite{FerGe2002,FerMehats2016} in the context of the Dirac operator. Such phenomena also appear in molecular dynamics (see~\cite{Hag94,FermanianKammerer2017,FermanianLasser2008}).

     \medskip 

     In the semimetal case, $N_{k} = N_{\rm sm}$ for almost every $k
     \in \R^{2}$, and $\gamma_{k}(0)$ is singular at each $k_{i} \in
     \mathcal S$.
     
\subsection{Main results: the current}
In the following results, we use the notation $O(f(\varepsilon,t))$ to
denote a quantity bounded by $C f(\varepsilon,t)$ where $C$ is a
constant that might depend on the material through $V$,
$\mathcal A$ and $\mu_{F}$, but not on~$t$ and~$\varepsilon$.

\begin{theorem}[insulators]
  \label{thm:insulator}
  Assume the system is an insulator (Assumption \ref{ass:insulators}).
  Then there exists $\eta>0$ such that for all $\varepsilon, t \in \R_+$,
 \begin{align*}
\frac 1{t}\int_{0}^{t} \frac{j_{\alpha,\beta}^{\varepsilon}(t')}{\varepsilon} dt' = {-i(2\pi)^{-d}\int_{\BZ}} \Tr \left( \gamma_{k}(0) [\partial_{\alpha} \gamma_{k}(0),\partial_{\beta}\gamma_{k}(0)]\right) dk +{O\left(\left(\frac 1 t + \eps (1+t) \right) {\rm e}^{\eta \eps t}\right).}
  \end{align*}
\end{theorem}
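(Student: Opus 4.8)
The plan is to work fiber by fiber, using formula~\eqref{eq:current3} together with the two complementary perturbative tools the paper has set up: the adiabatic theorem (Proposition~\ref{prop:adiab2}) to control the dynamics on the time scale $t \sim \varepsilon^{-1}$, and linear response / Dyson expansion (Proposition~\ref{prop:Dyson}) to identify the leading $\varepsilon$-coefficient. Since the system is an insulator, Assumption~\ref{ass:insulators} guarantees a uniform gap $g>0$ between $\lambda_{N_{\rm ins},k}$ and $\lambda_{N_{\rm ins}+1,k}$ over the whole Brillouin zone, and $\gamma_k(0)=P_{N_{\rm ins},k}$ is a real-analytic quasi-periodic family of finite-rank projections. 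This uniform gap is what makes the adiabatic approximation quantitatively effective with $k$-independent constants, which is essential since everything is integrated over $\cB$.

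First I would invoke the adiabatic theorem in the following form: the propagator $\widetilde{\mathcal U}^\varepsilon_{\beta,k}(t)$ solving \eqref{eq:current2} with the slowly-varying Hamiltonian $H_{k-\varepsilon e_\beta t}$ approximately intertwines the spectral projections, i.e. $\widetilde{\mathcal U}^\varepsilon_{\beta,k}(t)\,\gamma_k(0)\,\widetilde{\mathcal U}^\varepsilon_{\beta,k}(t)^* = P_{N_{\rm ins},k-\varepsilon e_\beta t} + \varepsilon R_k(t) + \text{(smaller)}$, where the first-order correction $R_k(t)$ is given by the standard adiabatic expansion in terms of the reduced resolvent of $H_{k-\varepsilon e_\beta t}$ applied to $\partial_t$ of the Hamiltonian — concretely the commutator-type term built from $(\partial_\beta H)$ and the off-diagonal part of the resolvent. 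Plugging this into \eqref{eq:current3} and expanding $\Tr(\partial_\alpha H_{k-\varepsilon e_\beta t}\,[\cdots])$: the zeroth-order term $\Tr(\partial_\alpha H_{k'}\,P_{N_{\rm ins},k'})$ with $k'=k-\varepsilon e_\beta t$ integrates to zero over $\cB$ by the Bloch theorem (Proposition~\ref{prop:null_current}), after the change of variables $k\mapsto k-\varepsilon e_\beta t$ which is legitimate by $\cR^*$-periodicity of the integrand. The surviving first-order term is $-i(2\pi)^{-d}\varepsilon \int_\cB \Tr(\gamma_{k'}(0)[\partial_\alpha\gamma_{k'}(0),\partial_\beta\gamma_{k'}(0)])\,dk'$; here one uses the standard identity rewriting the resolvent-based adiabatic correction purely in terms of projector derivatives (using $P(\partial P)P=0$ and $\partial_\beta P = [\partial_\beta P, P]$-type manipulations), again shifting the integration variable back to $k$. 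Dividing by $\varepsilon$ and time-averaging over $[0,t]$ leaves the integrand $k$-independent of $t'$ after the shift, so the time average of the main term is exactly the claimed Berry-curvature integral.

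The remaining work is the error analysis. The adiabatic theorem as usually stated gives an error $O(\varepsilon)$ uniformly on compact time intervals but with a constant that can grow in $t$; here I expect the refined statement (Proposition~\ref{prop:adiab2}) to yield an error of the form $O\big((\tfrac1t + \varepsilon(1+t))\,e^{\eta\varepsilon t}\big)$ after time-averaging. The $\tfrac1t$ comes from the boundary terms in the time-average of total-derivative (oscillatory) contributions: the adiabatic correction contains a rapidly oscillating piece whose time integral is $O(1)$, hence $O(1/t)$ after dividing by $t$. The $\varepsilon(1+t)$ is the genuine second-order-in-$\varepsilon$ adiabatic error accumulated over time $t$. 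The exponential factor $e^{\eta\varepsilon t}$ — harmless since for the conductivity we only care about the regime where this stays bounded, but needed for a clean uniform statement — arises from iterating the Duhamel/Gronwall estimate for the difference between the true propagator and the adiabatic one, where each step loses a factor controlled by $\varepsilon$ times the length of the interval, resumming to an exponential; here $\eta$ depends on the gap $g$ and on bounds for $\partial_\alpha H_k$, $\partial_\beta H_k$ in the relevant operator norms.

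\textbf{Main obstacle.} The delicate point is making every estimate uniform in $k$ across the Brillouin zone while tracking the constants precisely enough to get the stated error shape — in particular, controlling the unbounded operators $\partial_\alpha H_k = -(-i\nabla+k+\mathcal A)\cdot e_\alpha$ relative to $H_k$ (they are $H_k$-bounded, and on $\Ran P_{N_{\rm ins},k}$ and via the reduced resolvent they act boundedly with gap-dependent bounds), and ensuring the change of variables $k\mapsto k-\varepsilon e_\beta t$ together with the $\cR^*$-quasi-periodicity is handled so that the Bloch-theorem cancellation is exact rather than approximate. Sorting the oscillatory boundary term so that it contributes precisely $O(1/t)$ after averaging, and no worse, is the other bookkeeping hurdle; this is where the real-analyticity of $k\mapsto\gamma_k(0)$ and the uniform gap are used to justify the non-stationary-phase-type bound $\int_0^t(\text{oscillatory})\,dt' = O(1)$.
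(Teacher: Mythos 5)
Your proposal follows essentially the same route as the paper: decompose the current via the adiabatic theorem (Proposition~\ref{prop:adiab2}, packaged for this setting as Corollary~\ref{cor:adiab}), observe that the leading adiabatic piece is a total $k$-derivative whose integral over $\BZ$ vanishes, identify the static first-order correction as the Berry-curvature integral via the inverse-Liouvillian identity $L_k^+\bigl((\partial_\alpha H_k)^{\rm OD}\bigr)=[\gamma_k^0,\partial_\alpha\gamma_k^0]$, and treat the oscillatory third term as a total time derivative (Lemma~\ref{lem:oscillatory}), giving the $O(1/t)$ boundary contribution plus an $O(\eps)$ error after time-averaging; the $e^{\eta\eps t}$ factor is indeed the Gr\"onwall price for $H_k$ being unbounded. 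Two small remarks: you mention Proposition~\ref{prop:Dyson} (Dyson/linear response) as a tool to "identify the leading $\eps$-coefficient," but the paper does not use it here --- the adiabatic machinery alone delivers both the leading term and the error; and the bound on the oscillatory term is not a (non-)stationary-phase estimate in $k$ but an integration-by-parts-in-time argument fiber by fiber, using the inverse Liouvillian, which is what the uniform gap makes uniform in $k$. Neither point affects the soundness of your plan.
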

Note that this implies in particular that
$$
 \sigma_{\alpha,\beta} = \lim_{t\to \infty} \frac 1{t}\int_{0}^{t} \lim_{\varepsilon \to 0}\frac{j_{\alpha,\beta}^{\varepsilon}(t')}{\varepsilon} dt' 
 = e_\alpha^T \sigma^\perp e_\beta,
 $$
 where $\sigma^\perp$ is a real antisymmetric matrix with components
 \begin{equation}\label{eq:intBerry}
\sigma^\perp_{ij} := {(2\pi)^{-d}\int_{\BZ}} -i \Tr \left( \gamma_{k}(0) \left[\frac{\partial \gamma_{k}}{\partial k_i}(0),\frac{\partial \gamma_{k}}{\partial k_j}(0)\right]\right) \,  dk.
\end{equation}
The integrand in \eqref{eq:intBerry} is related to the well-known Berry curvature associated to the first
$N_{\rm ins}$ bands, that is to the 2-form
$$
\sum_{1 \le i < j \le d} \Omega_{ij}(k) \, dk_i \wedge dk_j \quad \mbox{where} \quad \Omega_{ij}:=-i \Tr \left( \gamma_{k}(0) \left[\frac{\partial \gamma_{k}}{\partial k_i}(0),\frac{\partial \gamma_{k}}{\partial k_j}(0)\right]\right)
$$
For $d=2$, we have 
$$
\sigma^\perp_{12}=(2\pi)^{-1} \mbox{Ch}_1(\gamma_\bullet(0)),
$$ 
where $\mbox{Ch}_1(\gamma_\bullet(0)) \in \Z$ is the first Chern of the fiber bundle defined by the quasi-periodic function $k \mapsto \gamma_k(0)$~\cite{thouless1982quantized,simon1983holonomy}. This relationship between the
transverse bulk transport properties and the Chern number,
characteristic of the integer quantum Hall effect, is known as the
TKNN formula. 

If $\mathcal A = 0$, then the system has the time-reversal symmetry $H_{-k}= \overline{H_k}$. 
As is classical, the Berry curvature is then odd, and the transverse conductivity matrix $\sigma^\perp$ equal to zero \cite{thouless1982quantized}.

\medskip 

\begin{theorem}[conductivity in non-degenerate metals]
  \label{thm:metal}
 Assume the system is a non-degenerate metal (Assumption~\ref{ass:metal}). 
 \begin{enumerate}
 \item Let $\theta > 0$. For all $\varepsilon > 0$ small enough and 
   $0 \le t \le \frac 1 \varepsilon \varepsilon^{\theta}$, we have
   \begin{align}
     \label{eq:result_metal}
     j_{\alpha,\beta}^{\varepsilon}(t) =D_{\alpha,\beta} \varepsilon t + O(\varepsilon + \varepsilon^{2} t^{2})
   \end{align}
   where
  \begin{align} \label{eq:Dab}
    D_{\alpha,\beta}:=  (2\pi)^{-d} \sum_{n\in\N^\ast} \int_{\mathcal S_n} \partial_{\alpha} \lambda_{n,k} \, (ds \cdot \vec e_{\beta})
  \end{align}
\item If furthermore there exists $N_{\rm met} \in \N^\ast$ such that
  $\lambda_{N_{\rm met}-1,k} < \mu_{\rm F} < \lambda_{N_{\rm
      met}+1,k}$ for all $k \in \BZ$ and there are uniform gaps between
  $\lambda_{N_{\rm met}-1,k}$ and $\lambda_{N_{\rm met},k}$ on the one
  hand, and $\lambda_{N_{\rm met},k}$ and $\lambda_{N_{\rm met}+1,k}$
  on the other hand, then there exists $\eta > 0$ such that, for all
  $\varepsilon, t \in \R_+$,
  \begin{align} \label{eq:jab_metal}
   j_{\alpha,\beta}^{\eps}(t) 
   =-(2\pi)^{-d}{\int_{\BZ}}  \mathds 1(\lambda_{N_{\rm met}, k}
\leq \mu_{\rm F}) \partial_{\alpha} \lambda_{N_{\rm met}, k+\varepsilon e_{\beta} t}dk + O((\eps+\varepsilon^{2} t)  {\rm e}^{\eta \eps t} ).
 \end{align}
  \end{enumerate}
\end{theorem}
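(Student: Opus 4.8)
The plan is to prove item~(1) by linear response (Proposition~\ref{prop:Dyson}) and item~(2) by the adiabatic theorem (Proposition~\ref{prop:adiab2}), in both cases starting from the fibered formula~\eqref{eq:current3} and from the two elementary identities, valid because $H_k$ is a quadratic polynomial in $k$ (so $\partial_\alpha\partial_\beta H_k = e_\alpha\cdot e_\beta$ is a scalar) and commutes with $\gamma_k(0)$:
\[
\partial_\alpha H_{k-\eps e_\beta t} = \partial_\alpha H_k - \eps t\,(e_\alpha\cdot e_\beta)\,\mathrm{Id}, \qquad [\,H_{k-\eps e_\beta t},\,\gamma_k(0)\,] = -\eps t\,[\,\partial_\beta H_k,\,\gamma_k(0)\,].
\]

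For~(1), I would set $r_k(t) := \widetilde{\mathcal U}^\eps_{\beta,k}(t)\,\gamma_k(0)\,\widetilde{\mathcal U}^\eps_{\beta,k}(t)^\ast-\gamma_k(0)$; differentiating in $t$ and using the second identity, Duhamel's formula gives the \emph{exact} representation
\[
r_k(t) = i\eps\int_0^t s\;\widetilde{\mathcal U}^\eps_{\beta,k}(t,s)\,[\partial_\beta H_k,\gamma_k(0)]\,\widetilde{\mathcal U}^\eps_{\beta,k}(t,s)^\ast\,ds,
\]
in which $[\partial_\beta H_k,\gamma_k(0)]$ is a bounded finite-rank operator, so the whole expression is genuinely of order $\eps$. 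Inserting this in~\eqref{eq:current3} and using the first identity, the scalar piece $-\eps t(e_\alpha\cdot e_\beta)$ contributes only through $\Tr\gamma_k(0)=N_k$ and $\Tr r_k(t)=0$, i.e. $\eps t(e_\alpha\cdot e_\beta)(2\pi)^{-d}\int_\BZ N_k\,dk$; in the remaining trace I replace $\widetilde{\mathcal U}^\eps_{\beta,k}(t,s)$ by the free evolution $e^{-i(t-s)H_k}$, with error $O(\eps^2 t^2)$ uniformly in $k$ by Proposition~\ref{prop:Dyson} (this is where $t\le\eps^{\theta-1}$ is used). What is left is the Kubo integral $-i\eps(2\pi)^{-d}\int_\BZ\int_0^t(t-\tau)\Tr\big(\partial_\alpha H_k\,[e^{-i\tau H_k}\partial_\beta H_k e^{i\tau H_k},\gamma_k(0)]\big)\,d\tau\,dk$; expanding in the eigenbasis of $H_k$, using $\int_0^t(t-\tau)e^{\pm i\omega\tau}\,d\tau=\pm it/\omega+(1-e^{\pm i\omega t})/\omega^2$, and using that the excitation gaps $\omega_{qp}=\lambda_{q,k}-\lambda_{p,k}$ with $p\le N_k<q$ are bounded below uniformly on $\BZ$ (a consequence of Assumption~\ref{ass:metal} and compactness of $\BZ$), together with Cauchy--Schwarz bounds on $\|\partial_\alpha H_k u_{n,k}\|$ for $n\le N_k$, splits the Kubo integral into a term linear in $t$ and a term bounded uniformly in $t$, the latter being $O(\eps)$.

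It remains to identify the coefficient of $\eps t$. Adding the two contributions, it equals $(2\pi)^{-d}\int_\BZ \partial_\alpha\partial_\beta\!\big(\Tr(H_kP_{N_k,k})\big)\,dk$, where $g(k):=\Tr(H_kP_{N_k,k})=\sum_{n\le N_k}\lambda_{n,k}$ is real-analytic on each cell $\BZ_n=\{\lambda_{n,k}<\mu_{\rm F}<\lambda_{n+1,k}\}$ (because of the uniform gap above band $n$ there) and globally Lipschitz. Applying the divergence theorem to the vector field $\partial_\alpha g\,e_\beta$ on each $\BZ_n$ and telescoping the boundary terms over adjacent cells --- each sheet $\mathcal S_n$ being the boundary of $\BZ_n$ with outward normal $\nabla\lambda_{n,k}/|\nabla\lambda_{n,k}|$ and of $\BZ_{n-1}$ with the opposite normal, the two boundary integrands differing precisely by $\partial_\alpha\lambda_{n,k}$ --- produces $\sum_n\int_{\mathcal S_n}\partial_\alpha\lambda_{n,k}\,(ds\cdot e_\beta)$, that is $(2\pi)^d D_{\alpha,\beta}$ as in~\eqref{eq:Dab}. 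This sum-rule/divergence-theorem identity, together with keeping every remainder uniform in $k$, is the main obstacle in~(1); there is no issue at the Fermi surface since only the piecewise-smooth structure of $g$ is used, and $j^0_{\alpha,\beta}=0$ by Proposition~\ref{prop:null_current} ensures the zeroth-order term vanishes.

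For~(2), the uniform gaps around band $N_{\rm met}$ make $P_{N_{\rm met}-1,k}$ and the rank-one $\Pi_{N_{\rm met},k}:=P_{N_{\rm met},k}-P_{N_{\rm met}-1,k}$ real-analytic and uniformly gapped in $k$, so Proposition~\ref{prop:adiab2} applies to $i\partial_t\widetilde{\mathcal U}^\eps_{\beta,k}(t)=H_{k-\eps e_\beta t}\widetilde{\mathcal U}^\eps_{\beta,k}(t)$ and yields, uniformly in $k$ and in a norm controlling the unbounded $\partial_\alpha H_k$,
\[
\widetilde{\mathcal U}^\eps_{\beta,k}(t)\,\gamma_k(0)\,\widetilde{\mathcal U}^\eps_{\beta,k}(t)^\ast = P_{N_{\rm met}-1,\,k-\eps e_\beta t} + \1(\lambda_{N_{\rm met},k}\le\mu_{\rm F})\,\Pi_{N_{\rm met},\,k-\eps e_\beta t} + O\!\big((\eps+\eps^2 t)\,e^{\eta\eps t}\big),
\]
the occupation $\1(\lambda_{N_{\rm met},k}\le\mu_{\rm F})$ of band $N_{\rm met}$ being conserved along the adiabatic flow. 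Plugging this into~\eqref{eq:current3} and using Hellmann--Feynman, $\Tr(\partial_\alpha H_{k'}P_{N_{\rm met}-1,k'})=\sum_{n\le N_{\rm met}-1}\partial_\alpha\lambda_{n,k'}$ and $\Tr(\partial_\alpha H_{k'}\Pi_{N_{\rm met},k'})=\partial_\alpha\lambda_{N_{\rm met},k'}$; the contribution of the first sum vanishes since $\sum_{n\le N_{\rm met}-1}\lambda_{n,\cdot}=\Tr(H_\cdot P_{N_{\rm met}-1,\cdot})$ is smooth and $\cR^\ast$-periodic and the integral of an $\cR^\ast$-periodic function over a fundamental domain is translation invariant, and the second contribution is exactly $-(2\pi)^{-d}\int_\BZ \1(\lambda_{N_{\rm met},k}\le\mu_{\rm F})\,\partial_\alpha\lambda_{N_{\rm met},\,k-\eps e_\beta t}\,dk$, giving~\eqref{eq:jab_metal} up to the stated error. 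The only genuine work in~(2) is checking that Proposition~\ref{prop:adiab2} delivers the remainder in the form $O((\eps+\eps^2 t)e^{\eta\eps t})$ uniformly in $k$; the rest is bookkeeping.
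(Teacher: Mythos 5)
Your proof of item~(2) follows essentially the same route as the paper (adiabatic theorem applied to the uniformly gapped bands $N_{\rm met}-1$ and $N_{\rm met}$, decomposition of $\gamma_k(0)$ into $P_{N_{\rm met}-1,k}+\1(\lambda_{N_{\rm met},k}\le\mu_{\rm F})\Pi_{N_{\rm met},k}$, and periodicity to kill the first piece) and is correct.

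Your proof of item~(1) is genuinely different from the paper's, and it contains a gap in the error estimate. The paper does \emph{not} use linear response (Proposition~\ref{prop:Dyson}) for the metallic case; it uses the adiabatic Corollary~\ref{cor:adiab}, whose remainder is $O\bigl(\eps^2 t\,e^{\eta\eps t}\bigr)$. For $\eps t\le\eps^\theta\le 1$ this is $O(\eps^2 t)\le O(\eps+\eps^2 t^2)$, uniformly in $k$ because the gap $g_k(\eps t)$ stays bounded below on each $\BZ_n$ (no-crossing at the Fermi level plus compactness). Your approach, in contrast, writes the exact Duhamel formula for $r_k(t)$ and then replaces $\widetilde{\mathcal U}^\eps_{\beta,k}(t,s)$ by $e^{-i(t-s)H_k}$. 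The claim that this replacement costs $O(\eps^2 t^2)$ is not justified: the operator $\widetilde{\mathcal U}^\eps_{\beta,k}(t,s)-e^{-i(t-s)H_k}$ is itself of order $\eps\,t^2$ (not $\eps^2 t^2$), and plugging this into $r_k(t)=i\eps\int_0^t s\,(\cdots)\,ds$ gives an error of order $\eps^2 t^4$. The same issue appears if you instead try to invoke Proposition~\ref{prop:Dyson} directly: its bound is $C\eps^2 t^3(1+t^3)$, which for $t\sim\eps^{\theta-1}$ behaves like $\eps^{6\theta-4}$ and is much larger than $\eps+\eps^2 t^2$ for small $\theta$. So your argument, as written, only establishes \eqref{eq:result_metal} for $\theta$ bounded away from zero (roughly $\theta\gtrsim 5/6$), not for all $\theta>0$. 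The remainder of your Part~(1) computation --- the Duhamel identity, the two ``quadratic $H_k$'' identities, the eigenbasis expansion of the Kubo integral, the Stokes/divergence-theorem identification of the coefficient of $\eps t$ with $D_{\alpha,\beta}$ --- is correct and is a nice alternative presentation of the algebra; the problem is purely in the size of the remainder over the long time window. To fix it you would need to avoid the long-time Dyson expansion and use an adiabatic estimate (e.g.\ Proposition~\ref{prop:adiab2} via Corollary~\ref{cor:adiab}), which is precisely what the paper does.
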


\medskip 
   
Note that under the assumptions of the case 2 above, the lowest $N-1$
bands are completely filled, the $N^{\rm th}$ band is partially
filled, and the other bands are empty. Still in the setup of case 2,  it follows  from \eqref{eq:result_metal} and \eqref{eq:jab_metal} that
four different regimes can be observed for $\varepsilon \ll 1$
  \begin{enumerate}
\item For very short times $t \ll 1$, quantum fluctuations of order $O(\eps)$ dominate the current:
$$
j_{\alpha,\beta}^{\eps}(t) = O(\eps);
$$
\item For $1 \ll t \ll \frac{1}{\eps}$, the electrons undergo
  ballistic transport:
$$
j_{\alpha,\beta}^{\eps}(t) \approx D_{\alpha,\beta} \eps t,
$$
where $D_{\alpha,\beta}$ is defined in \eqref{eq:Dab};
\item For $ \frac{1}{\eps} \ll t \ll
  \frac{1}{\eps}\log(\eps^{-\zeta})$ with $\zeta\in(0,\eta^{-1})$, we observe Bloch oscillations
$$
j_{\alpha,\beta}^{\eps}(t) \approx {-(2\pi)^{-d}\int_{\mathcal B}  \mathds 1(\lambda_{N_{\rm met}, k}
\leq \mu_{\rm F})}{ \partial_{\alpha} \lambda_{N_{\rm met},k+\eps \vec e_\beta t} \, dk}.
$$
In particular, when $\vec e_{\beta}$ is commensurate with the
reciprocal lattice $\cR^\ast$, the current is well approximated in this regime by a periodic function of time with zero mean;
\item for times $t \gg \frac{1}{\eps}\log(\eps^{-\zeta})$, our
  estimates do not allow us to conclude. The proofs show that the factor
  ${\rm e}^{\eta \varepsilon t}$ is due to the unboundedness of the operator $H$ defined in~\eqref{eq:periodic_Hamiltonian}. 
  For tight-binding models, this factor
  ${\rm e}^{\eta\eps t}$ is not present, and we would observe Bloch
  oscillations up to times $t \ll \frac 1 {\varepsilon^{2}}$. The
  behavior for larger times is open.
\end{enumerate}

Note that some periodic metallic systems have a more complex crossing
structure than that assumed in the second case of Theorem
\ref{thm:metal}. This is the case in particular for the free electron
gas ($V=0$, ${\mathcal A}=0$, seen as a periodic system with an
arbitrary periodic lattice), which does not display Bloch
oscillations.

\medskip

\begin{remark}
  The coherent electronic transport model considered here neglects all
  sources of dissipation (phonons, impurities, electron-electron
  interactions). In the Drude approximation, these phenomena give rise
  to an effective timescale $\tau$ such that $1 \ll \tau \ll 1/\eps$
  (larger than the coherence timescale of the electrons, but smaller
  than the Bloch oscillations timescale), yielding a finite DC
  conductivity $\sigma_{\alpha,\beta} \sim D_{\alpha,\beta} \tau$. In
  usual metals at room temperature, dissipation is dominated by phonon
  scattering, and the relaxation time $\tau$ is of the order of tens
  of femtoseconds \cite{gall2016electron}. By contrast, the timescale
  of Bloch oscillations in most experiments is much larger. Only in structures such as semiconductor superlattices
  or cold atoms have Bloch oscillations been observed experimentally~\cite{leo1992observation}.
\end{remark}

\medskip

\begin{theorem}[conductivity in semi-metals] \label{theo:semi-metal}
  Assume that the system is a semimetal
  (Assumption~\ref{ass:semimetal}). Assume furthermore that
  $V \in H^{1}_{\per}$. Then,
  \begin{eqnarray*}
  \sigma_{\alpha,\beta}=  \lim_{t\to \infty} \frac 1{t}\int_{0}^{t} \lim_{\varepsilon \to 0}\frac{j_{\alpha,\beta}^{\varepsilon}(t')}{\varepsilon} dt' =  \frac{ |\mathcal I|}{16} \, 
    e_{\alpha} \cdot e_{\beta}.
\end{eqnarray*}
\end{theorem}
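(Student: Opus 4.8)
The plan is to work with the linear response current $j_{\alpha,\beta}^{\rm LR}(t) = \lim_{\varepsilon\to 0} j_{\alpha,\beta}^\varepsilon(t)/\varepsilon$ and then perform the time average. Since $\mathcal A = 0$ and $V\in H^1_{\rm per}$, Proposition~\ref{prop:Dyson} should give a Dyson expansion for the fiber propagator $\widetilde{\mathcal U}^\varepsilon_{\beta,k}(t)$, valid because the potential is regular enough; unlike the insulating case we cannot assume a spectral gap since the gap closes at the Dirac points $k_i$. Expanding \eqref{eq:current3} to first order in $\varepsilon$ yields a Kubo-type formula: $j_{\alpha,\beta}^{\rm LR}(t)$ is an integral over $\mathcal B$ of a trace involving $\gamma_k(0)$, $\partial_\alpha H_k$, $\partial_\beta H_k$, and the free fiber evolution $e^{-isH_k}$ for $s\in[0,t]$. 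The contribution splits into a part that is smooth in $k$ away from the Dirac points (which behaves like the metallic/insulating intraband terms and, after time averaging, should contribute nothing to the final constant for generic reasons tied to the conical geometry) and a singular part localized near each $k_i$.

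The heart of the matter is the local analysis near each Dirac point. Near $k_i$, Assumption~\ref{ass:semimetal} says that the two relevant bands $\lambda_{N_{\rm sm},k}$ and $\lambda_{N_{\rm sm}+1,k}$ form a cone with slope $v_{F,i}$, so after projecting onto the two-dimensional eigenspace the fiber Hamiltonian is, to leading order, a $2\times 2$ Dirac operator $v_{F,i}\,(k-k_i)\cdot\sigma$ (up to a unitary rotation of the Pauli matrices). For such a massless 2D Dirac cone the Kubo integrand can be computed essentially explicitly: the interband matrix elements of $\partial_\alpha H_k$ and $\partial_\beta H_k$ between the upper and lower cone states, together with the oscillatory factor $e^{-i(\lambda_{N_{\rm sm}+1,k}-\lambda_{N_{\rm sm},k})s} = e^{-2iv_{F,i}|k-k_i|s}$, produce after the $k$-integration over an annulus and the $s$-integration over $[0,t]$ a quantity that converges, as $t\to\infty$ in the Cesàro sense, to a universal constant independent of $v_{F,i}$ and of the direction of the cone. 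Carrying out the angular average over the annulus is where the factor $e_\alpha\cdot e_\beta$ emerges (the cone is isotropic, so the only surviving tensor is proportional to the metric $e_\alpha\cdot e_\beta$), and the radial-plus-time integral $\int_0^t\!\int_0^\infty \cos(2v_F r s)\,dr\,ds$-type computation, handled as a distributional/oscillatory integral, produces the numerical prefactor. The $\frac{1}{16}$ should come out of this computation (a $\frac14$ from the standard 2D Dirac conductivity $\sigma = \frac{1}{4}\frac{e^2}{h}$ in our units, combined with a further factor from the Cesàro average and the normalization conventions in \eqref{def:Jalpha}, \eqref{eq:current3}); each Dirac point contributes the same amount, giving the factor $|\mathcal I|$.

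The main obstacle, I expect, is making the local Dirac reduction rigorous \emph{uniformly} enough to justify interchanging limits: one must show that (i) the error in replacing $H_k$ near $k_i$ by its linearized $2\times 2$ model is genuinely subleading after both the $k$-integral and the time average — the $O(|k-k_i|^2)$ corrections in \eqref{hyp:semimetals_2}--\eqref{hyp:semimetals_3} and the curvature of the eigenprojectors must not spoil the leading constant; (ii) the contribution of all the other bands and of the region of $\mathcal B$ bounded away from the Dirac points vanishes after time averaging — this needs a non-stationary-phase / Riemann--Lebesgue argument exploiting that away from $\mathcal S$ the band differences are bounded below, so the oscillatory factors $e^{-i(\lambda_{m,k}-\lambda_{n,k})s}$ average to zero; and (iii) the order of limits $\varepsilon\to 0$ then $t\to\infty$ is legitimate, i.e. the Dyson remainder is $o(\varepsilon)$ locally uniformly in $t$ even though no spectral gap is available, which is precisely why the extra regularity $V\in H^1_{\rm per}$ is assumed. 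Controlling the small-$|k-k_i|$ region, where the oscillation frequency $\sim 2v_F|k-k_i|s$ degenerates, against the large-$s$ behavior is the delicate quantitative estimate; it is essentially a careful version of the stationary-phase analysis for the conical singularity, and I would isolate it in a lemma handling the model integral $\iint \chi(r)\,e^{\pm 2iv_F r s}\,r\,dr\,ds$ and its Cesàro limit.
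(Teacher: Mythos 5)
Your overall strategy is the one the paper follows: split $\mathcal B$ into small balls around the Dirac points and a complementary region, use the gap-independent Dyson/Kubo expansion of Proposition~\ref{prop:Dyson} near the $k_i$, reduce there to a $2\times 2$ Dirac model $v_F(k-k_i)\cdot\sigma$, and extract the universal $\frac1{16}\,e_\alpha\cdot e_\beta$ per cone from an explicit angular-plus-radial oscillatory integral (the paper does the $2\times2$ reduction by L\"owdin orthonormalization of the range of $P_{N_{\rm sm}+1,k}-P_{N_{\rm sm}-1,k}$ and an ${\rm SU}(2)$ rotation to make the linearization a genuinely $2\times2$ isotropic model; in the outer region it uses the adiabatic expansion of Corollary~\ref{cor:adiab} with gap $\gtrsim\delta$ rather than Dyson everywhere, which gives quantitative $\delta^{-6}$ error rates compatible with letting $t\to\infty$ before $\delta\to 0$). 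However, there is a genuine gap in how you dispose of the contribution away from the Dirac points and of the non-crossing band pairs. The first-order Kubo term contains, besides the oscillatory factor, a \emph{static} interband part $-i\,\Tr\!\left(\partial_\alpha H_k\,L_k^+\partial_\beta\gamma_k(0)\right)=-i\,\Tr\!\left(\gamma_k(0)[\partial_\alpha\gamma_k(0),\partial_\beta\gamma_k(0)]\right)$, i.e.\ a Berry-curvature-type integrand. This term is constant in time, so no Riemann--Lebesgue or Ces\`aro averaging removes it, and it is in general nonzero pointwise in $k$; it disappears only upon $k$-integration because the integrand is odd under $k\mapsto-k$, which uses precisely the time-reversal symmetry $H_{-k}=\overline{H_k}$ guaranteed by the hypothesis $\mathcal A=0$ in Assumption~\ref{ass:semimetal} (and by the time-reversal-symmetric choice of the cutoff regions). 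Your proposal never invokes this symmetry; "generic reasons tied to the conical geometry" is not a substitute, and without it the far-region (and non-crossing-pair) static contribution would add a non-universal, generally nonzero amount to $\sigma_{\alpha,\beta}$ -- exactly the anomalous Hall term that the theorem's hypotheses are designed to exclude.

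A second, related omission: the Dyson expansion also produces the ballistic term $\varepsilon t\,\partial_\alpha\partial_\beta\Tr(H_k\gamma_k(0))$, which after dividing by $\varepsilon$ grows linearly in $t$, so it too cannot be dismissed by time averaging. One must show that its integral over the whole Brillouin zone vanishes \emph{exactly}; in the paper this is done by a Stokes-formula computation in which the boundary terms on $\partial\mathcal B_i^\delta$ coming from the inner and outer regions cancel identically (the semimetal analogue of $D_{\alpha,\beta}=0$, since the Fermi surface is a finite set of points). A mere $O(\delta)$ bound on these boundary terms would not suffice, because in the stated order of limits $t\to\infty$ is taken before $\delta\to 0$, so an uncancelled $O(t\delta)$ remainder is fatal. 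Your core local computation near the cones (isotropy giving $e_\alpha\cdot e_\beta$, the model integral yielding the prefactor, the error control items (i)--(iii)) is sound in outline, but the proof is incomplete without the time-reversal-symmetry cancellation and the exact treatment of the ballistic term.
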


Semimetals are intermediate between insulators and metals, possessing a
finite longitudinal conductivity in the linear response regime. This
is due to the peculiar properties of the Dirac points. Note that the
value of the conductivity is universal, not depending on the
characteristics of the Hamiltonian but only on the number of conical crossings. More precisely, the conductivity tensor is isotropic and each conical intersection contributes as $\frac{1}{16}$ to the total conductivity. Note that this result is consistent with formula (1.17a) in \cite{DasSarma2011}.


\section{Numerics}
\label{sec:numerics}
Before turning to the proofs, we illustrate our results with numerical
simulations. As mentioned in Remark~\ref{rem:only_depends_on_Hk}, our
results also apply to tight-binding models, and only depend on the
form of $H_{k}$. We test on a very simple model of $H_{k}$, adapted
from the Haldane model \cite{haldane1988model} (itself based on a tight-binding model of
graphene), that can support many phases depending on the values of its parameters.
The graphene lattice $\cR$ is spanned by the vectors
\begin{align*}
  a_{1} = \left(\frac{\sqrt 3}{2},\frac 1 2\right), \quad a_{2} = \left(\frac{\sqrt 3}{2},-\frac 1 2\right),
\end{align*}
and $\cR^{*}$ by the vectors
\begin{align*}
  b_{1} = 2\pi \left(\frac{1}{\sqrt 3},1\right), \quad b_{2} = 2\pi \left(\frac{1}{\sqrt 3},-1\right).
\end{align*}
The Hamiltonian fibers are
 \[ H_k=
  \begin{pmatrix}
    m(k)&\overline{{f}(k)}\\f(k) &-m(k)
  \end{pmatrix},\]
  with
 \begin{align*}
  m(k) &= g + 2 t_{2} \Big(\sin( k \cdot a_{1}) + \sin(k \cdot a_{2}) + \sin( k \cdot (a_{1} - a_{2}))\Big), \\
  f(k) &= \sum_{i=1}^{3} {\rm e}^{i k \cdot \delta_{i}},\\
  \delta_{1} &= \left(\frac 1 {\sqrt 3}, 0\right), \quad \delta_{2} = \left( -\frac{1}{2\sqrt 3},\frac 1 2\right), \quad \delta_{3} =\left( -\frac{1}{2\sqrt 3},-\frac 1 2\right).
\end{align*}
The eigenvalues of $H_k$ are
$\lambda_{\pm} = \pm \sqrt{m(k)^{2} + |f(k)|^{2}}$. With $g=0$,
$t_{2} = 0$, this is the standard model of graphene: two bands
touching at level $0$ at two inequivalent points in the Brillouin
zone, where $f(k)$ vanishes. The parameter $g \neq 0$ opens a gap of
size~$2g$. The parameter $t_{2}$ models an internal magnetic field,
and can turn the system into a Chern insulator (in particular, with
$g=1, t_{2}=-1$, the system is a Chern insulator with Chern number
$+1$). Therefore, varying the parameters $g$, $t_{2}$ and $\mu_{\rm F}$,
we can obtain a normal insulator, a Chern insulator, a semimetal or a
metal.

For a given set of parameters, we compute the current by using
formulae \eqref{eq:current1}-\eqref{eq:current3}. We sample the Brillouin zone using a uniform
grid with $N_{\rm grid}=300$ points per direction, and solve the ordinary differential equation
\begin{align*}
  i \frac{d u}{d t}(t) = H_{k-\eps \vec{e}_{\beta}t} u(t), \quad u(0) = u_{n,k}, 
\end{align*}
for various $n$ and $k$ using the \texttt{DifferentialEquations.jl}
Julia package \cite{rackauckas2017differentialequations} with the default
Tsitouras method of order 5.

Our parameter values are collected in
Table~\ref{tab:parameter_values}.
\begin{table}[h!]
  \centering
  \begin{tabular}[h!]{|c|c|c|c|c|}
    \hline
    Panel & $g$ & $\mu_{\rm F}$ & $t_{2}$ & Phase\\
    \hline
    (a) & $1$ & $0$ & $0$ & Normal insulator\\
    (b) & $1$ & $0$ & $-1$ & Chern insulator\\
    (c) & $1$ & $-2$ & $0$ & Metal\\
    (d) & $0$ & $0$ & $0$ & Semimetal\\
    \hline
  \end{tabular}
  \caption{\small Parameter values for the experiments in Figure \ref{fig:num_results}}
  \label{tab:parameter_values}
\end{table}

Our results are presented in the linear response regime
($\varepsilon = 10^{-6}$, $t \ll \frac 1 \varepsilon$) in
Figure~\ref{fig:num_results}.
\begin{figure}[h!]
  \centering
  \begin{subfigure}[h]{.49\textwidth}
    \includegraphics[width=\textwidth]{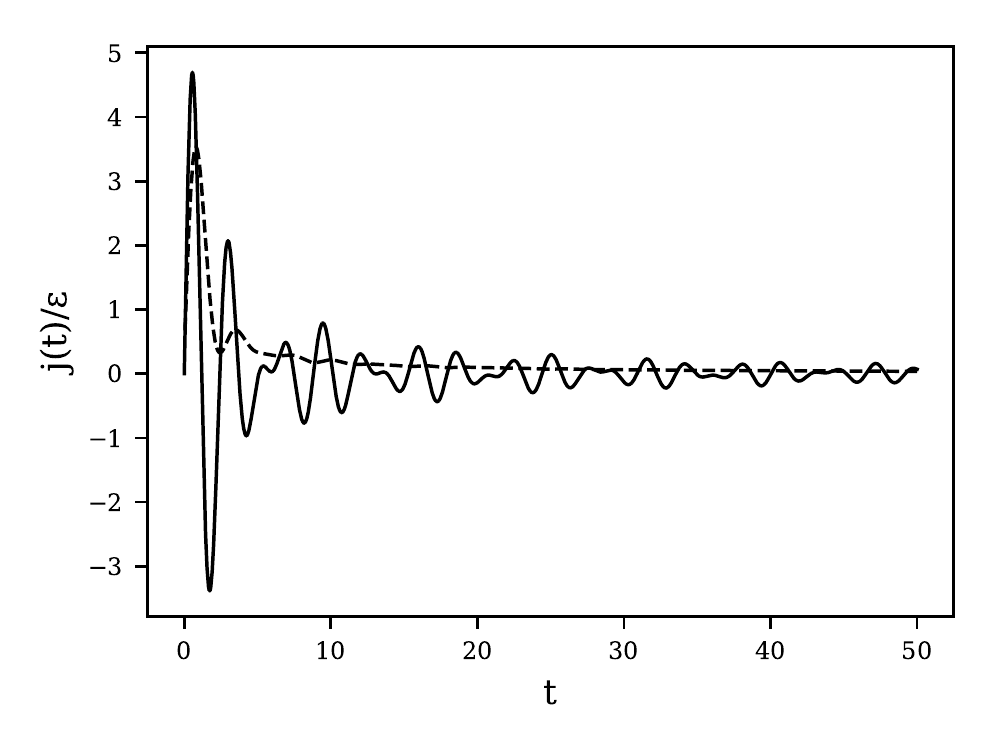}
   \caption{\small Normal insulator phase, longitudinal current.}
  \end{subfigure}
  \begin{subfigure}[h]{.49\textwidth}
    \includegraphics[width=\textwidth]{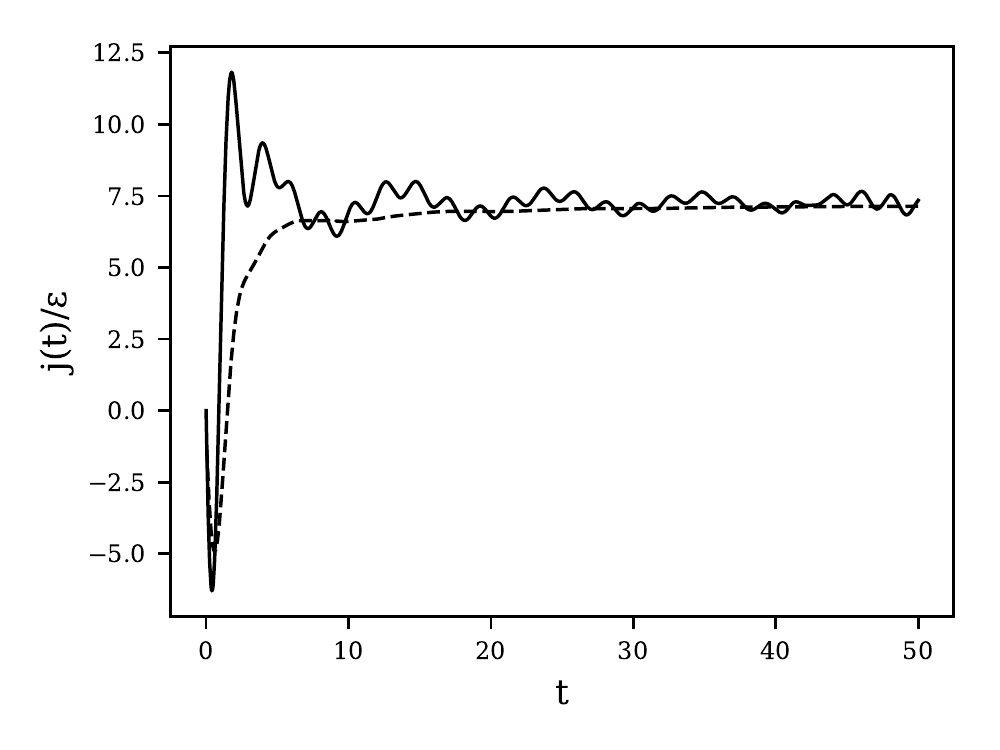}
    \caption{\small Chern insulator phase, transverse current. }
  \end{subfigure}
  
  \begin{subfigure}[h]{.49\textwidth}
    \includegraphics[width=\textwidth]{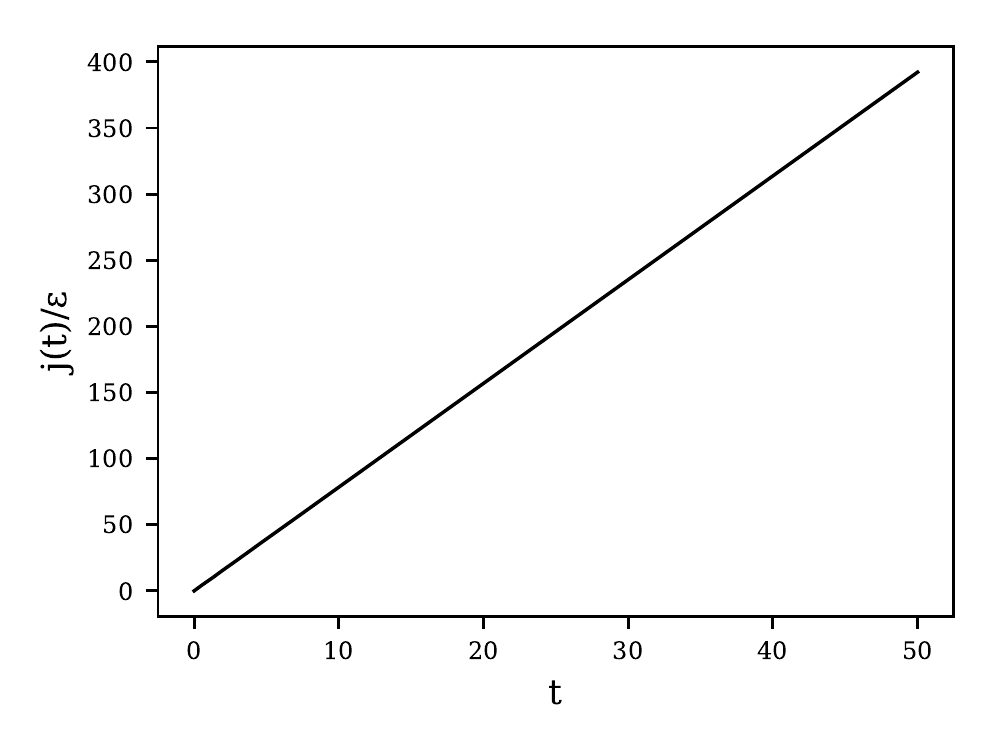}
    \caption{Metallic phase.}
  \end{subfigure}
  \begin{subfigure}[h]{.49\textwidth}
    \includegraphics[width=\textwidth]{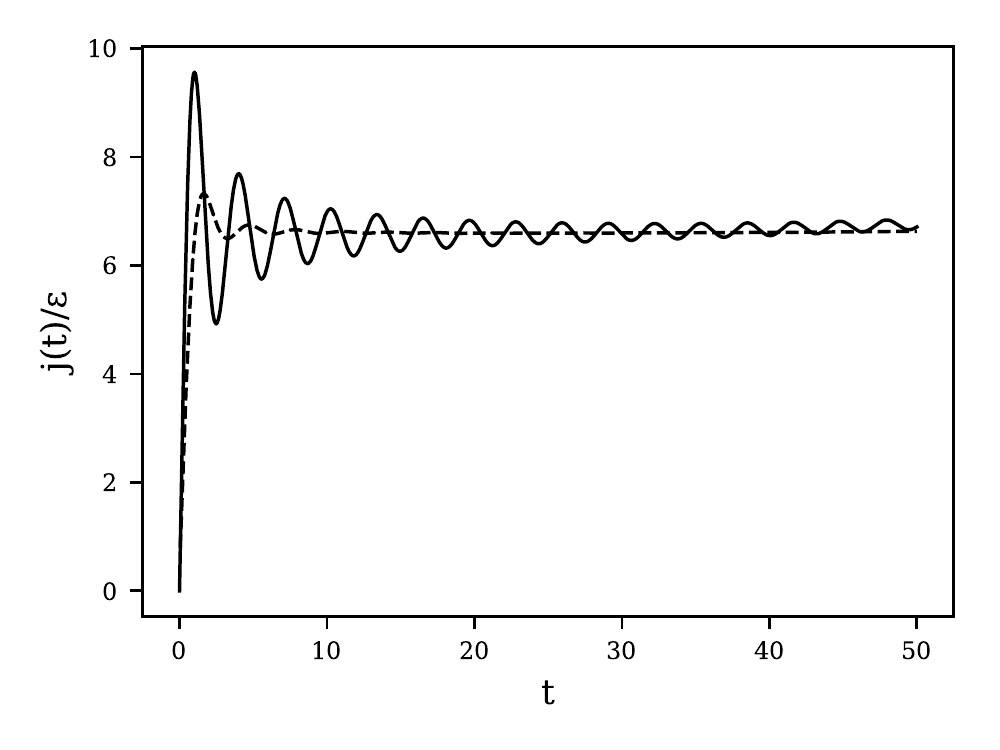}
    \caption{\small Graphene phase.
    }
  \end{subfigure}

  \caption{\small Instantaneous conductivity  $\frac{j_{\alpha,\beta}^{\eps}(t)}\eps$ (solid line) and running
    average $\frac 1 {t} \int_{0}^{t} \frac {j_{\alpha,\beta}^{\eps}(t')}{\varepsilon}  dt'$
    (dotted line) for several phases, in the
    linear response regime ($\varepsilon = 10^{-4}$, $t \ll \frac 1 \varepsilon$). In all cases
    $\vec e_{\beta} = b_{1}$, and $\vec e_{\alpha} = \vec e_{\beta}$,
    except in panel (b) where $\vec e_{\alpha} = b_{2}$.}
  \label{fig:num_results}
\end{figure}

These results are consistent with our theoretical results, including
the limit values of the conductivity in cases (b) and (d), where we obtain
$4\pi/\sqrt 3 \approx 7.26$ 
and $|b_1|^2/8=2\pi^2/3 \approx 6.58$
respectively. However, there is an additional phenomenon worth of
note: in the case of insulators and graphene, the linear response
instantaneous conductivity
$j_{\alpha,\beta}(t) = \lim_{\varepsilon \to 0}
\frac{j_{\alpha,\beta}^{\varepsilon}(t)}{\varepsilon}$ seems to possess a finite
limit as $t \to +\infty$. This is not captured by our results, where
we used an averaging process to suppress the oscillations. Note
that for a finite $N_{\rm grid}$, the linear response oscillates with
frequencies $\lambda_{n',k} - \lambda_{n,k}$ for
$\lambda_{n,k} < \mu_{\rm F} < \lambda_{n',k}$, and $k$ in the
discrete Brillouin zone. Only in the limit $N_{\rm grid} \to \infty$
do these resonances merge together to yield a finite limit for the
current. This is linked to the absence of resonances (parallel bands)
in our model. A deeper investigation of this effect would be
interesting future work.

We also investigate the Bloch oscillations regime
$\varepsilon \ll 1, \;\frac 1 \eps \ll t$ in Figure
\ref{fig:bloch_oscillations}, where we use the same parameters as in case
(c) above. The result is consistent with our theoretical result:
periodic or quasi-periodic oscillations, depending on whether
$e_{\beta}$ is commensurate with the reciprocal lattice or not.

\begin{figure}[h!]
  \centering
    \includegraphics[width=.49\textwidth]{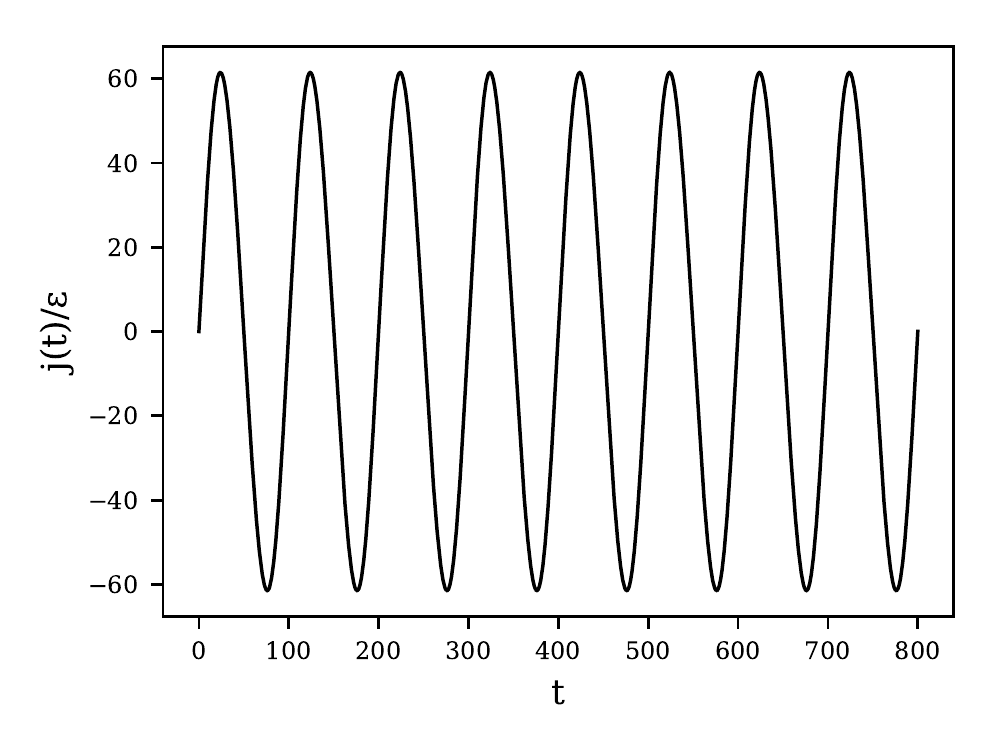}
    \includegraphics[width=.49\textwidth]{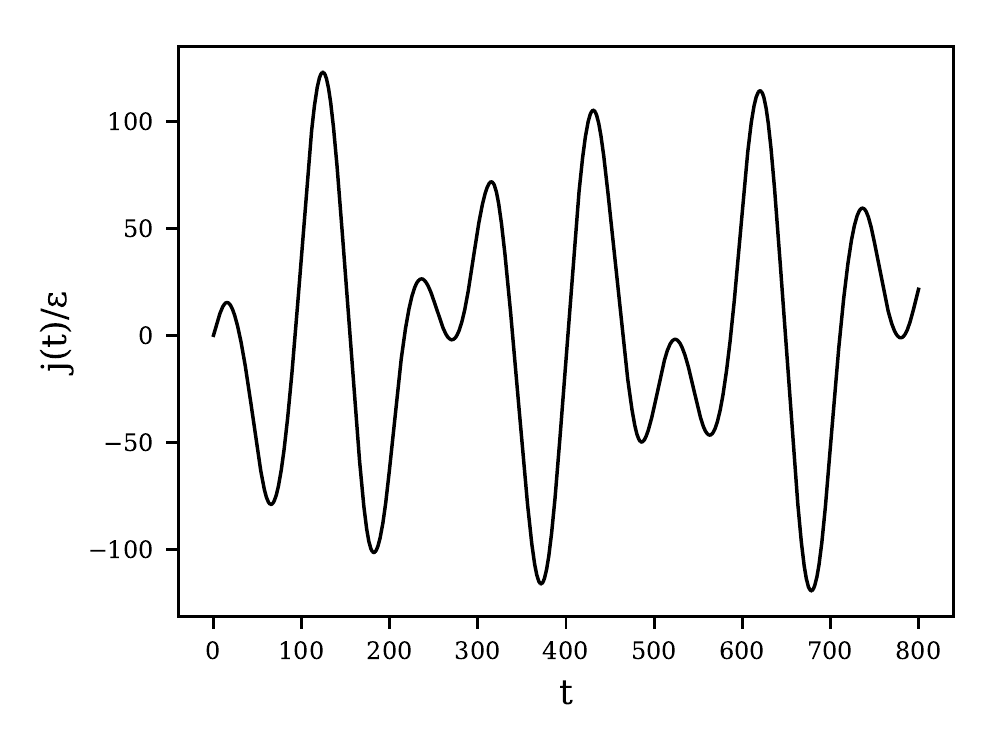}
    \caption{\small Instantaneous conductivity $\frac{j_{\alpha,\beta}^{\varepsilon}(t)}\varepsilon$ in the Bloch oscillations regime ($\varepsilon = 10^{-2}$, $\frac 1 \eps \ll t$). We
      take $e_{\alpha} = b_{1}$. The left figure is with
      $e_{\beta} = b_{1} + b_{2}$, the right one with
      $e_{\beta} = b_{1} + \frac{1+\sqrt 5}{2} b_{2}$.}
  \label{fig:bloch_oscillations}
\end{figure}


\section{Bloch decomposition of $\gamma^\eps_\beta(t)$ and regularity of the current}\label{sec:preliminaries}

In this section, we prove Proposition~\ref{prop:setting}. We first
point out an alternative interpretation that helps shedding some light
on the gauge change $G_{\varepsilon t \vec e_{\beta}}$. Formally,
$\gamma^\eps_\beta(t)$ satisfies the equation
$$i\partial_{t} \gamma^\eps_\beta = [H,\gamma^\eps_\beta] + \eps [x_{\beta},\gamma^\eps_\beta ].$$
The operator $[x_{\beta},\gamma^\eps_\beta]$ can easily be seen to be $\cR$-periodic, with
fibers $i\partial_{\beta} \gamma^\eps_{\beta,k}$ ({where $\partial_\beta = e_\beta \cdot \nabla_{k}$}), and therefore,~$\gamma^\eps_\beta(t)$
 is $\cR$-periodic and its fibers
$\gamma^\eps_{\beta,k}(t)$ satisfy the equation
\begin{align*}
i  \partial_{t} \gamma^\eps_{\beta,k} -i \eps \partial_{\beta} \gamma^\eps_{\beta,k} =   [H_{k},\gamma^\eps_{\beta,k}] =  L_{H_k} \gamma^\eps_{\beta,k},
\end{align*}
where $L_{H_k}:= [H_k, \cdot]$ is the Liouvillian associated with the operator $H_k$ (see Section~\ref{sec:louvillian_intro}).
The left-hand side of this equation is a linear advection equation,
which suggests the use of the method of characteristics: setting
\begin{align}
  \label{eq:change_of_variable}
\widetilde{\gamma}^{\eps}_{\beta,k}
\left(t\right) = \gamma^\eps_{\beta,k - \eps \vec{e}_{\beta}t}(t),
\end{align}
we obtain
\begin{align*}
  \label{eq:liouville-kL}
  i  \partial_t \widetilde{\gamma}^{\eps}_{\beta,k}(t) =  [H_{k- \eps   \vec{e}_{\beta}t}, \widetilde{\gamma}^{\eps}_{\beta,k}(t) ] = L_{H_{k-\eps \vec{e}_\beta t}} \widetilde{\gamma}^\eps_{\beta,k} (t),
\end{align*}
which is equivalent to \eqref{eq:current2}. The use of the gauge
transform operator $G_{\eps t\vec e_\beta}$, equivalent to the change
of variable \eqref{eq:change_of_variable}, makes these remarks
rigorous.

\medskip

We now prove Proposition~\ref{prop:setting}. As outlined above, the
results of this proposition are well-known; they can in fact be
extended to the more general setting of ergodic magnetic Schr\"odinger
operators (see~\cite{bouclet2005linear}). We provide here an
elementary proof specific to the periodic case, and take this
opportunity to introduce notations and tools which will be useful in
the sequel.

\medskip 

{\bf Proof of the first assertion.} 
The essential self-adjointness of $H_\beta^\eps$ follows from an extension of the Faris-Lavine theorem~\cite[Theorem~X.38]{ReedSimon2}.  Let $\mathcal{C} = C^\infty_{\rm c}(\R^d;\C)$ be the set of infinitely differentiable, compactly supported functions.
\begin{lemma}[Faris-Lavine theorem with periodic vector potentials]
\label{MagneticFarisLavine}
Let $V$ and $W$ be real-valued measurable functions on $\R^d$, 
$W \in L^2_{\rm loc}(\R^d;\R)$ and $\mathcal{A} \in L^4_{\rm per}(\R^d;\R^d)$ such that $\nabla \cdot \mathcal{A} = 0$ in the sense of distributions. Suppose that
\begin{enumerate}
\item there exists $c,f \in \R_+$ such that
$W(x) \geq - c|x|^2 - f$, for a.a. $x\in \R^d$;
\item  
$\frac 12 (-i\nabla +\mathcal{A})^2 + V + W +2c|x|^2$ is essentially self-adjoint on $\mathcal{C}$;
\item for some $a<1$, $\frac a2 (-i\nabla + \mathcal{A})^2 + V$ is bounded below on $\mathcal{C}$.
\end{enumerate}
Then $\frac 12 (-i\nabla +\mathcal{A})^2 + V + W$ is essentially self-adjoint on $\mathcal{C}$.
\end{lemma}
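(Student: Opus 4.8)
\textbf{Proof plan for Lemma~\ref{MagneticFarisLavine}.}

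The plan is to reduce the statement to the abstract Faris--Lavine commutator theorem (see~\cite[Theorem~X.38]{ReedSimon2}), for which one takes $A = \frac12(-i\nabla+\mathcal A)^2 + V + W + 2c|x|^2$ as the ``large'' essentially self-adjoint operator and $N = \frac12(-i\nabla+\mathcal A)^2 + V + W + 2c|x|^2 + \mathrm{const}$ as the comparison (``energy'') operator after shifting to make it $\ge 1$. Actually the cleaner route is the standard Faris--Lavine scheme: write $T = \frac12(-i\nabla+\mathcal A)^2 + V + W$, the operator we want to prove essentially self-adjoint, and let $N = \frac12(-i\nabla+\mathcal A)^2 + |V| + c|x|^2 + f + 1$ (or a similar manifestly nonnegative, essentially self-adjoint auxiliary operator). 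One then has to check the two hypotheses of the commutator theorem: (i) $T$ is $N$-bounded, i.e. $\|T\psi\| \le C\|N\psi\|$ on $\mathcal C$; and (ii) the commutator estimate $|\langle T\psi, N\psi\rangle - \langle N\psi, T\psi\rangle| \le D\langle \psi, N\psi\rangle$ holds on $\mathcal C$, together with the semiboundedness of $T$ on $\mathcal C$.

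The key steps, in order, would be as follows. First, I would record the basic form-boundedness facts for periodic vector potentials: since $\mathcal A \in L^4_{\rm per}$ with $\nabla\cdot\mathcal A = 0$, the cross term $-i\mathcal A\cdot\nabla - i\nabla\cdot\mathcal A = -2i\mathcal A\cdot\nabla$ and the term $|\mathcal A|^2$ are both infinitesimally form-bounded with respect to $-\Delta$ (the $L^4$ hypothesis on $\mathcal A$ is exactly what makes $|\mathcal A|^2 \in L^2_{\rm loc}$ uniformly and, combined with periodicity hence global control, gives Kato-type bounds); this lets me freely pass between $(-i\nabla+\mathcal A)^2$ and $-\Delta$ up to relatively bounded perturbations. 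Second, I would use hypothesis~3: for some $a<1$, $\frac a2(-i\nabla+\mathcal A)^2 + V \ge -\Lambda$ on $\mathcal C$; rescaling, this gives $\frac12(-i\nabla+\mathcal A)^2 + V \ge -\frac{\Lambda}{a} + \frac{1-a}{2}(-i\nabla+\mathcal A)^2 \ge$ (a multiple of) $(-i\nabla+\mathcal A)^2$ minus a constant, so that $V$ is controlled and $T = \frac12(-i\nabla+\mathcal A)^2 + V + W$ is bounded below by using $W \ge -c|x|^2 - f$ and absorbing $c|x|^2$ into $N$. Third, I would verify the operator inequality $T \le N$ (up to constants) and the reverse-type bound $N \le$ const$\cdot(T + $ const$)$ on $\mathcal C$, which gives the relative boundedness in~(i). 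Fourth — the heart of the matter — I would compute the commutator $[T, N]$ on $\mathcal C$. The dangerous pieces are the commutators of the kinetic term with $|x|^2$ and with itself through $\mathcal A$; one finds $[\frac12(-i\nabla+\mathcal A)^2, |x|^2]$ is a first-order operator of the form $x\cdot(-i\nabla+\mathcal A) + (-i\nabla+\mathcal A)\cdot x$, which is bounded by $\epsilon(-i\nabla+\mathcal A)^2 + \epsilon^{-1}|x|^2 \lesssim N$; the commutators involving $V$ and $W$ with $|x|^2$ vanish (multiplication operators commute), and the terms $[|\mathcal A|^2 \text{ or } \mathcal A\cdot\nabla, \cdot]$ are handled by the $L^4$ bound on $\mathcal A$. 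Finally, I would invoke hypothesis~2 — that $\frac12(-i\nabla+\mathcal A)^2 + V + W + 2c|x|^2$ is essentially self-adjoint on $\mathcal C$ — to know that $N$ (which differs from it by a bounded-below multiplication perturbation handled via Kato--Rellich, or is itself essentially self-adjoint by the same token) is a legitimate comparison operator, and conclude by the commutator theorem that $T$ is essentially self-adjoint on $\mathcal C$.

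The main obstacle I anticipate is the bookkeeping in the commutator estimate~(ii): one must ensure that \emph{every} term arising from expanding $[T,N]$ — in particular the ones mixing the magnetic momentum $(-i\nabla+\mathcal A)$ with the confining potential $|x|^2$, and the ones involving the merely-$L^4$ field $\mathcal A$ — is dominated by $\langle\psi, N\psi\rangle$ with a constant independent of $\psi \in \mathcal C$. The low regularity of $\mathcal A$ (only $L^4$, only $\nabla\cdot\mathcal A = 0$ distributionally) means one cannot differentiate $\mathcal A$, so all manipulations must be arranged so that derivatives never land on $\mathcal A$; the Coulomb gauge condition is precisely what permits writing $(-i\nabla+\mathcal A)^2 = -\Delta - 2i\mathcal A\cdot\nabla + |\mathcal A|^2$ without a $(\nabla\cdot\mathcal A)$ term, and one leans on that repeatedly. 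A secondary technical point is justifying that all the formal integration-by-parts and commutator computations are valid on $\mathcal C$ (they are, since elements of $\mathcal C$ are smooth and compactly supported, so there are no boundary terms and all products make sense pointwise), and that the constants can be taken uniform — this follows from periodicity of $\mathcal A$ and $V$, which upgrades local bounds to global ones.
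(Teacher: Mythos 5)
Your overall strategy is the right one and is the same as the paper's: invoke the abstract Faris--Lavine commutator criterion of Reed--Simon with the magnetic Laplacian $\frac12(-i\nabla+\mathcal A)^2$ in place of $-\Delta$, and the commutator you single out, $[\frac12(-i\nabla+\mathcal A)^2,|x|^2] = x\cdot(-i\nabla+\mathcal A)+(-i\nabla+\mathcal A)\cdot x$, bounded by $\epsilon(-i\nabla+\mathcal A)^2+\epsilon^{-1}|x|^2$, is exactly the estimate that carries the argument. However, your choice of comparison operator
\[
N=\tfrac12(-i\nabla+\mathcal A)^2+|V|+c|x|^2+f+1
\]
creates a genuine gap. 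First, you have no handle on essential self-adjointness of this $N$: hypothesis~2 gives essential self-adjointness of $\tfrac12(-i\nabla+\mathcal A)^2+V+W+2c|x|^2$, and the difference $|V|-V-W-c|x|^2+\text{const.}$ is unbounded, so ``a bounded-below multiplication perturbation handled via Kato--Rellich'' does not go through. Second, with your $N$ the commutator $[T,N]$ contains the term $[\tfrac12(-i\nabla+\mathcal A)^2,\,|V|]$, whose coefficients involve $\nabla|V|$ — but $V$ is only assumed to be in $L^2_{\rm per}$, so this term does not make sense, let alone admit the required $N$-form bound.

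The paper sidesteps both difficulties by taking $N$ to be precisely the operator of hypothesis~2 shifted by a constant: $A=\tfrac12(-i\nabla+\mathcal A)^2+V+W$ and $N=A+2c|x|^2+b$, with $b$ chosen (using hypotheses~1 and~3) so that $N\ge1$. Then $N$ is essentially self-adjoint on $\mathcal C$ \emph{by hypothesis}, and $[A,N]=2c\,[\tfrac12(-i\nabla+\mathcal A)^2,|x|^2]$ is the \emph{only} commutator that appears — precisely the one you know how to estimate, via $N\ge e\bigl((-i\nabla+\mathcal A)^2+|x|^2\bigr)$ which follows from hypotheses~1 and~3. The relative bound $\|A\phi\|\le h\|N\phi\|$ is then checked by expanding $N^2=(A+b)^2+4c\sum_j x_j(A+b+c|x|^2)x_j-2cd$ as quadratic forms on $\mathcal C$. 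If you make this substitution for $N$, the rest of your plan works; as written, the missing justification of essential self-adjointness of $N$ and the spurious $[\text{kinetic},|V|]$ term are real obstructions.
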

The proof of the above lemma is postponed until Appendix~\ref{sec:FL_lemma}.
\medskip

We apply Lemma~\ref{MagneticFarisLavine} with $V \in L^2_{\rm per}(\R^d;\R), W=\varepsilon x_{\beta}$. The
operator $\frac 12 (-i\nabla+\mathcal{A})^2 + V + \eps x_\beta + 2 |x|^2$ is
essentially self-adjoint on the core $\mathcal{C}$ in view of
\cite[Theorem~3]{leinfelder1981schrodinger} (note that $\eps x \cdot e_{\beta} \geq - |x|^2 -
\frac{\eps^2 |e_{\beta}|^{2}}{4} $).
Moreover, since $V$ is $L^2_{\rm per}(\R^d;\R)$, there exists $0<a<1$,
such that $\frac a2(-i\nabla+\mathcal{A})^2 + V$ is bounded below. This can be seen directly, or as a consequence of \cite[Theorem~3]{leinfelder1981schrodinger}.  Then,
Lemma~\ref{MagneticFarisLavine} gives that $H_\beta^\eps$ is
essentially self-adjoint on $\mathcal{C}$ and therefore admits a unique self-adjoint extension on $L^2(\R^d;\C)$. Hence,
the propagator of the associated Schr\"odinger equation is
well-defined, and explicitly given by
$(\mathrm{e}^{-i t H_\beta^\eps })_{t \in \R}$.

\medskip

{\bf Proof of the second assertion.} The self-adjointness of the time-dependent Hamiltonian
\begin{align*}
  \widetilde{H}^\eps_\beta(t) = \frac 12 (-i \nabla +\mathcal{A} - \eps \vec{e}_\beta t)^2+ V,
\end{align*}
is a consequence of Lemma~\ref{MagneticFarisLavine}, by replacing $\mathcal A$ with $(\mathcal A - \varepsilon e_\beta t)$. 
To show the well-posedness of the dynamics, since $\widetilde{H}^\eps_\beta(t)$ is $\mathcal{R}$-periodic, it suffices to study its fibers.
Hence we consider the dynamics of a Schr\"odinger equation with Hamiltonian 
\begin{align*}
  \widetilde{H}^\eps_{\beta,k}(t) = \frac 12 (-i \nabla +\mathcal{A} + k - \eps \vec{e}_\beta t)^2+ V
\end{align*}
and we use the following lemma on the dynamics generated by
time-dependent perturbations of the free-particle Hamiltonian on
$L^2_{\rm per}$. 

\begin{lemma}
\label{lemmaUnitaryPropagator} Let $H_0:=-\frac 12\Delta$ be the free-particle Hamiltonian on $L^2_{\rm per}$, and a map 
\[[0,T] \ni t \mapsto H_1(t)\]
taking its values in the set of $H_0$-bounded self-adjoint operators on $L^2_{\rm per}$ with relative bound lower than $1$, that is: there exist $0<a<1$ and $b>0$ such that
\begin{align}
  \label{eq:hyp_lap_bounded}
\forall t \in [0,T],\quad \forall \phi \in H^2_{\rm per}, \quad \|H_1(t)\phi\|_{L^2_{\rm per}} \leq a \|H_0 \phi \|_{L^2_{\rm per}} + b \|\phi\|_{L^2_{\rm per}}.
\end{align}
Then, for all $t \in [0,T]$, the operator defined by
 $H(t) = H_0 + H_1(t)$
 is self-adjoint on $L^2_{\rm per}$ with domain $H^2_{\rm per}$, and there exists a unique unitary propagator $(\mathcal{U}(t))_{t\in[0,T]}$ on $L^2_{\rm per}$ such that for $t \in [0,T]$, and $\phi_0 \in H^2_{\rm per}$, $\phi: t \mapsto \mathcal{U}(t)\phi_0$ is in $C^1([0,T];H^2_{\rm per})$, and solves the time-dependent Schr\"odinger equation
\[i \partial_t \phi(t) = H(t) \phi (t), \qquad \phi(0) = \phi_0.\] 
\end{lemma}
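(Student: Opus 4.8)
The plan is to prove Lemma~\ref{lemmaUnitaryPropagator} by a standard construction of the unitary propagator for a time-dependent Hamiltonian of the form $H(t) = H_0 + H_1(t)$ where $H_0 = -\frac12\Delta$ has compact resolvent on $L^2_{\rm per}$ and $H_1(t)$ is $H_0$-bounded uniformly in $t$. The reference framework is Kato's theory of linear evolution equations (see \cite{ReedSimon2}, Theorem X.70, or Kato's original papers), which requires three ingredients: (i) for each $t$, $H(t)$ is self-adjoint with a common domain $D = H^2_{\rm per}$ independent of $t$; (ii) a suitable regularity of $t \mapsto H(t)$ (strong $C^1$ or at least strong continuity together with bounded variation) as a map into operators from $H^2_{\rm per}$ to $L^2_{\rm per}$; (iii) a stability/quasi-boundedness estimate allowing one to control $e^{-isH(t)}$ uniformly.

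First I would establish self-adjointness on the common domain $H^2_{\rm per}$: by the Kato--Rellich theorem, hypothesis \eqref{eq:hyp_lap_bounded} with relative bound $a<1$ gives that $H(t) = H_0 + H_1(t)$ is self-adjoint on $D(H_0) = H^2_{\rm per}$ and bounded below, with a lower bound $-M$ uniform in $t\in[0,T]$ (since the constants $a,b$ are uniform). In particular $H(t) + M + 1$ is a positive operator, and the graph norm of $H(t)$ on $H^2_{\rm per}$ is equivalent to the $H^2_{\rm per}$ norm uniformly in $t$; this takes care of the ``$N$-measurable family'' type conditions. Next, I would verify the time-regularity of $t \mapsto H_1(t)$. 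In the application at hand $H_1(t) = \frac12\big((-i\nabla + \mathcal A + k - \varepsilon e_\beta t)^2 - (-\Delta)\big) - \frac12(-\Delta) + \dots$, wait — more precisely $H_1(t) = \frac12\big[(-i\nabla + \mathcal A + k - \varepsilon e_\beta t)^2 - \Delta^{\phantom{2}}\big]\!/(-1) + V$; this depends polynomially (degree two) on $t$ with coefficients that are operators bounded from $H^2_{\rm per}$ or $H^1_{\rm per}$ into $L^2_{\rm per}$. Hence $t \mapsto H_1(t) \in \mathcal L(H^2_{\rm per}, L^2_{\rm per})$ is real-analytic, in particular norm-$C^1$, which is far more than enough for Kato's theorem to apply. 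This yields the existence of a unique unitary propagator $(\mathcal U(t,s))$ such that for $\phi_0 \in H^2_{\rm per}$ the orbit $t \mapsto \mathcal U(t,s)\phi_0$ lies in $C^1([0,T]; L^2_{\rm per}) \cap C^0([0,T]; H^2_{\rm per})$ and solves the Schrödinger equation; one then upgrades the regularity to $C^1([0,T]; H^2_{\rm per})$ using that $H(t)$ maps $H^2_{\rm per}$ onto $L^2_{\rm per}$ boundedly with bounded inverse (after the shift by $M+1$) and that $i\partial_t\phi = H(t)\phi$ with $H(t)$ norm-$C^1$ in $\mathcal L(H^2_{\rm per},L^2_{\rm per})$.

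An alternative self-contained route, which I would sketch in case one wants to avoid quoting Kato in full generality, is to build the propagator by a Yosida-type approximation or by a Duhamel/contraction argument: write $\mathcal U(t,s)$ as the limit of products $\prod e^{-i(t_{j+1}-t_j)H(t_j)}$ over finer partitions (the ``time-ordered exponential''), using the uniform lower bound and the common domain to get equicontinuity in the $H^2_{\rm per}$-graph norm, and the norm-Lipschitz dependence $\|H_1(t)-H_1(t')\|_{\mathcal L(H^2_{\rm per},L^2_{\rm per})} \le C|t-t'|$ to get a Cauchy estimate on the products. Unitarity is automatic since each factor is unitary; uniqueness follows from a Gronwall estimate on the difference of two solutions in $L^2_{\rm per}$, valid once one knows the solution stays in $H^2_{\rm per}$.

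I expect the main (only real) obstacle to be bookkeeping rather than conceptual: one must check that the abstract hypotheses of Kato's theorem — stability of the family $\{H(t)\}$, existence of a common core, and the measurability/regularity of $t\mapsto H(t)$ relative to a fixed admissible pair of spaces — are met with constants uniform on $[0,T]$, and in particular that the $H_0$-bound constants $a,b$ being uniform in $t$ (as assumed in \eqref{eq:hyp_lap_bounded}) indeed translates into a uniform equivalence of graph norms and a uniform quasi-boundedness $H(t) \ge -M$. The $L^4_{\rm per}$ regularity of $\mathcal A$ is what makes $H_1(t)$ genuinely $H_0$-bounded with bound $<1$ on $L^2_{\rm per}$ (via Sobolev embedding in the bounded geometry of the torus), so I would cite the relevant Kato--Rellich/Leinfelder--Simader type estimate to pin down that $a$ can be taken arbitrarily small, which in passing also re-proves the self-adjointness claim cleanly.
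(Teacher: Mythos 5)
Your approach matches the paper's: Kato--Rellich gives self-adjointness of $H(t)$ on the common domain $H^2_{\rm per}$, the uniform constants $a,b$ yield uniform equivalence of the graph norm of $H(t)$ with the $H^2_{\rm per}$-norm, and then one invokes an abstract propagator theorem (the paper cites Teufel's Proposition~2.1, pointing also to Reed--Simon Theorem~X.70, which you cite as well). The one substantive step you assert rather than prove is the hard direction of the norm equivalence, $\|\phi\|_{H^2_{\rm per}} \le C\,\|(H(t)+\mu)\phi\|_{L^2_{\rm per}}$ uniformly in $t$, which the paper establishes via the resolvent factorization $(H(t)+\mu) = \bigl(1 + H_1(t)(H_0+\mu)^{-1}\bigr)(H_0+\mu)$ with $\mu > b/(1-a)$, so that $\|H_1(t)(H_0+\mu)^{-1}\| \le a + b/\mu < 1$ and the first factor is invertible with uniformly bounded inverse.
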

The proof of the above lemma is postponed to Appendix~\ref{sec:unitaryprop}.

\medskip

For $t \in [0,T], k \in \R^d$, we have $\widetilde{H}^\eps_{\beta,k}(t) = H_0 + H_{1}(t)$, with
 \begin{align*}
   H_{1}(t) = \frac{1} 2 \Big[ (-i \nabla) \cdot (\mathcal A + k - \varepsilon e_{\beta} t) +  (\mathcal A + k - \varepsilon e_{\beta} t) \cdot (-i \nabla)  + (\mathcal A + k - \varepsilon e_{\beta} t)^{2} \Big]+ V.
 \end{align*}
 Using the Sobolev embeddings
 $H^{2}_{\per} \subset L^{\infty}_{\per}$,
 $H^{1}_{\per} \subset L^{6}_{\per}$ (recall that we assume $d\le 3$), the Coulomb gauge choice
 $\nabla \cdot \mathcal A = 0$ and the fact that
 $\mathcal{A}\in L^4_{\rm per}(\R^d;\R^d)$ and
 $V \in L^2_{\rm per}(\R^d;\R)$, it is standard that $H_{1}$ satisfies
 the conditions of Lemma~\ref{lemmaUnitaryPropagator}, and the result follows.

 \medskip

\noindent
{\bf Proof of the third assertion.} We first compute the fibers of the $\cR$-periodic operator 
$\gamma_{\beta}^{\varepsilon}(t)$. Using~(\ref{eq:fibers_eiqx}), we have 
\begin{align}
  \gamma^\eps_{\beta,k}(t) &= \left(G_{\eps t \vec{e}_\beta}^*  \widetilde{\mathcal{U}}^\eps_\beta(t) \gamma(0) \widetilde{\mathcal{U}}^\eps_\beta(t)^* G_{\eps t \vec{e}_\beta} \right)_k \\
  &= \left( \widetilde{\mathcal{U}}^\eps_\beta(t) \gamma(0) \widetilde{\mathcal{U}}^\eps_\beta(t)^*  \right)_{k+\eps t\vec{e}_\beta}  \nonumber \\ 
  &= \widetilde{\mathcal{U}}^\eps_{\beta,k+\eps t\vec{e}_\beta}(t) \gamma_{k+\eps t\vec{e}_\beta}(0) \widetilde{\mathcal{U}}^\eps_{\beta,k+\eps t\vec{e}_\beta}(t)^* \label{eq:gebkt} \\
  &= \sum_{n=1}^{N_{k+\eps t\vec{e}_\beta}}|\widetilde{\mathcal{U}}^\eps_{\beta,k+\eps t\vec{e}_\beta}(t) u_{n,k+\eps t\vec{e}_\beta}\rangle\langle \widetilde{\mathcal{U}}^\eps_{\beta,k+\eps t\vec{e}_\beta}(t) u_{n,k+\eps t\vec{e}_\beta}|. \nonumber
\end{align}
Since the $u_{n,k}$ are in $H^2_{\per}$, we deduce that 
$$
(J_{\alpha} \gamma_{\beta}^{\varepsilon}(t))_{k} = - (-i \nabla + k +
\mathcal A) \cdot e_{\alpha} \gamma^\eps_{\beta,k}(t)
= - \partial_{\alpha} H_{k} \gamma^\eps_{\beta,k}(t)
$$
is trace-class (and finite-rank) uniformly in $k \in \BZ$ and
therefore that the current $j^\eps_{\alpha,\beta}(t)= \underline {\rm Tr}\left(J_\alpha
    \gamma^\eps_\beta(t)\right)$ is well-defined. 
    
    \medskip 

As the function $k \mapsto  \Tr(\partial_{\alpha} H_{k} \gamma^\eps_{\beta,k}(t))$ is $\cR^\ast$-periodic, we also have
$$
j^\eps_{\alpha,\beta}(t) = -  (2\pi)^{-d}\int_{\BZ} \Tr(\partial_{\alpha} H_{k} \gamma^\eps_{\beta,k}(t)) \, dk= -  (2\pi)^{-d}\int_{\BZ} \Tr(\partial_{\alpha} H_{k-\eps \vec{e}_\beta t} \gamma^\eps_{\beta,k-\eps \vec{e}_\beta t}(t)) \, dk.
$$


\section{Perturbation theory for time-dependent Hamiltonians}
\label{sec:adiabatic}

In this section we consider the dynamics generated by a Hamiltonian
$H(s) = H(\varepsilon t)$, and in particular its action on eigenspaces
of $H(0)$. We begin with some elementary properties of the Liouvillian
in Section~\ref{sec:louvillian_intro}, then use it to study
subspace perturbation theory in Section~\ref{subsec:subspace}. We
establish an adiabatic theorem in Section~\ref{subsec:adiabatic},
and use it to study the time-dependent Hamiltonian
$H_{k-\varepsilon e_{\beta} t}$ in Section~\ref{subsec:appli}.
Finally, we prove a result in linear response with a remainder
independent of the gap in Section~\ref{subsec:lr}.

\subsection{The Liouvillian and its partial inverse}
\label{sec:louvillian_intro}
In order to formulate and interpret our results, it is convenient to
make use of the formalism of the Liouvillian and its partial inverse,
a classical tool in adiabatic theory and eigenvalue perturbation
theory \cite{teufel2003adiabatic, Kato}, although sometimes used
implicitly. This formalism was for instance used in the context of
transport properties in \cite{Bachmann2018,Teufel2019,Monaco2017}.
Recall that if $h$ is a bounded self-adjoint operator on a Hilbert
space ${\mathcal H}_{\rm f}$, the Liouvillian $L_h$ associated with
$h$ is the bounded linear operator on
${\mathcal L}({\mathcal H}_{\rm f})$ (such a mathematical object is
sometimes called a superoperator in the physics literature) defined by
\begin{equation}\label{eq:Liouvillian}
\forall A \in {\mathcal L}(\cH_{\rm f}), \quad L_h A = [h,A].
\end{equation}
Note that if $A$ is self-adjoint, then $L_{h} A$ is anti-self-adjoint
($iL_hA=i[h,A]$ is self-adjoint). The restriction of $L_h$ to the
space $\Schatten_2(\cH_{\rm f})$ of Hilbert-Schmidt operators on $\cH_{\rm f}$ is self-adjoint: for all $A,B \in \Schatten_2(\cH_{\rm f})$,
\begin{equation}\label{eq:LhAB}
(L_hA,B)_{\Schatten_2} = \Tr\left( [h,A]^\ast  B\right)  = \Tr\left(  (A^\ast h-hA^\ast) B\right) =  \Tr\left( A^\ast (hB-Bh) \right) = (A,L_hB)_{\Schatten_2}.
\end{equation}
The operator $L_h$ is to density matrices what the Hamiltonian $h$ is
to pure states: it is the infinitesimal generator of the
norm-continuous unitary group $({\mathfrak U}_h(t))_{t \in \R}$ on
$\mathcal L(\mathcal H_{\rm f})$ defined by
\begin{equation}\label{eq:Liouvillian2}
\forall A \in \mathcal L(\cH_{\rm f}), \quad {\mathfrak U}_h(t) A = e^{-ith}Ae^{ith}.
\end{equation}

In the  case when $h$ is an unbounded self-adjoint operator,
\eqref{eq:Liouvillian} does not make sense for all $A \in {\mathcal
  L}(\cH_{\rm f})$, but it is still possible to define the Liouvillian
$L_h$ as the infinitesimal generator of the strongly-continuous
unitary group $({\mathfrak U}_h(t))_{t \in \R}$ on $\mathcal
L(\mathcal H_{\rm f})$ defined by \eqref{eq:Liouvillian2}. It is then
an unbounded operator on $\mathcal L(\mathcal H_{\rm f})$,
self-adjoint on $\Schatten_2(\cH_{\rm f})$.

\medskip

If $\mathcal H_{f}$ is of  finite-dimension~{$N_{f}$}, the action of $L_{h}$ is
easily understood in an orthonormal eigenbasis $(e_{n})_{1\leq n\leq N_{f}}$ of $h$ with associated
 eigenvalues {$\lambda_{1}\leq\cdots\leq \lambda_{N_f}$}. Then,
\[
L_{h} |e_{n}\rangle\langle e_{m}| = (\lambda_{n} - \lambda_{m})
|e_{n}\rangle\langle e_{m}|.
\]
The operator $L_{h}$ is not invertible (for instance,
$L_{h} |e_{n} \rangle\langle e_{n}| = 0$). However, it is invertible
when restricted to the subspace of block off-diagonal matrices, i.e.
matrices $A$ such that $A_{nn'} = A_{mm'} = 0$ for $n,n' \le N < m,m'$
for a given $N$ such that $\lambda_{N+1} > \lambda_{N}$. Its partial
inverse $L_{h,N}^{+}$ is given by
\begin{align}\nonumber 
  L_{h,N}^{+}\Bigl( \sum_{1 \le n \le N,\, N<
    m  \le N_f} A_{mn} |e_{m}\rangle\langle e_{n}| &+ A_{nm}
  |e_{n}\rangle\langle e_{m}| \Bigr) \\
    \label{eq:explicit_inverse_liouvillian}
  &= \Bigl( \sum_{1 \le n \le N, \,N<
    m \le N_f} \frac{A_{mn} |e_{m}\rangle\langle e_{n}| - A_{nm}
  |e_{n}\rangle\langle e_{m}|}{\lambda_{m} - \lambda_{n}} \Bigr)
\end{align}
and $L_{h,N}^{+}$ is bounded in operator norm by $\frac{1}{\lambda_{N+1}-\lambda_{N}}$.

\medskip 

More generally, if $h$ is an unbounded self-adjoint operator, let $I$ be a closed bounded interval of~$\R$,
and assume that
$$
g := \min\left( 1,{\rm dist}\left(I,\sigma(h) \setminus (\sigma(h) \cap I) \right) \right) > 0.
$$
The associated spectral projector is
\begin{equation}
  P_{I,h}:={\mathds 1}_I(h)= \frac{1}{2\pi i} \oint_{\mathcal C} (z-h)^{-1} \, dz, \label{eq:int_rep}
\end{equation}
where $\mathcal C$ is a Cauchy contour in the complex plane such that
$\sigma(h) \cap I$ is inside $\mathcal C$ and $\sigma(h) \setminus
(\sigma(h) \cap I)$ is outside ${\mathcal C}$.
Generalizing the terminology of the finite-dimensional case, we call
off-diagonal operators (with respect to the splitting of
$\mathcal H_{f}$ induced by $P_{I,h}$) the elements of the closed
subspace
$$
\mathcal L^{\rm OD}_{h,I}:=\left\{ A \in \mathcal L(\cH_{\rm f}) \; | \; P_{h,I}AP_{h,I}=(1-P_{h,I})A(1-P_{h,I})=0 \right\}
$$
of $\mathcal L(\cH_{\rm f})$. This defines a
$\Schatten_{2}$-orthogonal splitting of operators into their diagonal
and off-diagonal parts. It is easily seen that
$\mathcal L^{\rm OD}_{h,I}$ is $L_h$-stable, and that $L_h$ is
invertible on $\mathcal L^{\rm OD}_{h,I}$ with a bounded inverse. We denote its partial
inverse by $L_{h,I}^{+}$, extended to all of
$\mathcal L(\cH_{f})$ by imposing that it vanishes on diagonal
operators. We then have
\begin{align*}
  L_{h,I} L_{h,I}^{+} A = L_{h,I}^{+} L_{h,I} A = P_{h,I} A (1-P_{h,I}) + (1-P_{h,I}) A P_{h,I}
\end{align*}
for all $A \in \mathcal L(\cH_{f})$.

It is easy to check that $L_{h,I}^{+}$ has an explicit contour integral
representation:
\begin{align}
&L_{h,I}^{+} A=  \frac{1}{2\pi i} \oint_{\mathcal C} (z-h)^{-1} [P_{h,I},A]  (z-h)^{-1} \, dz,  \quad \forall A \in \mathcal \mathcal {\cal L}(\mathcal H_{\rm f}),  \label{eq:inv_Liouvillian}
\end{align}
where $\mathcal C$ is a contour as above. From~\eqref{eq:inv_Liouvillian}, we see that, when $P_{h,I}$ is of finite rank~$\cal N$, $L_{h,I}^+A$ is of rank of most $2{\cal N}$.


\subsection{Subspace perturbation theory}
\label{subsec:subspace}

The Liouvillian is a powerful tool to write concisely the results of
subspace perturbation theory, which studies the time dependence of a
gapped subspace of a time-dependent Hamiltonian. We consider $T > 0$
and $(H(s))_{s \in [0,T)}$ a family of self-adjoint operators on a
Hilbert space~$\cH_{\rm f}$ sharing the same domain
$D \subset {\mathcal H}_f$ and satisfying the following assumptions:
  \begin{description}
  \item[H1] $H(s) \ge 1$ for all $s \in [0,T)$;
  \item[H2] for each $\phi \in D$, the map $s \mapsto H(s) \phi$ is in
    $C^{n}([0,T),\cH_{\rm f})$ for some $n \ge 1$. For all $1 \le l \le n$, the
    operator $H^{(l)}(s)$ is self-adjoint on
    $\cH_{\rm f}$ for all $s \in [0,T)$, and
    \begin{align} \label{eq:alphal}
      \alpha_{l} := \sup_{s \in [0,T)} \|H^{(l)}(s) H(s)^{-1}\|_{\cal L(\cH_{\rm f})} < \infty;
    \end{align}
\item[H3] there exist $M \in \R_+$ and bounded continuous functions
  $a_\pm : [0,T] \to \R$ with $0 \le a_- \le a_+ \le M$ defining
  bounded closed intervals $I(s)=[a_-(s),a_+(s)] \subset \R$ such
  that, for all $s \in [0,T),$
\begin{align*}
 &g(s):= \min \left( 1,  {\rm dist}(I(s),\sigma(H(s)) \setminus (\sigma(H(s)) \cap I(s))) \right) > 0, \\
& P(s):=\1_{I(s)}(H(s)) \mbox{ has a finite (constant) rank ${\cal N} \in \N^\ast$},
\end{align*}
\end{description}
Under these assumptions, we set 
\[ L^{+}(s):=L_{H(s),I(s)}^{+}.\]

\begin{proposition}
  \label{prop:adiab1}
  Assume {\rm {\bf H1}}, {\rm {\bf H2}} and {\rm {\bf H3}}.
Then, $P \in C^{n}([0,T), \mathcal L(\cH_{\rm
f}))$,  $L^{+} \in C^{n}([0,T), \mathcal L(\mathcal L(\cH_{\rm
f})))$, and
\begin{align}
  \dot{P}(s) &= L(s)^{+}[P(s), \dot{H}(s)]. \label{eq:Pprime}
\end{align}
Furthermore, there exist constants $C_1,\cdots,C_n \in \R_+$ depending only on
$\alpha_{1}, \cdots, \alpha_n$ and $M$ such that the following bounds hold
for all $0 \le l \le n$, $s \in [0,T)$ and $A \in \mathcal L(\cH_{\rm f})$:
\begin{align}
  \label{eq:bound_HP0}
  \|H(s) P(s)\| &\le M, \\
  \label{eq:bound_HP}
  \|H(s) P^{(l)}(s)\| &\le \frac{C_l}{g(s)^{l+1}},\\
  \label{eq:bound_HL}
  \|H(s) (L^{+})^{(l)}(s) A\| &\le \frac{C_l}{ g(s)^{l+3}} \|A\|.
\end{align}
In addition, $P^{(l)}(s)$ has rank at most $(l+1) \mathcal N $, and $(L^+)^{(l)}(s)
A$ has rank at most $c_{l}\mathcal N$ where $c_{l}$ is a constant
that only depends on $l$ (in particular, $c_0=2$ and $c_1=10$).
\end{proposition}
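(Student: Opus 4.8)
The plan is to reduce everything to resolvent estimates through the two Cauchy integral representations \eqref{eq:int_rep} and \eqref{eq:inv_Liouvillian}. First I would fix, for each $s\in[0,T)$, the contour $\mathcal{C}(s)$ equal to the boundary of the rectangle $[a_-(s)-\tfrac{g(s)}{2},\,a_+(s)+\tfrac{g(s)}{2}]\times[-\tfrac{g(s)}{2},\tfrac{g(s)}{2}]$; since $0\le a_-\le a_+\le M$ and $g(s)\le 1$ it has length $\le 2M+4$, lies in $\{|z|\le M+1\}$, separates $\sigma(H(s))\cap I(s)$ from the rest of $\sigma(H(s))$, and, writing $H(s)(z-H(s))^{-1}=z(z-H(s))^{-1}-\mathrm{Id}$, yields the uniform estimates $\|(z-H(s))^{-1}\|\le 2/g(s)$ and $\|H(s)(z-H(s))^{-1}\|\le(2M+3)/g(s)$ on $\mathcal{C}(s)$. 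By \textbf{H2} the operators share a domain and $s\mapsto H(s)\phi$ is $C^n$, which gives norm-resolvent continuity; hence near any $s_0$ the contour $\mathcal{C}(s_0)$ still works for $H(s)$, so locally $P(\cdot)=\frac1{2\pi i}\oint_{\mathcal{C}(s_0)}(z-H(\cdot))^{-1}\,dz$, and likewise $L^+(\cdot)$ is given by \eqref{eq:inv_Liouvillian} with this fixed contour. Differentiating under the frozen integral sign, with $\partial_s(z-H(s))^{-1}=(z-H(s))^{-1}H'(s)(z-H(s))^{-1}$ and $\|H^{(l)}(s)H(s)^{-1}\|\le\alpha_l$ from \eqref{eq:alphal}, gives $P\in C^n$ and $L^+\in C^n$ and lets me compute derivatives at any point with a fixed contour.

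For \eqref{eq:Pprime}: $P(s)=\1_{I(s)}(H(s))$ is a spectral projector of $H(s)$, hence commutes with it, and differentiating $[H(s),P(s)]=0$ gives $[H(s),\dot P(s)]=[P(s),\dot H(s)]$ (this is a bounded operator: from $\dot P(s)=\frac1{2\pi i}\oint_{\mathcal{C}(s)}(z-H(s))^{-1}H'(s)(z-H(s))^{-1}\,dz$ and the estimates above, $H(s)\dot P(s)$ and $\dot P(s)H(s)=(H(s)\dot P(s))^\ast$ are bounded). Differentiating $P(s)^2=P(s)$ shows $P(s)\dot P(s)P(s)=(1-P(s))\dot P(s)(1-P(s))=0$, so $\dot P(s)$ is off-diagonal relative to $P(s)$; since $L^+(s)L_{H(s)}$ is the projection onto off-diagonal operators, $L^+(s)L_{H(s)}\dot P(s)=\dot P(s)$, and as $L_{H(s)}\dot P(s)=[H(s),\dot P(s)]=[P(s),\dot H(s)]$ this is \eqref{eq:Pprime}.

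The norm bounds use the same contour. Estimate \eqref{eq:bound_HP0} is functional calculus, since $\sigma(H(s))\cap I(s)\subset[1,M]$. For \eqref{eq:bound_HP} I iterate $\partial_s(z-H(s))^{-1}=(z-H(s))^{-1}H'(s)(z-H(s))^{-1}$: by Leibniz, $\partial_s^l(z-H(s))^{-1}$ is a finite sum, with coefficients depending only on $l$, of words $R\,H^{(j_1)}R\cdots H^{(j_m)}R$ with $R=(z-H(s))^{-1}$, $j_i\ge1$, $\sum j_i=l$, so $m\le l$; putting $H(s)$ in front and pairing each resolvent with its left neighbour through $\|H^{(j)}(s)R\|\le\|H^{(j)}(s)H(s)^{-1}\|\,\|H(s)R\|\le\alpha_j\|H(s)R\|$ together with the resolvent bounds above gives $\|H(s)\partial_s^lR\|\le C\,g(s)^{-(m+1)}\le C\,g(s)^{-(l+1)}$ with $C=C(l,M,\alpha_1,\dots,\alpha_l)$, and integration over $\mathcal{C}(s)$ yields \eqref{eq:bound_HP}. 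Estimate \eqref{eq:bound_HL} is the same argument applied to \eqref{eq:inv_Liouvillian}: distributing $l$ derivatives among the two resolvents (contributing $g(s)^{-(a+1)}$ and $g(s)^{-(c+1)}$, the second via $\|\partial_s^cR\|=\|H(s)^{-1}H(s)\partial_s^cR\|$) and the middle factor $[P(s),A]$, using $\|[P^{(b)}(s),A]\|\le 2\|P^{(b)}(s)\|\,\|A\|\le 2\|H(s)P^{(b)}(s)\|\,\|A\|\lesssim g(s)^{-(b+1)}\|A\|$ from \eqref{eq:bound_HP} (with $b=0$ handled by $\|P(s)\|\le1$), and $a+b+c=l$, produces the exponent $l+3$.

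For the rank counts, $\mathrm{rank}\,P(s)=\mathcal{N}$ by \textbf{H3}; for $l\ge1$ I split $P^{(l)}(s)$ into blocks relative to $Q(s)=1-P(s)$: $P P^{(l)}P$ and $P P^{(l)}Q$ have range in $\mathrm{Ran}\,P(s)$, $Q P^{(l)}P$ has rank $\le\mathcal{N}$, and $Q P^{(l)}Q$ is handled by differentiating $Q(s)P(s)Q(s)=0$ $l$ times and using the induction hypothesis, giving $\mathrm{rank}\,P^{(l)}(s)\le(l+1)\mathcal{N}$ (for $l=1$, $\dot P=P\dot PQ+Q\dot PP$). Similarly $L^+(s)A$ splits as $P(s)L^+(s)A\,Q(s)+Q(s)L^+(s)A\,P(s)$, each summand factoring through the $\mathcal{N}$-dimensional space $\mathrm{Ran}\,P(s)$, so rank $\le 2\mathcal{N}=c_0\mathcal{N}$; differentiating these block identities and inducting with $\mathrm{rank}\,P^{(j)}(s)\le(j+1)\mathcal{N}$ gives $\mathrm{rank}\big((L^+)^{(l)}(s)A\big)\le c_l\mathcal{N}$, with $c_1=10$ obtained by writing out the first derivative. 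The step I expect to be most delicate is precisely this rank bookkeeping: a crude decomposition gives an exponentially growing rank bound, and obtaining the linear growth requires systematically exploiting that every "bad" block has range, or co-range, inside the $\mathcal{N}$-dimensional subspace $\mathrm{Ran}\,P(s)$, so that only the $Q P^{(l)}Q$ piece (treated via $QPQ=0$) feeds the recursion. Everything else is routine, since all estimates are local in $s$ and uniform on the compact contour.
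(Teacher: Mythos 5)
Your contour construction, the resolvent bounds, the derivation of \eqref{eq:Pprime} from $[H(s),P(s)]=0$ and the off-diagonality of $\dot P(s)$, and the inductive Leibniz estimate behind \eqref{eq:bound_HP} and \eqref{eq:bound_HL} coincide with the paper's proof step by step.

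The one place you take a genuinely different route is the rank count, which you correctly flag as the delicate point. The paper sidesteps the $QP^{(l)}Q$ recursion entirely via a parallel-transport frame: solving $\dot u_n(s)=\dot P(s)u_n(s)$ with $(u_n(0))_{n=1,\dots,\mathcal N}$ an orthonormal basis of $\operatorname{Ran}P(0)$ produces a $C^n$ orthonormal basis of $\operatorname{Ran}P(s)$, so that $P(s)=\sum_{n=1}^{\mathcal N}|u_n(s)\rangle\langle u_n(s)|$ and Leibniz gives
\[
P^{(l)}(s)=\sum_{n=1}^{\mathcal N}\sum_{m=0}^{l}\binom{l}{m}\,|u_n^{(m)}(s)\rangle\langle u_n^{(l-m)}(s)|,
\]
visibly a sum of exactly $(l+1)\mathcal N$ rank-one operators, with no recursion and no splitting into $P/Q$ blocks. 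For $(L^+)^{(l)}(s)A$ the paper again avoids block decompositions: differentiating the integral representation \eqref{eq:inv_Liouvillian} with a frozen contour, Leibniz on $(z-H)^{-1}[P,A](z-H)^{-1}$ makes the rank of each term come solely from the middle factor $[P^{(k_2)},A]$, yielding $c_l=2\sum_{k_1+k_2+k_3=l}(k_2+1)$. Your recursion on $QPQ=0$ can be made to close — at $l=2$, writing $\dot P=A+A^{*}$ with $A=P\dot P Q$, one checks that $QP''Q=2A^{*}A$, hence rank $\le\mathcal N$ — but this required using $PQ=QP=0$ to kill and merge terms; a naive application of the induction hypothesis to each word in the Leibniz expansion of $(QPQ)^{(l)}=0$ overshoots $(l+1)\mathcal N$, and the bookkeeping grows quickly with $l$. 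The parallel-transport identity buys you the sharp linear count in one line, which is precisely why the paper uses it; if you keep your route, you should carry out the recursion explicitly at least once and verify the constant actually stays linear.
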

\begin{remark}
  The powers of the gap in the bounds \eqref{eq:bound_HP} and
  \eqref{eq:bound_HL} are too pessimistic, as could be shown by a more
  detailed analysis. For instance, in the case $l=0$, $L^{+}(s)$ can be
  seen from the arguments at the beginning of this section to be
  bounded by a constant times $\frac 1 {g(s)}$. Similarly, the operator $\dot{P}$ is
  bounded by a constant times $\frac 1 {g(s)}$, using
  \eqref{eq:Pprime}. Nevertheless, the above bounds are more straightforward to
  establish and will suffice for our purposes. 
\end{remark}

\begin{proof}
  Differentiating $L_{H(s)} P(s) = 0$, we get
\[
  L_{H(s)} \dot{P}(s) = [P(s), \dot{H}(s)].
\]
Since both $[P(s), \dot{H}(s)]$ and $\dot{P}(s)$ are
off-diagonal operators (the first by direct calculation, the second by
differentiating the relationship $P(s)^{2} = P(s)$), we deduce~\eqref{eq:Pprime}.
By the functional calculus, $\|H(s)P(s)\|=\|H(s)\1_{I(s)}(H(s))\| \le a_+(s) \le M$, whence~\eqref{eq:bound_HP0}.

\medskip 

  In the following we take for $\mathcal C(s)$ the rectangular
  contour centered at the center of $I(s)$, of length $|I(s)| + g(s)$ and
  height $g(s)$, so that
  \begin{equation}\label{eq:bound0}
   |{\cal C}(s)| \le 2M+4 \quad \mbox{and for all } z \in {\cal C}(s), \quad \left\| \frac{1}{z-H(s)} \right\| \le \frac{2}{g(s)}. 
  \end{equation}
  We use the integral representation~\eqref{eq:int_rep}:
  \begin{align}
    \label{eq:P_contour}
    P(s) = \frac{1}{2\pi i} \oint_{{\cal C}(s)} \frac 1 {z-H(s)} \, dz.
  \end{align}
Using for all $z \in {\cal C}(s)$ the bound
\begin{align}
  \label{eq:bound_resolvent}
  \left\|\frac{H(s)}{z-H(s)}\right\| = \sup_{\lambda \in \sigma(H(s))} \left| \frac \lambda {z-\lambda} \right| \le 1 + \sup_{\lambda \in \sigma(H(s))} \left| \frac z {z-\lambda} \right| \le 1 + \frac {2(M+g(s))} {g(s)} \le \frac{2M+3}{g(s)}
\end{align}
establishes \eqref{eq:bound_HP} for $l = 0$.

\medskip 

The contour ${\cal C}(s)$ in \eqref{eq:P_contour} above can be kept
fixed equal to ${\cal C}(s_0)$ for $s$ in a neighborhood of any
$s_0 \in [0,T)$. Using
\begin{align}
  \label{eq:diff_resolvent}
  \frac{d}{ds}\frac 1 {z-H(s)} = \frac 1 {z-H(s)} \dot{H}(s) \frac 1 {z-H(s)}
\end{align}
it follows that $P \in C^{1}([0,T), {\mathcal L}(\cH_{\rm f}))$ and
\begin{align*}
  \dot{P}(s) = \frac{1}{2\pi i} \oint_{{\cal C}(s)} \frac 1 {z-H(s)} \dot{H}(s) \frac 1 {z-H(s)} \, dz.
\end{align*}
Using the bounds \eqref{eq:alphal}, \eqref{eq:bound0} and \eqref{eq:bound_resolvent}, it follows that
\begin{align*}
  \|H(s) \dot{P}(s)\| \le 
  \frac{(2M+3)(2M+4)\alpha_1}{\pi g(s)^2}
  \end{align*}
which proves \eqref{eq:bound_HP} for $l=1$.
The general case for $l > 1$ follows from repeated application of the
chain rule to~\eqref{eq:P_contour} and \eqref{eq:diff_resolvent}, and the bounds \eqref{eq:alphal}, \eqref{eq:bound0} and \eqref{eq:bound_resolvent}.

\medskip 

The differentiability and bounds on the inverse Liouvillian are treated using the same arguments on
the representation
\begin{align*}
  L^{+}(s)A=  \frac{1}{2\pi i} \oint_{\mathcal C(s)} \frac 1 {z-H(s)} [P(s),A]  \frac 1 {z-H(s)} \, dz.
\end{align*}

\medskip 

Let $(u_{n}^{0})_{n = 1, \dots, \mathcal N}$ be an orthonormal basis of $P(0)$.
Then the solutions to the parallel transport equation $\dot{u}_{n}(s) = \dot{P}(s) u_{n}(s)$ with $u_{n}(0) =
u_{n}^{0}$ are easily checked to be a $C^{n}$ orthogonal basis of
$\Ran P(s)$.
It follows that one has 
\begin{align*}
  P^{(l)}(s) = \sum_{n=1}^{\mathcal N} \sum_{m=0}^l {l \choose m} | u_{n}^{(m)}(s)\rangle\langle  u_{n}^{(l-m)}(s)|.
\end{align*}
Therefore,  $P^{(l)}(s)$ is of rank at most $(l+1) \mathcal N$. From the integral representation of
$L^{+}(s)$ (see~\eqref{eq:inv_Liouvillian}), it follows that, for any bounded operator $A$, $L^{+}(s)A$
is of rank at most $2 \mathcal N$. Its derivatives are sums of terms
which all contain as a factor $P(s)$ or one of its derivative, and the
result follows with 
$\dps c_l=2 \sum_{k_1+k_2+k_3=l, \, k_j \in \N} (k_2+1)$.
\end{proof}

\subsection{Adiabatic theory}
\label{subsec:adiabatic}
The following proposition is an adaptation in our context of the
classical adiabatic theorem that the Schrödinger evolution with a
slowly evolving Hamiltonian $H(\varepsilon t)$ approximately preserves gapped
eigenspaces \cite{teufel2003adiabatic}. We explicitly compute the
corrections to first order in $\varepsilon$.
\begin{proposition}
  \label{prop:adiab2}
  Assume the same hypotheses as in Proposition \ref{prop:adiab1}. Let $(U^\eps(t,t'))_{0 \le t' \le t < \eps^{-1}T}$ be the propagator associated with the family of time-scaled Hamiltonians $(H(\eps t))_{t \in [0,\eps^{-1}T)}$, i.e. 
  \begin{align} \label{eq:Uepstt'}
    i   \frac{\partial U^\eps}{\partial t}(t,t') = H(\eps t) U^\eps(t,t'), \quad t \in [t',\eps^{-1}T) \qquad 
    U^\eps(t',t') = {\rm Id},
  \end{align}
and $U^\eps(t)=U^\eps(t,0)$. For all $\eps \ge 0$ and $t \in [0,\eps^{-1}T)$, it holds
  \begin{align} \label{eq:adiab1}
    U^\eps(t) P(0) U^\eps(t)^{*} &= P(\eps t)+i  \eps L^{+}(\eps t) \dot P(\eps t)-i \eps U^\eps(t) \left(L^{+}(0) \dot P(0)\right) U^\eps(t)^{*} + R^\eps(t),
  \end{align}
  with  
  \begin{align} \label{eq:adiab2}
    R^\eps(t)
 = -i \eps^2 \int_0^t U^\eps(t,t') \left. \frac{d}{ds} \left(L(s)^{-1}\dot P(s)\right)\right|_{s=\eps t'} U^\eps(t,t')^* \, dt'.  
  \end{align}
In addition, we have the following estimates:
\begin{align}
\forall 0 \le t' \le t < \eps^{-1} T, \quad &\left\| H(\eps t) U^\eps(t,t') H(\eps t')^{-1} \right\|_{\cL(\cH_{\rm f})} \le e^{\alpha_1 \eps (t-t')},\label{eq:boundHUH-1}\\
&\left\| H(\eps t)^{1/2} U^\eps(t,t') H(\eps t')^{-1/2} \right\|_{\cL(\cH_{\rm f})} \le e^{\alpha_1 \eps (t-t')/2}.  \label{eq:boundHUH-1/2}
\end{align}
\end{proposition}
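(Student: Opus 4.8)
The plan is to verify the adiabatic expansion \eqref{eq:adiab1}--\eqref{eq:adiab2} by direct differentiation, using the "super-adiabatic"/Kato-type ansatz, and then to establish the propagator bounds \eqref{eq:boundHUH-1}--\eqref{eq:boundHUH-1/2} by a Gr\"onwall argument. First I would introduce the right-hand side of \eqref{eq:adiab1} as a candidate, but it is cleaner to reverse-engineer it: define the first-order corrected projector
\[
  \Pi^\eps(s) := P(s) + i\eps L^+(s)\dot P(s),
\]
which by Proposition~\ref{prop:adiab1} is a $C^{n-1}$ family of (no longer exactly idempotent, but close to it) operators with the property that $[\Pi^\eps(s),\text{generator}]$ is one order smaller in $\eps$. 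The key algebraic identity to check is that, writing $A^\eps(t) := U^\eps(t)\,\Pi^\eps(0)\,U^\eps(t)^*$ for the transported corrected projector, one has
\[
  i\frac{d}{dt}\!\left[U^\eps(t)^*\Big(A^\eps(t) - \Pi^\eps(\eps t)\Big)U^\eps(t)\right]
  = -\,U^\eps(t)^*\Big(i\eps\,\dot\Pi^\eps(\eps t) - [H(\eps t),\Pi^\eps(\eps t)]\Big)U^\eps(t),
\]
using only $i\partial_t U^\eps = H(\eps t)U^\eps$ and $i\partial_t U^{\eps *} = -U^{\eps *}H(\eps t)$. The defining property of $L^+$ — namely $L_{H(s)}L^+(s)\dot P(s) = $ the off-diagonal part of $\dot P(s)$, which equals $\dot P(s)$ itself since $\dot P(s)$ is purely off-diagonal — is exactly what makes the bracket $i\eps\dot\Pi^\eps - [H,\Pi^\eps]$ equal to $i\eps^2\frac{d}{ds}(L^+(s)\dot P(s))$, i.e. genuinely $O(\eps^2)$. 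Integrating this identity from $0$ to $t$, conjugating back by $U^\eps(t)$, and rearranging the three $\Pi^\eps$ terms into $P(\eps t) + i\eps L^+(\eps t)\dot P(\eps t) - i\eps U^\eps(t)(L^+(0)\dot P(0))U^\eps(t)^*$ produces \eqref{eq:adiab1} with the remainder \eqref{eq:adiab2}. I would also need the elementary fact that $U^\eps(t)P(0)U^\eps(t)^* = U^\eps(t)\Pi^\eps(0)U^\eps(t)^* - i\eps U^\eps(t)(L^+(0)\dot P(0))U^\eps(t)^*$, which is immediate from the definition of $\Pi^\eps(0)$, so no approximation is lost and the identity is exact.

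For the operator bounds \eqref{eq:boundHUH-1}--\eqref{eq:boundHUH-1/2}, the idea is the standard one for controlling the growth of $\|H(\eps t)U^\eps(t,t')H(\eps t')^{-1}\|$ when $H$ depends on time. Fix $t'$ and set $W(t) := H(\eps t)U^\eps(t,t')H(\eps t')^{-1}$; since $\phi_0 \in D$ implies $U^\eps(t,t')\phi_0$ stays in $D$ with the required regularity (this uses \textbf{H2}, as in Lemma~\ref{lemmaUnitaryPropagator}), we may differentiate on the dense set. One computes
\[
  i\,\partial_t W(t) = H(\eps t)H(\eps t)^{-1}\cdot\big(i\eps \dot H(\eps t)\big)\,U^\eps(t,t')H(\eps t')^{-1} + H(\eps t)\,H(\eps t)U^\eps(t,t')H(\eps t')^{-1},
\]
but it is more transparent to write $\partial_t W = \big(\eps \dot H(\eps t)H(\eps t)^{-1}\big)W + \big(H(\eps t)(-i)\big)\cdots$; the unitary part $e^{-i(t-t')H}$-type factor drops out in norm, leaving $\frac{d}{dt}\|W(t)\| \le \eps\|\dot H(\eps t)H(\eps t)^{-1}\|\,\|W(t)\| \le \eps\,\alpha_1\|W(t)\|$ by \eqref{eq:alphal}, with $\|W(t')\| = 1$. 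Gr\"onwall then gives \eqref{eq:boundHUH-1}. Estimate \eqref{eq:boundHUH-1/2} follows the same way using the interpolation bound $\|H(\eps t)^{1/2}\dot H(\eps t)H(\eps t)^{-1}H(\eps t)^{-1/2}\| = \|H(\eps t)^{-1/2}\dot H(\eps t)H(\eps t)^{-1/2}\| \le \alpha_1$ (valid since $\dot H(\eps t)H(\eps t)^{-1}$ and its adjoint $H(\eps t)^{-1}\dot H(\eps t)$ are both bounded by $\alpha_1$, so by complex interpolation / the Heinz inequality the symmetric version is bounded by $\alpha_1$), which produces the factor $e^{\alpha_1\eps(t-t')/2}$.

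The main obstacle I anticipate is not the formal computation but the justification of all the differentiations: one must know that $\Pi^\eps \in C^1$ (hence $n \ge 2$ is implicitly needed for \eqref{eq:adiab2} to make sense, or one reads $\frac{d}{ds}(L^+(s)\dot P(s))$ distributionally), that $U^\eps(t,t')$ maps $D$ to $D$ and $t \mapsto U^\eps(t,t')\phi$ is $C^1$ into $\cH_{\rm f}$, and that all the products like $H(\eps t)U^\eps(t,t')H(\eps t')^{-1}$ extend to bounded operators on all of $\cH_{\rm f}$ — this last point being exactly what \eqref{eq:boundHUH-1} asserts, so there is a mild bootstrap: one first proves \eqref{eq:boundHUH-1} on the dense domain by the Gr\"onwall argument, deduces the uniform bound, and only then knows that all the operator products appearing in \eqref{eq:adiab1} and \eqref{eq:adiab2} are well-defined bounded operators, so that the Duhamel identity can be read as an identity in $\mathcal L(\cH_{\rm f})$. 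I would also double-check the signs and the placement of the $i$'s carefully, since the three correction terms must cancel the boundary contributions from integrating by parts in time, and a sign error there would be invisible at the level of orders of magnitude.
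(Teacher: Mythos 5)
Your argument is essentially the same integration-by-parts-in-time computation as the paper's; packaging $P(s)+i\eps L^{+}(s)\dot P(s)$ into a "super-adiabatic" projector $\Pi^{\eps}$ and checking a single Duhamel identity is algebraically equivalent to what the paper does by applying its interaction-picture formula \eqref{eq:obs} twice (once to $P(\eps t)$ and once to $L^{+}(\eps t)\dot P(\eps t)$). A few remarks on execution, however. First, your intermediate claim that $i\eps\dot\Pi^{\eps}-[H,\Pi^{\eps}]$ equals $i\eps^{2}\tfrac{d}{ds}(L^{+}\dot P)$ has the wrong prefactor: since $[H(s),P(s)]=0$ and $[H(s),L^{+}(s)\dot P(s)]=\dot P(s)$, one finds $i\eps\dot\Pi^{\eps}-[H,\Pi^{\eps}]=-\eps^{2}\tfrac{d}{ds}(L^{+}\dot P)$, and chasing the signs through the integration does then reproduce the paper's $R^{\eps}(t)=-i\eps^{2}\int_{0}^{t}\cdots$; you flagged that the $i$'s needed care, and indeed this one does. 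Second, for \eqref{eq:boundHUH-1} you cannot literally differentiate $\|W(t)\|$: the operator norm of an operator-valued function is not $C^{1}$ in general. The paper's version of the same idea avoids this by fixing $\psi\in D$, setting $\phi_{\eps}(t)=H(\eps t)U^{\eps}(t,t')\psi$, deriving the Duhamel integral identity $\phi_{\eps}(t)=U^{\eps}(t,t')H(\eps t')\psi+i\eps\int_{t'}^{t}\dot H(\eps t'')H(\eps t'')^{-1}\phi_{\eps}(t'')\,dt''$, and applying the integral form of Gr\"onwall; this is the rigorous form of what you sketch. Third, your route to \eqref{eq:boundHUH-1/2} is the one place that genuinely needs a different idea from what you wrote: the identity $\|H^{1/2}\dot H H^{-1}H^{-1/2}\|=\|H^{-1/2}\dot H H^{-1/2}\|$ is false for non-commuting operators, and redoing a Gr\"onwall for $H(\eps t)^{1/2}U^{\eps}(t,t')\psi$ runs into the fact that $\tfrac{d}{ds}H(s)^{1/2}$ is not $\tfrac12 H(s)^{-1/2}\dot H(s)$. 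The paper instead gets \eqref{eq:boundHUH-1/2} directly from \eqref{eq:boundHUH-1} and $\|U^{\eps}(t,t')\|=1$ by Stein interpolation on the analytic family $z\mapsto H(\eps t)^{z}U^{\eps}(t,t')H(\eps t')^{-z}$ (citing \cite[Section IX.4, Prop.~9]{ReedSimon2}); that is the clean route and is the one you should take rather than the Heinz-inequality detour.
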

\begin{proof}
    The existence and uniqueness of the strongly-continuous unitary propagator $(U^\eps(t,t'))$ satisfying \eqref{eq:Uepstt'} can be obtained using  \eqref{eq:alphal} for $l=1$, and Theorem X.70 and the arguments in the proof of Theorem X.71 in~\cite{ReedSimon2}.
    We pass to the interaction picture defined by $H(\eps t)$ and compute the
    evolution of a $C^1$ time-dependent Hilbert-Schmidt observable $A^\eps(t)$ in that picture:
\begin{equation}\label{eq:obs}
  \frac{d}{dt} \left( U^\eps(t)^{*} A^\eps(t) U^\eps(t) \right)=  
  U^\eps(t)^{*} \left(\dot A^\eps(t) +i [H(\eps t),A^\eps(t)]\right) U^\eps(t).
\end{equation}
We first apply \eqref{eq:obs} to $A^\eps(t) = P(\eps t)$ and obtain
\begin{equation}\label{eq:dUepsdt}
\frac{d}{dt} \left( U^\eps(t)^{*} P(\eps t) U^\eps(t) \right) = 
\eps\ U^\eps(t)^{*} \dot P(\eps t) U^\eps(t).
\end{equation}
Estimating this to be of size $\eps$ is not enough because we look at long time scales. What allows us to proceed further is that this quantity is oscillating on
a timescale of order $O(1)$. Indeed, applying \eqref{eq:obs} to $A^\eps(t) = L^{+}(\eps t) \dot P(\eps t)$, for which
$[H(\eps t),A^\eps(t)] = \dot P(\eps t)$, we obtain
\begin{align*}
 U^\eps(t)^{*} \dot P(\eps t) U^\eps(t) =-i  \frac{d}{dt} \left( U^\eps(t)^{*} (L^{+}(\eps t) \dot P(\eps t)) U^\eps(t) \right) +i U^\eps(t)^{*} \frac{d}{dt} \left(L^{+}(\eps t) \dot P(\eps t)\right) U^\eps (t).
\end{align*}
Integrating \eqref{eq:dUepsdt} over $[0,t]$ and using the above equality leads to
\begin{align*}
  U^\eps(t)^{*} P(\eps t) U^\eps(t) &= 
  P(0) + \eps \int_0^t U^\eps(t')^* \dot P(\eps t') U^\eps(t') \, dt'\\
  &=
  P(0) -i \eps U^\eps(t)^{*} \left(L^{+}(\eps t) \dot P(\eps t)\right) U^\eps(t) +i \eps L^{+}(0) \dot P(0) + r^\eps(t) 
\end{align*}
with 
\[
r^\eps(t) = i \eps \int_0^t U^\eps(t')^*  \frac{d}{dt'} \left(L^{+}(\eps t')\dot P(\eps t')\right) \ U^\eps (t') \ dt' = i \eps^2 \int_0^t U^\eps(t')^*  \left. \frac{d}{ds} \left(L^{+}(s)\dot P(s)\right)\right|_{s=\eps t'} \ U^\eps (t') \ dt' .
\]
This implies
\begin{align} \label{eq:UP0U*}
U^\eps(t) P(0) U^\eps(t)^{*} &= P(\eps t) +i \eps L^{+}(\eps t) \dot P(\eps t) -i \eps U^\eps(t) \left(L^{+}(0) \dot P(0)\right) U^\eps(t)^{*} + R^\eps(t),
\end{align}
with
\[
R^\eps(t) = -i \eps^2 \int_0^t U^\eps(t,t')  \left. \frac{d}{ds} \left(L^{+}(s)\dot P(s)\right)\right|_{s=\eps t'}\  U^\eps(t,t')^*  \ dt',
\]
which establishes \eqref{eq:adiab1}.

\medskip 

Let us now prove \eqref{eq:boundHUH-1}. Let $\psi \in D$. For all $t \in [t',\eps^{-1}T)$, we set $\psi_\eps  (t) = U^\eps(t,t') \psi$ and $\phi_\eps(t)=H(\eps t) \psi_\eps(t)$. We have 
$$
i \frac{d\phi_\eps}{dt} (t) = i \frac{d}{dt} \left( H(\eps t) \psi_\eps(t)  \right) =  H(\eps t) \phi_\eps(t) + i \eps \dot H(\eps t) H(\eps t)^{-1} \phi_\eps(t),
$$
from which we obtain
$$
\phi_\eps(t) = U^\eps(t,t') H(\eps t') \psi + i \eps \int_{t'}^t \dot H(\eps t'') H(\eps t'')^{-1} \phi_\eps(t'') \, dt'',
$$
and finally
$$
\|\phi_\eps(t)\|_{\cH_{\rm f}} \le \|H(\eps t') \psi\|_{\cH_{\rm f}} + \alpha_1 \eps \int_{t'}^t \|\phi_\eps(t'')\|_{\cH_{\rm f}} \, dt''.
$$
By the Gr\"onwall lemma,
$$
\| H(\eps t) U^\eps(t,t') \psi \|_{\cH_{\rm f}} =  \|\phi_\eps(t)\|_{\cH_{\rm f}} \le \|H(\eps t') \psi\|_{\cH_{\rm f}} e^{\alpha_1 \eps (t-t')}.
$$
Applying this inequality to $\psi=H(\eps t')^{-1}\phi$ for all $\phi
\in \cH_{\rm f}$ gives \eqref{eq:boundHUH-1}. We obtain \eqref{eq:boundHUH-1/2} by interpolation (see e.g. \cite[Section IX.4, Proposition 9]{ReedSimon2}).
\end{proof}

The third term
$$
-U^\eps(t) \left(i L^{+}(0) \dot P(0)\right) U^\eps(t)^{*}
$$
of the right-hand side of \eqref{eq:adiab1} is oscillatory, and can be written as the derivative of a bounded function up to higher order terms. Its time-average therefore becomes negligible in the considered regimes. Let us introduce the space 
$$
{\cal L}^{\rm OD}(s):= \left\{ A \in \mathcal L(\cH_{\rm f}) \; | \; P(s)AP(s)=(1-P(s))A(1-P(s))=0 \right\}
$$
of bounded off-diagonal operators relatively to the decomposition $\cH_{\rm f}=\mbox{Ran}(P(s))\oplus \mbox{Ker}(P(s))$.

\begin{lemma} \label{lem:oscillatory}
Under the assumptions of Propositions~\ref{prop:adiab1} and~\ref{prop:adiab2}, we have for any self-adjoint operator
$A \in {\cal L}^{\rm OD}(0)$, 
\begin{equation}
U^\eps(t) A U^\eps(t)^{*} = \frac{d}{dt} \left( i L^+(\eps t) \left( U^\eps(t)AU^\eps(t)^*  \right) \right) + R_A^\eps(t),
\end{equation}
where 
$$
R_A^\eps(t)=2 U^\eps(t) r^\eps(t) A r^\eps(t)  U^\eps(t)^{*} - \Big( U^\eps(t) (1-2P(0)) A r^\eps(t)  U^\eps(t)^{*} + {\rm h.c.} \Big) + \eps i \frac{dL^+}{ds}(\eps t)\Big( U^\eps(t)AU^\eps(t)^*\Big)
$$
and
$$
r^\eps(t)=-i\eps  U^\eps(t)^* \left( L^+(\eps t)  \dot P(\eps t) \right)  U^\eps(t) + i \eps L^+(0)  \dot P(0) + U^\eps(t)^* R^\eps(t)  U^\eps(t).
$$
\end{lemma}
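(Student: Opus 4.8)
The plan is to establish the identity by a direct differentiation, parallel to the interaction-picture computation \eqref{eq:obs}--\eqref{eq:dUepsdt} used for Proposition~\ref{prop:adiab2}. Write $X^\eps(t):=U^\eps(t) A U^\eps(t)^{*}$, so that the claim is equivalent to identifying $R_A^\eps(t):=X^\eps(t)-\frac{d}{dt}\big(iL^{+}(\eps t) X^\eps(t)\big)$ with the claimed formula. First I would differentiate the product $iL^{+}(\eps t)X^\eps(t)$: the map $s\mapsto L^{+}(s)$ is $C^{1}$ in operator norm by Proposition~\ref{prop:adiab1}, and since $A$ is time-independent, a computation parallel to \eqref{eq:obs} gives $\dot X^\eps(t)=-i[H(\eps t),X^\eps(t)]=-iL_{H(\eps t)}X^\eps(t)$ --- legitimate, as in the proof of Proposition~\ref{prop:adiab2}, because in the situations where the lemma is applied $\Ran A\subset D$ with $\|H(0)A\|<\infty$, so that by \eqref{eq:boundHUH-1} the commutator $[H(\eps t),X^\eps(t)]$ is a bounded operator. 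The product rule then yields
\[
\frac{d}{dt}\big(iL^{+}(\eps t) X^\eps(t)\big)=i\eps\,\frac{dL^{+}}{ds}(\eps t)\big(X^\eps(t)\big)+L^{+}(\eps t) L_{H(\eps t)} X^\eps(t).
\]

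Next I would use that, by the properties of the partial inverse recalled in Section~\ref{sec:louvillian_intro}, $L^{+}(\eps t)L_{H(\eps t)}$ is the $\Schatten_2$-orthogonal projection onto the operators off-diagonal with respect to $P(\eps t)$; hence $L^{+}(\eps t)L_{H(\eps t)} X^\eps(t) = X^\eps(t) - \big(P(\eps t)X^\eps(t)P(\eps t)+(1-P(\eps t))X^\eps(t)(1-P(\eps t))\big)$. Substituting this into the display shows that $R_A^\eps(t)$ is the sum of the diagonal part of $X^\eps(t)$ relative to $P(\eps t)$ and of the term $\eps i\,\frac{dL^{+}}{ds}(\eps t)(X^\eps(t))$ coming from the differentiation of the prefactor --- the latter being exactly the last term of the claimed formula. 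To handle the diagonal part, I would substitute the adiabatic expansion \eqref{eq:adiab1}, which (after conjugation back by $U^\eps(t)$) writes $P(\eps t)=U^\eps(t)\big(P(0)+r^\eps(t)\big)U^\eps(t)^{*}$ with $r^\eps(t)$ the self-adjoint operator of the statement; together with $X^\eps(t)=U^\eps(t)AU^\eps(t)^{*}$ this turns the diagonal part of $X^\eps(t)$ relative to $P(\eps t)$ into $U^\eps(t)$ times the diagonal part of $A$ relative to the projection $P(0)+r^\eps(t)$, times $U^\eps(t)^{*}$. Writing the diagonal part of an operator $B$ relative to a projection $Q$ as $\tfrac12\big(B+(1-2Q)B(1-2Q)\big)$ with $1-2(P(0)+r^\eps(t))=(1-2P(0))-2r^\eps(t)$, and using that $A$ is off-diagonal with respect to $P(0)$ (equivalently $(1-2P(0))A(1-2P(0))=-A$) together with $A^{*}=A$ and $r^\eps(t)^{*}=r^\eps(t)$, the leading term $\tfrac12\big(A+(1-2P(0))A(1-2P(0))\big)$ cancels and one is left with exactly $2r^\eps(t)Ar^\eps(t)-\big((1-2P(0))Ar^\eps(t)+\mathrm{h.c.}\big)$, which, conjugated by $U^\eps(t)$, gives the first two terms of $R_A^\eps(t)$.

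The only slightly delicate points are the bookkeeping between the remainder $R^\eps(t)$ and the auxiliary operator $r^\eps(t)$ produced by \eqref{eq:adiab1} --- one must track carefully which of $U^\eps(t)$ or $U^\eps(t)^{*}$ conjugates which term, so that the expansion of $P(\eps t)$ closes up in the form $U^\eps(t)(P(0)+r^\eps(t))U^\eps(t)^{*}$ and the diagonal part reduces as stated --- and the justification of the operator-valued differentiation in the first step given the unboundedness of $H$, handled exactly as in the proof of Proposition~\ref{prop:adiab2} (using $\Ran A\subset D$, $\|H(0)A\|<\infty$, and \eqref{eq:boundHUH-1}). Everything else is a routine rearrangement of the diagonal/off-diagonal decomposition induced by $P(\eps t)$.
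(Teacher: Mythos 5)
Your proof follows the paper's route exactly: differentiate $iL^+(\eps t)X^\eps(t)$ by the product rule, identify $L^+(\eps t)L_{H(\eps t)}$ with the projection onto $P(\eps t)$-off-diagonal operators, and substitute $P(\eps t)=U^\eps(t)(P(0)+r^\eps(t))U^\eps(t)^*$ deduced from \eqref{eq:adiab1}. The only cosmetic difference is that the paper expands the off-diagonal part $P(\eps t)X^\eps(t)(1-P(\eps t))+{\rm h.c.}$ directly while you compute the complementary diagonal part via $\mathrm{diag}_Q(B)=\tfrac12\big(B+(1-2Q)B(1-2Q)\big)$; your algebra (in particular $(1-2P(0))A(1-2P(0))=-A$ for $A\in\mathcal L^{\mathrm{OD}}(0)$) is correct. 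Your explicit remark on justifying the operator-valued differentiation using finite rank of $A$, $\Ran A\subset D$, $\|H(0)A\|<\infty$ and \eqref{eq:boundHUH-1} addresses a point the paper leaves implicit, and is a genuine improvement.

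One sign slip you should fix: from your own display, $R_A^\eps(t)=X^\eps(t)-\frac{d}{dt}\big(iL^+(\eps t)X^\eps(t)\big)$ equals the $P(\eps t)$-diagonal part of $X^\eps(t)$ \emph{minus} $i\eps\frac{dL^+}{ds}(\eps t)X^\eps(t)$, whereas you write it as a "sum". The $\frac{dL^+}{ds}$ contribution to $R_A^\eps(t)$ should therefore be $-\eps i\frac{dL^+}{ds}(\eps t)\big(U^\eps(t)AU^\eps(t)^*\big)$, not $+$. You are silently reproducing a sign typo in the stated lemma (the paper's own proof also contains a sign slip in the intermediate $r^\eps A r^\eps$ computation); note that in the application in Section~\ref{sec:ins}, the formula given there for $\widetilde R_k^\eps(t)$ already carries the corrected minus sign on the $\partial_\beta L^+$ term, consistent with your displayed computation rather than with the phrase "sum".
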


\begin{proof} We have
\begin{align*}
\frac{d}{dt} \left(  i L^+(\eps t) \left(U^\eps(t)AU^\eps(t)^*\right)  \right) &= \eps i \frac{dL^+}{ds}(\eps t) \left( U^\eps(t)AU^\eps(t)^*\right)
+ L^+(\eps t) L(\eps t) \left( U^\eps(t) AU^\eps(t)^*\right)  \\
&= \eps i \frac{dL^+}{ds}(\eps t) U^\eps(t)AU^\eps(t)^* + P(\eps t) U^\eps(t) AU^\eps(t)^* (1-P(\eps t)) + {\rm h.c.},
\end{align*}
and we deduce from \eqref{eq:adiab1} that $P(\eps t) U^\eps(t)=U^\eps(t) \left( P(0)+r^\eps(t) \right)$. We therefore have
\begin{align*}
P(\eps t) U^\eps(t) AU^\eps(t)^* (1-P(\eps t)) + {\rm h.c.} =& \, U^\eps(t) \left( P(0)+r^\eps(t) \right) A (1-P(0)-r^\eps(t) ) U^\eps(t)^*  + {\rm h.c.} \\
=& \,U^\eps(t)  A  U^\eps(t)^*  + \left( U^\eps(t) \left( 1- 2P(0) \right) A r^\eps(t)  U^\eps(t)^*  + {\rm h.c.} \right) \\ & + 2 U^\eps(t) r^\eps(t) A r^\eps(t)  U^\eps(t)^*,
\end{align*}
where we have used that $A=P(0)A(1-P(0))+(1-P(0))AP(0)$.
\end{proof}

\subsection{Application to coherent transport in Bloch representation}
\label{subsec:appli}



Let $H$ be the periodic magnetic Hamiltonian defined in~\eqref{eq:periodic_Hamiltonian}, $J$ the current operator whose components are defined in~\eqref{def:Jalpha}, $\mu_{\rm F}$ the Fermi level,
\begin{align*}
 \mu := 1 + \min \sigma(H) \quad  \mbox{and} \quad   \eta = \max_{|\vec e|\le |e_{\alpha}|,|e_{\beta}|} \| (J\cdot e) (H+\mu)^{-1}\|_{\cL(L^2(\R^d;\C))} < \infty.
\end{align*}

Let $k \in \R^{d}$.
Assume that $\lambda_{N_k+1,k}-\lambda_{N_k,k} > 0$ and set
\begin{align*}
s_k &= \inf \{s> 0 \, | \, g_k(s) =0 \} \quad \mbox{where} \quad
g_k(s):=\min(1,\lambda_{N_k+1,k-s \vec e_\beta}-\lambda_{N_k,k-s \vec
  e_\beta}).
\end{align*}
We consider the family of Hamiltonians
\begin{align}
  H(s) := H_{k - e_{\beta} s} + \mu.
\end{align}
We have
\begin{align}
\dot{H}(s) &= -\partial_{\beta}H_{k-s\vec e_\beta} = -((-i \nabla + k  + \mathcal A - s e_{\beta}) \cdot e_{\beta})  = J_{\beta,k-s\vec e_\beta},\\
\ddot{H}(s) &= |e_{\beta}|^{2}{\rm Id}_{L^2_{\rm per}},
\end{align}
and so hypotheses {\rm H1-H3} of Proposition~\ref{prop:adiab1} are
satisfied with ${\cal H}_{\rm f}=L^2_{\rm per}$, $D=H^2_{\rm per}$,
$T=s_k$, $n$ arbitrarily large, $\alpha_1 \le \eta$,
$\alpha_2=|e_{\beta}|^{2}$, $\alpha_l=0$ for $l \ge 3$,
$a_-(s)=\min\sigma(H)+\mu$,
$a_+(s)=\lambda_{N_k,k-s \vec e_\beta}+\mu$,
$M=\max_{k' \in \BZ} \lambda_{N_{k'}+1,k'}+\mu$, $g(s)=g_k(s)$, and
${\cal N}=N_k$.
\begin{corollary}
  \label{cor:adiab} 
  Let $k \in \R^d$ such that $\lambda_{N_k+1,k}-\lambda_{N_k,k} > 0$.
  Then, for all $\eps > 0$ and $t \in [0,\eps^{-1}s_k)$, the operator
  $\partial_{\alpha} {H}_{k-\eps e_\beta
    t}{\gamma}^{\eps}_{\beta,k-\eps \vec{e}_{\beta}t}(t)$ is in
  $\Schatten_{1,\rm per}$, and we have
  \begin{align}
  {\rm Tr}(\partial_{\alpha} {H}_{k-\eps e_\beta t} \widetilde{\mathcal U}_{\beta,k}^{\varepsilon}(t)\gamma_k(0) \widetilde{\mathcal U}_{\beta,k}^{\varepsilon}(t)^{*}) 
=& 
  \partial_{\alpha} \left( \Tr(H_{k{-}\eps\vec{e}_\beta t} P_{N_k,k{-}\eps \vec{e}_{\beta}t})\right)  \nonumber \\
  &
  {+i}\eps
  \Tr(\partial_{\alpha} H_{k{-}\eps\vec{e}_\beta t} 
  L_{N_k,k{-}\varepsilon \vec{e}_{\beta}t}^{+}\partial_{\beta}
  P_{N_k,k{-}\eps \vec{e}_{\beta}t})  \nonumber \\
  & {-i} \eps \Tr(\partial_{\alpha} H_{k{-}\eps \vec{e}_{\beta} t} 
  \widetilde{\mathcal U}^\eps_{\beta,k}(t)\ L_{N_k,k}^{+}\partial_{\beta} P_{N_k,k
}\, \widetilde {\mathcal U}^\eps_{\beta,k}(t)^*)  \nonumber \\
  & + {\mathfrak R}^{\eps}_k(t), \label{eq:estim1}
\end{align}
where each term of the right-hand side is a well-defined real number and $L_{N,k}^{+}$ is a shorthand notation for the inverse
Liouvillian $L_{H_k,[\lambda_{1,k},\lambda_{N,k}]}^{+}$. In addition, we have the following bounds
\begin{align}
&|{\rm Tr}(\partial_{\alpha} {H}_{k-\eps e_\beta t} \widetilde{\mathcal U}_{\beta,k}^{\varepsilon}(t)\gamma_k(0) \widetilde{\mathcal U}_{\beta,k}^{\varepsilon}(t)^{*}) | \le C e^{\eta\eps t} ,\label{eq:first_bound}\\ 
&|\partial_{\alpha} \left( \Tr(H_{k{-}\eps\vec{e}_\beta t} P_{N_k,k{-}\eps \vec{e}_{\beta}t})\right)| \le C,  \nonumber \\
& | \eps
  \Tr(\partial_{\alpha} H_{k{-}\eps\vec{e}_\beta t} 
  L_{N_k,k{-}\varepsilon \vec{e}_{\beta}t}^{+}\partial_{\beta}
  P_{N_k,k{-}\eps \vec{e}_{\beta}t}) | \le C \frac{\eps}{g_k(\eps t)^4}, \nonumber  \\ 
  \label{68}
  & | \eps \Tr(\partial_{\alpha} H_{k-\eps \vec{e}_{\beta} t} 
  \widetilde{\mathcal U}^\eps_{\beta,k}(t)\ L_{N_k,k}^+\partial_{\beta} P_{N_k,k}\, \widetilde {\mathcal U}^\eps_{\beta,k}(t)^*) | \le  C \frac{\eps e^{\eta\eps t}}{g_k(0)^4},   \\
& | {\mathfrak R}^{\eps}_k(t)| \le \frac{C \varepsilon^{2} t e^{\eta \varepsilon t}}{\min_{s \in [0, \varepsilon t]} g_{k}(s)^{6}}, \label{eq:last_bound}
 \end{align}
 for a constant $C \in \R_+$ independent of $k$, $\eps$ and $t$.
\end{corollary}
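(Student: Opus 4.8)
The plan is to specialise Propositions~\ref{prop:adiab1} and~\ref{prop:adiab2} to the family $H(s)=H_{k-e_\beta s}+\mu$, whose hypotheses {\rm H1}--{\rm H3} on $[0,s_k)$ have been checked just above, and then to pair the resulting operator identity with the current operator $\partial_\alpha H_{k-\eps e_\beta t}$. First I would fix the dictionary between the abstract and the concrete objects. Because $\mu$ is a scalar, $U^\eps(t):={\rm e}^{i\mu t}\widetilde{\mathcal U}^\eps_{\beta,k}(t)$ solves~\eqref{eq:Uepstt'} for this family and acts by conjugation exactly as $\widetilde{\mathcal U}^\eps_{\beta,k}(t)$; the interval $I(s)=[\min\sigma(H)+\mu,\lambda_{N_k,k-s e_\beta}+\mu]$ yields $P(s)=P_{N_k,k-s e_\beta}$, hence $P(0)=\gamma_k(0)$; the Liouvillian ignores the shift by $\mu$, so $L^+(s)=L^+_{N_k,k-s e_\beta}$; and the chain rule gives $\dot H(s)=J_{\beta,k-s e_\beta}$ and $\dot P(s)=-\partial_\beta P_{N_k,k-s e_\beta}$ (equivalently $\dot P(s)=L^+(s)[P(s),\dot H(s)]$ by~\eqref{eq:Pprime}). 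Substituting these into~\eqref{eq:adiab1}--\eqref{eq:adiab2} and taking the trace against $\partial_\alpha H_{k-\eps e_\beta t}$ produces the identity~\eqref{eq:estim1}, with ${\mathfrak R}^\eps_k(t)=\Tr(\partial_\alpha H_{k-\eps e_\beta t}R^\eps(t))$; the claimed $\Schatten_{1,\per}$-membership is part of Proposition~\ref{prop:setting}. The leading term is recast using the elementary identity $\Tr(\partial_\alpha H_{k'}P_{N_k,k'})=\partial_\alpha\bigl(\Tr(H_{k'}P_{N_k,k'})\bigr)$, valid for $k'=k-\eps e_\beta t$ with $\eps t<s_k$ because there $[H_{k'},P_{N_k,k'}]=0$ makes $H_{k'}$ block-diagonal and $\partial_\alpha P_{N_k,k'}$ block-off-diagonal, so $\Tr(H_{k'}\partial_\alpha P_{N_k,k'})=0$. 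That each term is a real number follows from self-adjointness of $\partial_\alpha H_{k'}$ and $P_{N_k,k'}$ and from the fact that $L^+_h$ sends self-adjoint off-diagonal operators to anti-self-adjoint ones, so that $iL^+_{N_k,\cdot}\partial_\beta P_{N_k,\cdot}$ and its conjugate by $\widetilde{\mathcal U}^\eps_{\beta,k}(t)$ are self-adjoint.

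For the estimates~\eqref{eq:first_bound}--\eqref{eq:last_bound} I would combine three inputs: (i) $\|(\partial_\alpha H_{k'})(H_{k'}+\mu)^{-1}\|_{\mathcal L(L^2_{\rm per})}\le\eta$, obtained by fibering the definition of $\eta$ (the operator norm on $L^2(\R^d;\C)$ is the supremum over Bloch fibers); (ii) the norm, gap and rank bounds of Proposition~\ref{prop:adiab1}, i.e. $\|H(s)P(s)\|\le M$, $\|H(s)P^{(l)}(s)\|\le C_lg(s)^{-l-1}$, $\|H(s)(L^+)^{(l)}(s)A\|\le C_lg(s)^{-l-3}\|A\|$, $\|L^+(s)\|\le Cg(s)^{-1}$, the fact that $P^{(l)}(s)$ and $(L^+)^{(l)}(s)A$, hence $L^+(s)\dot P(s)$ and $\tfrac{d}{ds}(L^+(s)\dot P(s))$, have rank bounded by a constant times $N_k$, and $N_k\le N_{\max}:=\sup_{k}N_k<\infty$; (iii) the Gr\"onwall bounds~\eqref{eq:boundHUH-1}--\eqref{eq:boundHUH-1/2}, which give $\|H(\eps t)U^\eps(t,t')H(\eps t')^{-1}\|\le{\rm e}^{\alpha_1\eps(t-t')}\le{\rm e}^{\eta\eps t}$ (using $\alpha_1\le\eta$). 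The recurring device for handling the unbounded factor $\partial_\alpha H_{k'}$ without paying powers of $g$ is this: after writing $\partial_\alpha H_{k'}=(\partial_\alpha H_{k'})(H_{k'}+\mu)^{-1}(H_{k'}+\mu)$, one moves the projectors around under the trace so that $(H_{k'}+\mu)$ is always applied to a finite-rank operator built from $P(\eps t)$ or from $\widetilde P(t):=\widetilde{\mathcal U}^\eps_{\beta,k}(t)P(0)\widetilde{\mathcal U}^\eps_{\beta,k}(t)^*$; the $\Schatten_1$-norms of $H(\eps t)P(\eps t)$ and of $H(\eps t)\widetilde P(t)=H(\eps t)U^\eps(t)H(0)^{-1}\,H(0)P(0)$ are then bounded by $N_{\max}M$ and $N_{\max}M\,{\rm e}^{\eta\eps t}$ respectively, by (ii) and (iii). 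This gives~\eqref{eq:first_bound} directly and reduces the first-order terms to bounding $B:=L^+(s)\dot P(s)$ in operator norm. Here the key point, producing the power $g(s)^{-4}$ rather than $g(s)^{-5}$, is the identity $B=(L^+(s))^2[P(s),\dot H(s)]$ from~\eqref{eq:Pprime}, together with $\|[P(s),\dot H(s)]\|=\|[H(s),\dot P(s)]\|\le2\|H(s)\dot P(s)\|\le Cg(s)^{-2}$ and $\|L^+(s)\|^2\le Cg(s)^{-2}$; writing $B=P(s)B(1-P(s))+(1-P(s))BP(s)$ and cycling the projectors so that $\partial_\alpha H_{k'}$ appears only as $(1-P(s))(\partial_\alpha H_{k'})P(s)$ (of rank $\le N_k$, norm $\le\eta M$) then yields the bound on the middle term of~\eqref{eq:estim1} and, after inserting $\widetilde{\mathcal U}^\eps_{\beta,k}(t)$ and $\widetilde P(t)$ in place of $P(s)$, the bound~\eqref{68} on the transported term. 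For~\eqref{eq:last_bound} one expands $\tfrac{d}{ds}(L^+\dot P)=(L^+)'\dot P+L^+\ddot P$ and gets $\|H(s)\tfrac{d}{ds}(L^+(s)\dot P(s))\|\le Cg(s)^{-6}$ from $\|H(s)(L^+)'(s)A\|\le Cg(s)^{-4}\|A\|$, $\|\dot P(s)\|\le Cg(s)^{-2}$, $\|H(s)L^+(s)A\|\le Cg(s)^{-3}\|A\|$, $\|\ddot P(s)\|\le Cg(s)^{-3}$; then the integral over $t'\in[0,t]$ in~\eqref{eq:adiab2}, estimated with (iii), gives the factor $\eps^2t\,{\rm e}^{\eta\eps t}/\min_{s\in[0,\eps t]}g_k(s)^6$.

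The difficulty is bookkeeping, not ideas: the precise powers of the gap in~\eqref{eq:estim1}--\eqref{eq:last_bound} are reached only by combining the projector-cycling device (which removes any cost in $g$ from the unbounded operator $\partial_\alpha H$) with the factorisation $L^+\dot P=(L^+)^2[P,\dot H]$ (which trades two $g^{-3}$ inverse-Liouvillian factors for two $g^{-1}$ ones times $\|H\dot P\|$); a naive estimate loses one power of $g$ at each of these steps. The second point to keep track of is uniformity in $k$: one must check that every constant depends on $V$, $\mathcal A$, $\mu_{\rm F}$ only through $\eta$, $\mu$, $M=\max_{k'\in\BZ}\lambda_{N_{k'}+1,k'}+\mu$ and $N_{\max}$ --- all finite by~\eqref{eq:asymptotics_eigenvalues} and the uniform boundedness of $N_k$, and in particular independent of $g_k$, $\eps$ and $t$ --- which is exactly what legitimises the $k$-integration carried out in the following sections.
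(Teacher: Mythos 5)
Your proposal is correct and matches the paper's proof: specialise Propositions~\ref{prop:adiab1} and~\ref{prop:adiab2} to $H(s)=H_{k-e_\beta s}+\mu$, take the trace of~\eqref{eq:adiab1}--\eqref{eq:adiab2} against $\partial_\alpha H_{k-\eps e_\beta t}$, recast the leading term by Hellmann--Feynman, and bound each piece via the finite-rank estimate $|\Tr(\partial_\alpha H\,A)|\le \mbox{Rank}(A)\,\eta\,\|H(\eps t)A\|$ together with the Gr\"onwall bound~\eqref{eq:boundHUH-1}. The only point where you supply more detail than the paper is the derivation of the $g^{-4}$ exponent via the factorisation $L^+\dot P=(L^+)^2[P,\dot H]$ and projector cycling (the paper leaves this implicit; an equivalent shortcut is the sharper $\|\dot P\|\le C/g$ recorded in the remark after Proposition~\ref{prop:adiab1}), and you have a harmless sign slip in the scalar phase ($e^{-i\mu t}$, not $e^{i\mu t}$), which you yourself note is immaterial under conjugation.
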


\begin{proof}
Applying the second assertion in Proposition~\ref{prop:adiab2}, we get 
\begin{align}
  \widetilde{\mathcal U}_{\beta,k}^{\varepsilon}(t)\gamma_k(0) \widetilde{\mathcal U}_{\beta,k}^{\varepsilon}(t)^{*}
=  & P_{N_k,k{-}\eps \vec{e}_{\beta}t}
  +i\eps 
  L_{N_k,{k-\varepsilon \vec{e}_{\beta}t}}^{+}\partial_{\beta}
  P_{N_k,k-\eps \vec{e}_{\beta}t} \nonumber \\
&  -i \eps 
\widetilde{\mathcal U}^\eps_{\beta,k}(t)\  \left( L_{N_k,{k}}^{+}\partial_{\beta} P_{N_k,k
} \right) \, \widetilde {\mathcal U}^\eps_{\beta,k}(t)^*
  + R^\eps_k(t). \label{eq:gamma_eps_t}
\end{align}
Each term $A$ in \eqref{eq:gamma_eps_t} being a finite-rank self-adjoint operator, it holds
\begin{align*}
  \|\partial_{\alpha} {H}_{k-\eps e_\beta t} A\|_{\Schatten_{1}} \le \mbox{Rank}(A) \|\partial_{\alpha} {H}_{k-\eps e_\beta t} A\| \le \eta \mbox{Rank}(A) \|(H_{k-\varepsilon e_{\beta} t} + \mu) A\|,
\end{align*}
and again by Proposition \ref{prop:adiab2} we get
$$\displaylines{ 
   {\rm Tr}(\partial_{\alpha} {H}_{k-\eps e_\beta t} \widetilde{\mathcal U}_{\beta,k}^{\varepsilon}(t)\gamma_k(0) \widetilde{\mathcal U}_{\beta,k}^{\varepsilon}(t)^{*}) 
   = \Tr(\partial_{\alpha} H_{k{-}\eps\vec{e}_\beta t} P_{N_k,k{-}\eps \vec{e}_{\beta}t}) 
  +i\eps\Tr(\partial_{\alpha} H_{k{-}\eps\vec{e}_\beta t} 
  L_{N_k,k-\varepsilon \vec{e}_{\beta}t}^{+}\partial_{\beta}
  P_{N_k,k{-}\eps \vec{e}_{\beta}t}) \cr
  -i \eps \Tr(\partial_{\alpha} H_{k{-}\eps \vec{e}_{\beta} t} 
\widetilde{\mathcal U}^\eps_{\beta,k}(t)\ \left(  L_{N_k,k}^{+}\partial_{\beta} P_{N_k,k
} \right) \, \widetilde {\mathcal U}^\eps_{\beta,k}(t)^*)  + {\rm Tr} \left( \partial_{\alpha} {H}_{k-\eps e_\beta t}{R}^\eps_k(t)  \right)}
$$
with
\begin{align}
  \label{eq:defRepsk}
  R^\eps_{k}(t) = i \eps^2 \int_0^t \widetilde {\mathcal U}^\eps_{\beta,k}(t,t')  \partial_{k_{\beta}}\left(L_{N_k,k-\varepsilon e_{\beta} t'}^{+}\partial_{k_{\beta}} P_{N_k,k-\varepsilon e_{\beta} t'}\right) \widetilde {\mathcal U}^\eps_{\beta,k}(t,t')^{*}  \ dt'.
\end{align}
It results from the Hellmann-Feynman formula that
  $$
 \Tr(\partial_{\alpha} H_{k{-}\eps\vec{e}_\beta t} P_{N,k{-}\eps \vec{e}_{\beta}t})   =   \partial_{\alpha} \left( \Tr(H_{k{-}\eps\vec{e}_\beta t} P_{N,k{-}\eps \vec{e}_{\beta}t})\right).  
  $$
Finally, using Propositions \ref{prop:adiab1} and \ref{prop:adiab2}, we obtain the bounds \eqref{eq:first_bound}-\eqref{eq:last_bound}. In particular,
\begin{align*}
|{\mathfrak R}^\eps_k(t)| &=\Big| {\rm Tr} \left( \partial_{\alpha}
  {H}_{k-\eps e_\beta t}{R}^\eps_k(t)  \right)\Big| \\
&\le 4N_k \varepsilon^{2} \eta t \sup_{t' \in [0,t)} \left( e^{\eta  \varepsilon(t-t')} \left\|H_{k-\varepsilon e_{\beta} t'} \partial_{k_{\beta}}\left(L_{N_k,k-\varepsilon e_{\beta} t'}^{+}\partial_{k_{\beta}} P_{N_k,k-\varepsilon e_{\beta} t'}\right) \right\| \right) \\
&\le C  \frac{\eta \varepsilon^{2} t e^{\eta \varepsilon t}}{\inf_{s \in [0, \varepsilon t)} g_{k}(s)^{6}},
\end{align*}
where $C \in \R_+$ is independent of $k$, $\eps$ and $t$.
\end{proof}

\begin{remark}
  \label{rem:threeterms}The decomposition \eqref{eq:estim1} will be key to
  computing the current in insulators, non-degenerate metals and
  semimetals. The first three terms in the right-hand side of
  \eqref{eq:estim1} have different physical meanings. The first term
  is the adiabatic term: electrons simply are transported
  adiabatically across the Brillouin zone. This term will be
  responsible for the ballistic transport of electrons in metals. The
  second is the first-order static response, and will be the cause of
  the Hall conductivity in insulators. The third is oscillatory, and
  is related to the AC response of solids (not treated here). This
  decomposition only makes sense for a non-zero gap; in particular, it
  cannot be used to compute the contribution to the current for $k$
  points close to Dirac points for semimetals.
\end{remark}
\subsection{Linear response}
\label{subsec:lr}

We now aim at obtaining an expansion of the current to first order in
$\varepsilon$ for a given $t$, based on a Dyson expansion instead of
the adiabatic theorem. This is a classical computation in response
theory, sometimes known as the Kubo formula \cite{Kubo1957}. In contrast to the previous result, this gives
a remainder that does not depend on a gap, and will therefore be
useful for the study of semimetals near Dirac points.
\begin{proposition}\label{prop:Dyson}  Let $H$ be the periodic magnetic Hamiltonian defined in~\eqref{eq:periodic_Hamiltonian}. Under the additional assumptions that $V \in H^1_{\rm per}$ and
${\cal A} \in (H^2_{\rm per})^d$, there exists a constant $C \in \R_+$ such that for all $k \in \R^d$ such that $\lambda_{N_k+1,k}-\lambda_{N_k,k} > 0$, we have for all $\eps,t \in \R_+$,
\begin{align}
 {\rm Tr} \left(\partial_{\alpha} H_{k - \varepsilon e_{\beta} t}\widetilde{\mathcal U}^\eps_{\beta,k}(t)\, \gamma_{k}(0)\, \widetilde{\mathcal U}^\eps_{\beta,k}(t)^*\right) =& \partial_{\alpha} {\rm Tr} \left(H_k \gamma_{k}(0)\right) - \varepsilon t  {\partial_{\alpha}\partial_ {\beta}} \left( \Tr (H_{k} \gamma_{k}(0)) \right) \nonumber  \\ & + i \varepsilon \Tr(\partial_{\alpha} H_{k} (e^{-i t L_k} - 1) L_{k}^{+}  \partial_{\beta} \gamma_{k}(0)) + \rho^\eps_k(t),
\end{align}
with, when $\eps t \le 1$,
\begin{equation}\label{eq:boundrho}
|\rho^\eps_k(t)| \le C \eps^2 t^3 (1+ t^3 ).
\end{equation}
\end{proposition}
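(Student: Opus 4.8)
\emph{Proof sketch.} The displayed identity holds by definition once $\rho^\eps_k(t)$ is set equal to the difference between the left-hand side and the three explicit terms; those are well defined because the hypothesis $\lambda_{N_k+1,k}>\lambda_{N_k,k}$ makes $\gamma_k(0)=P_{N_k,k}$, its derivative $\partial_\beta\gamma_k(0)$ and the partial inverse $L_k^+=L_{H_k,[\lambda_{1,k},\lambda_{N_k,k}]}^+$ meaningful. The content is thus to carry out first-order time-dependent perturbation theory (a Dyson expansion) for the fiber dynamics and to control the remainder when $\eps t\le 1$.

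The plan is to pass to the interaction picture relative to $H_k$. Writing $\widetilde{\mathcal U}^\eps_{\beta,k}(t)=e^{-itH_k}\Omega^\eps(t)$ and $\gamma^I(t):=\Omega^\eps(t)\gamma_k(0)\Omega^\eps(t)^*$, one has, using the exact identity $H_{k-\eps e_\beta t}=H_k-\eps t\,\partial_\beta H_k+\tfrac12(\eps t)^2|e_\beta|^2$ (the scalar term dropping out of the commutator), $i\partial_t\gamma^I=-\eps t\,[B(t),\gamma^I]$ with $B(t):=e^{itH_k}\,\partial_\beta H_k\,e^{-itH_k}$ and $\gamma^I(0)=\gamma_k(0)$. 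Iterating the Duhamel formula once,
\[\gamma^I(t)=\gamma_k(0)+i\eps\int_0^t t_1[B(t_1),\gamma_k(0)]\,dt_1-\eps^2\int_0^t\!\!\int_0^{t_1}\!t_1t_2\,[B(t_1),[B(t_2),\gamma^I(t_2)]]\,dt_2\,dt_1.\]
Transforming back by $e^{\pm itH_k}$ (which commute with $\gamma_k(0)$), the change of variables $s=t-t_1$ and the identity $[\partial_\beta H_k,\gamma_k(0)]=-L_k\partial_\beta\gamma_k(0)$ (differentiate $[H_k,\gamma_k(0)]=0$) turn the $O(\eps)$ term into $-i\eps\int_0^t(t-s)\,e^{-isL_k}L_k\,\partial_\beta\gamma_k(0)\,ds$, and the elementary identities $\int_0^t e^{-isL_k}L_k\,ds=i(e^{-itL_k}-1)$, $\int_0^t s\,e^{-isL_k}L_k\,ds=it\,e^{-itL_k}+L_k^+(e^{-itL_k}-1)$ on off-diagonal operators evaluate it to $-\eps t\,\partial_\beta\gamma_k(0)+i\eps(e^{-itL_k}-1)L_k^+\partial_\beta\gamma_k(0)$. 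Hence $\widetilde\gamma^\eps_{\beta,k}(t)=\gamma_k(0)-\eps t\,\partial_\beta\gamma_k(0)+i\eps(e^{-itL_k}-1)L_k^+\partial_\beta\gamma_k(0)+e^{-itH_k}\rho^I(t)\,e^{itH_k}$, where $\rho^I$ is the double time-integral above. Pairing with $\partial_\alpha H_{k-\eps e_\beta t}=\partial_\alpha H_k-\eps t(e_\alpha\cdot e_\beta)\,\mathrm{Id}$ and taking the fiber trace, the spurious $O(\eps^2)$ cross-terms vanish because $\widetilde\gamma^\eps_{\beta,k}(t)$ is a rank-$N_k$ projection (so $\Tr\,\widetilde\gamma^\eps_{\beta,k}(t)=N_k$ is constant, $\Tr\,\partial_\beta\gamma_k(0)=\partial_\beta N_k=0$, and off-diagonal operators have zero trace), leaving $\rho^\eps_k(t)=\Tr\bigl(\partial_\alpha H_k\,e^{-itH_k}\rho^I(t)\,e^{itH_k}\bigr)$. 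Finally $\Tr(\partial_\alpha H_k\gamma_k(0))=\partial_\alpha\,\Tr(H_k\gamma_k(0))$ by Hellmann--Feynman (the off-diagonal $\partial_\alpha\gamma_k(0)$ pairs to zero against the diagonal $H_k$), and applying Hellmann--Feynman twice together with the symmetry of $\partial_\alpha\partial_\beta$ (which forces $\Tr(\partial_\alpha H_k\,\partial_\beta\gamma_k(0))=\Tr(\partial_\beta H_k\,\partial_\alpha\gamma_k(0))$) gives $(e_\alpha\cdot e_\beta)N_k+\Tr(\partial_\alpha H_k\,\partial_\beta\gamma_k(0))=\partial_\alpha\partial_\beta\,\Tr(H_k\gamma_k(0))$; this produces the first three terms of the statement.

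It remains to bound $|\rho^\eps_k(t)|\le\eps^2\int_0^t\!\int_0^{t_1}t_1t_2\,\bigl|\Tr\bigl(\partial_\alpha H_k\,e^{-itH_k}[B(t_1),[B(t_2),\gamma^I(t_2)]]\,e^{itH_k}\bigr)\bigr|\,dt_2\,dt_1$ for $\eps t\le1$, uniformly in $k$ (with $\lambda_{N_k+1,k}>\lambda_{N_k,k}$). Since $\eps^2t^4\le\eps^2t^3(1+t^3)$, it is enough that the integrand trace be $O(1)$ uniformly; note that this object involves $\gamma^I(t_2)$ --- a rank-$N_k$ \emph{projection}, carrying no inverse-gap factors --- rather than $L_k^+$, which is what keeps the constant $k$-independent. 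Expanding the double commutator and using cyclicity, one reduces to operator-norm bounds on products of $\partial_\alpha H_k$, two conjugates of $\partial_\beta H_k$, and $\gamma^I(t_2)$; the exponentials are harmless as they commute with $H_k$. The ingredients, all uniform in $k$ (by reduction to a fundamental domain), are: $\|\partial_\alpha H_k(H_k+\mu)^{-1}\|,\|\partial_\beta H_k(H_k+\mu)^{-1}\|\le\eta$; the commutators $[H_k,\partial_\alpha H_k]$, $[H_k,\partial_\beta H_k]$ are $(H_k+\mu)$-bounded --- this is precisely where $V\in H^1_{\rm per}$ and ${\mathcal A}\in(H^2_{\rm per})^d$ enter, since such a commutator is first order with coefficients built from $\mathrm{curl}\,{\mathcal A}\in H^1_{\rm per}$ and $\partial_\gamma V\in L^2_{\rm per}$; elliptic regularity, showing $D(H_k^2)$ is $k$-independent and $\|(H_k+\mu)^2u_{n,k}\|=(\lambda_{n,k}+\mu)^2$ bounded for $n\le N_k$; the propagator estimates of Proposition~\ref{prop:adiab2} --- whose proof uses only $\alpha_1<\infty$, not the gap, and hence extends to all times --- in the form $\|(H_{k-\eps s e_\beta}+\mu)^j\,\widetilde{\mathcal U}^\eps_{\beta,k}(s)\,(H_k+\mu)^{-j}\|\le e^{C\eps s}\le e^{C}$ for $\eps s\le1$, $j=1,2$ (the case $j=2$ obtained by running the Gr\"onwall argument at second order, licit thanks to the previous point), which gives $\|(H_k+\mu)^2\gamma^I(t_2)\|\le C$; and $\sup_k N_k<\infty$. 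Distributing the factors $(H_k+\mu)^{\pm1}$ so that each first-order operator is absorbed either by $\partial_\alpha H_k(H_k+\mu)^{-1}$ (resp.\ $\partial_\beta H_k(H_k+\mu)^{-1}$) or by $(H_k+\mu)^2\gamma^I(t_2)$, the commutators $[H_k,\partial_\gamma H_k]$ producing only lower-order, still bounded, terms, one obtains the desired $O(1)$ bound and hence \eqref{eq:boundrho}. The principal difficulty lies in this last step --- the domain/commutator bookkeeping and, in particular, the second-order propagator estimate with a $k$-independent constant; done carelessly, the bound is either ill-defined on the relevant domains or loses uniformity in the spectral gap.
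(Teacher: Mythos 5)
Your proposal is correct and follows essentially the same strategy as the paper: expand the dynamics to second order by a Dyson/Duhamel iteration, evaluate the first‑order term exactly via the inverse Liouvillian, and then bound the gap‑free second‑order remainder by operator‑norm bookkeeping that relies on the extra regularity $V\in H^1_{\rm per}$, ${\mathcal A}\in(H^2_{\rm per})^d$. Where you differ from the paper is in the packaging. You Duhamel‑iterate the von Neumann equation for the interaction‑picture density matrix $\gamma^I(t)$, leaving the exact $\gamma^I(t_2)$ (a rank‑$N_k$ projection carrying no inverse‑gap factors) inside the double time integral, and you control it via a second‑order propagator estimate $\|H(\eps t)^2 U^\eps(t,t')H(\eps t')^{-2}\|\le e^{C\eps(t-t')}$ obtained by running the paper's Gr\"onwall argument one order higher; the finiteness of the constant is exactly what the quantity $\nu_2=\|(H+\mu)J_\beta(H+\mu)^{-2}\|$ in the paper encodes. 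The paper instead Dyson‑expands the propagator itself as $U^\eps=U^0+V^\eps+W^\eps$ and bounds the remainder $\Pi_2^\eps$ by repeatedly inserting $h_0^{\pm m}$ next to $P(0)$ (using $\|h_0^mP(0)\|\le(\mu_F+\mu)^m$) and the explicit norms $\nu_0,\nu_1,\nu_2,\lambda$; this avoids ever needing a $j=2$ propagator bound and gives the stated $\eps^2 t^3(1+t^3)$ directly, whereas your scheme, if carried through, would in fact yield the slightly tighter $O(\eps^2 t^4)$. Your first‑order computation, the elimination of the scalar term $\tfrac12(\eps t)^2|e_\beta|^2$, the integral identities for $\int_0^t(t-s)e^{-isL_k}L_k\,ds$, and the observation that $\Tr(\rho^I)=0$ kills the cross term with $-\eps t\,e_\alpha\!\cdot\!e_\beta$, are all correct and match the paper. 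The one place you are (appropriately) cautious is the operator‑domain bookkeeping for the double commutator $[\tilde B_1,[\tilde B_2,\tilde\gamma]]$: with three first‑order factors to absorb, the clean insertion pattern is not the $(H+\mu)^{\pm 1},(H+\mu)^{\pm 2}$ ladder your sketch suggests, and one needs the half‑power quantities $\nu_0,\nu_1$ as well; this is precisely the kind of detail the paper's factorization through $P(0)$ makes transparent. No gap in the argument — the differences are organizational, and the hard analytic inputs (finiteness of $\nu_2$, uniform $N_k$ bound, $k$‑uniform elliptic estimates) are the same in both.
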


\begin{proof}[Proof]  Let $k$ be such that $\lambda_{N_k+1,k}-\lambda_{N_k,k} > 0$. Since $k' \mapsto {\rm Tr} \left(H_{k'} \gamma_{k'}(0)\right)$ is real-analytic in a neighborhood of $k$, we have by Hellmann-Feynman theorem
\begin{align*}
 \partial_{\alpha} {\rm Tr} \left(H_k \gamma_{k}(0)\right) =  {\rm Tr} \left(\partial_{\alpha} H_k \gamma_{k}(0)\right) \quad \mbox{and} \quad
  {\partial_{\alpha}\partial_ {\beta}} {\rm Tr} \left(H_k \gamma_{k}(0)\right) =  {\rm Tr}  
\left(  {\partial_{\alpha}\partial_ {\beta}} 
 H_k \gamma_{k}(0)\right) +  {\rm Tr} \left(\partial_{\alpha} H_k \partial_{\beta} \gamma_{k}(0)\right).
\end{align*}
We also have $\partial_{\alpha}H_{k-\varepsilon e_{\beta} t} =
\partial_{\alpha}H_k - \eps t e_{\alpha} \cdot e_{\beta}$. It follows that
\begin{align*}
\rho^\eps_k(t)=&{\rm Tr} \left(\partial_{\alpha} H_{k}\widetilde{\mathcal U}^\eps_{\beta,k}(t)\, \gamma_{k}(0)\, \widetilde{\mathcal U}^\eps_{\beta,k}(t)^*\right) - {\rm Tr} \left(\partial_{\alpha} H_k \gamma_{k}(0)\right) + \eps t  {\rm Tr} \left(\partial_{\alpha} H_k \partial_{\beta} \gamma_{k}(0)\right) \\ & - i \varepsilon \Tr(\partial_{\alpha} H_{k} (e^{-i t L_k} - 1) L_{k}^{+}  \partial_{\beta} \gamma_{k}(0)).
\end{align*}
We now expand the first term in the right-hand side of this equation. We set $\mu := 1 + \min \sigma(H)$, 
$$H(s):=H_{k-se_\beta}+\mu,\quad A=\partial_{\alpha}H_k, \quad
I_0=[0,\frac 12(\lambda_{N_k,k}+\lambda_{N_k+1,k})+\mu],\quad P(s)=\1_{I_0}(H(s)).$$
It holds
$$H(s)=h_0+sh_1+\frac{s^2 |e_{\beta}|^{2}}2$$
 with $h_0=H_k+\mu$ and $h_1=J_{\beta,k}=-\partial_{\beta}H_k$. The operators $h_0$, $h_1$ and $A$ are self-adjoint on $L^2_{\rm per}$ and we have $h_0 \ge 1$ and $h_1h_0^{-1/2}$ and $Ah_0^{-1/2}$ bounded.
  Besides, $P(s)= \gamma_{k-se_\beta}(0)$, so that $\dot{P}(0)=- \partial_{\beta} \gamma_{k}(0)$.  Let $(U^\eps(t,t'))_{t,t' \in \R}$ be the propagator associated with the family $(H(\eps t))_{t \in \R}$ and $U^\eps(t):=U^\eps(t,0)$. 
We have $U^\eps(t,t')=e^{-i\mu(t-t')} {\cal U}^\eps_k(t,t')$ and $U^\eps(t)=e^{-i\mu t} {\cal U}^\eps_k(t)$.  
With these notations, we have
$$
\rho^\eps_k(t)={\rm Tr} \left(A U^\eps(t)\, P(0) U^\eps(t)^*\right) - {\rm Tr} \left(AP(0) \right) - \eps t  {\rm Tr} \left(A  \dot{P}(0) \right) + i \varepsilon \Tr(A (e^{-i t L_0} - 1) L_0^{+}  \dot{P}(0)),
$$
where $L_0=L_{h_0,I_0}$ and $L_0^+=L_{h_0,I_0}^+$, and we focus on expanding the operator $U^\eps(t)\, P(0) U^\eps(t)^*$ close to $t=0$. 

\begin{lemma}\label{lem:Dyson+}
We have 
\begin{equation}\label{eq:Dyson+}
U^\eps(t)\, P(0) U^\eps(t)^* = P(0) +  \eps \left(t \dot{P}(0) - i \left( e^{-itL_0}-1 \right) (L_0^+\dot{P}(0))\right) + \Pi^\eps_2(t)
\end{equation}
with $\rho^\eps_k(t) = {\rm Tr} \left(A \Pi_2^\eps(t) \right)$.
Moreover, we have the bound~\eqref{eq:boundrho}
\begin{align*}
  |\rho^\eps_k(t)| \le C \eps^2 t^3 (1+ t^3 ).
\end{align*}

\end{lemma} 

Lemma~\ref{lem:Dyson+} closes the proof of Proposition~\ref{prop:Dyson}.
\end{proof}

\begin{proof}[Proof of Lemma~\ref{lem:Dyson+}]
We deduce from the Dyson expansion that
$$
U^\eps(t)=U^0(t)+V^\eps(t)+W^\eps(t),
$$
where {$U^0(t)=e^{-ith_0}$} and 
\begin{align*}
V^\eps(t) &= - i\eps \int_0^t U^0(t-t')  t' h_1 U^0(t') \, dt', \\
W^\eps(t) &=\eps^2 \left( -i \frac{t^3}6 U^0(t)+ \int_0^t \left( \int_0^{t'} U^\eps(t,t') t' (h_1+\eps t'/2) U^0(t'-t'') t'' (h_1+\eps t''/2) U^0(t'') \, dt'' \right) \, dt' \right).
\end{align*}
This induces $U^\eps(t)\, P(0) U^\eps(t)^* = P(0) + \Pi^\eps_1(t)+\Pi^\eps_2(t)$
where
\begin{align*}
\Pi^\eps_1(t)&=V^\eps(t)P(0)U^0(t)^* + {\rm h.c.} = - i\eps \int_0^t t' U^0(t-t')  [h_1,P(0)] U^0(t-t')^* \, dt', \\
\Pi^\eps_2(t)&=V^\eps(t)P(0)V^\eps(t)^* + \left( W^\eps(t) P(0) (U^0(t)+V^\eps(t))^* + {\rm h.c.} \right) + W^\eps(t)P(0)W^\eps(t)^*.
\end{align*}

\medskip 

We first analyze $\Pi^\eps_1(t)$ by computing
\begin{align*}
  U^0(t-t') [h_1,P(0)] U^0(t-t')^* &= - e^{-i(t-t') L_{0}} L_{0} \dot P(0)\\
  &= i\frac{d}{dt'} e^{-i(t-t') L_{0}} \dot P(0)\\
  &= \frac{d^{2}}{dt'^{2}} e^{-i(t-t') L_{0}} L_{0}^{+}\dot P(0),
\end{align*}
where we have used $\dot P(0) = L_{0}^{+}[P_{0},h_{1}]$ and
$\dot P(0) = L_{0}^{+} L_{0} \dot P(0)$. Using integration by parts, we obtain 
\begin{align*}
  \Pi^{\varepsilon}_{1}(t) = \eps \left(t \dot{P}(0) - i \left( e^{-itL_0}-1 \right) (L_0^+\dot{P}(0))\right)
\end{align*}
and \eqref{eq:Dyson+} follows.

We now work on the bound~\eqref{eq:boundrho}. For that purpose, we introduce the following
quantities, which are independent of $k, \varepsilon$ and $t$:
\begin{align*}
\nu_0 &= \max_{|\vec e|\le |e_{\alpha}|,|e_{\beta}|} \| (J\cdot e) (H+\mu)^{-1/2}\|_{\cL(L^2(\R^d;\C))}, \\
\nu_1&=  \max_{|\vec e|\le |e_{\alpha}|,|e_{\beta}|} \|(H+\mu)^{1/2} (J\cdot e) (H+\mu)^{-1} \|_{\cL(L^2(\R^d;\C))}, \\
\nu_2&=  \max_{|\vec e|\le |e_{\alpha}|,|e_{\beta}|} \|(H+\mu) (J\cdot e) (H+\mu)^{-2}\|_{\cL(L^2(\R^d;\C))}, 
\\ \lambda&=\max_{k \in \R^d,|k-k'| \le |e_\beta|}\|(H_k+\mu)^{1/2}(H_{k'}+\mu)^{-1/2}\|_{\cL(L^2_{\rm per})}.
\end{align*}
Note that the assumptions ${\cal A} \in (L^4_{\rm per})^d$, $\nabla \cdot {\cal A}=0$, and $V \in L^2_{\rm per}$ are sufficient to ensure that the quantities $\nu_0$, $\nu_1$ and $\lambda$ are finite. Besides, since
$\|h_0h_1h_0^{-2}\| \le \|(H+\mu)J_\beta (H+\mu)^{-2}\|$ and 
\begin{align*}
(H+\mu)J_\beta (H+\mu)^{-2} = J_\beta (H+\mu)^{-1}  - 2i
\sum_{\alpha=1}^d (\partial_\alpha {\cal A}_\beta-\partial_\beta {\cal
  A}_\alpha) J_\alpha (H+\mu)^{-2} &- (\Delta {\cal A}_\beta)
(H+\mu)^{-2} \\
&+ i\partial_{\beta}V(H+\mu)^{-2},
\end{align*}
we deduce from the assumptions ${\cal A} \in (H^2_{\rm per})^d$ and $V \in H^1_{\rm per}$ that $ \|(H+\mu)J_\beta (H+\mu)^{-2}\| < \infty$, hence that $\nu_2 < \infty$.

\medskip 

We now aim at controlling $\rho_k(t)$ thanks to $\nu_0$, $\nu_1$, $\nu_2$ and $\lambda$. 
Using the relations 
$P(0)=P(0)^2$ and $P(0)=h_0^{-m} h_0^mP(0)$
 with 
 $$\|h_0^mP(0)\|\le (\mu_{\rm F}+\mu)^m, \;\;\|h_0^{-1}\| \le \|h_0^{-1/2}\| \le 1\;\;
 \mbox{ and}\;\; \mbox{Rank}(P(0))=N_k \le \overline{N}:=\max_{k'} N_{k'},$$
  we deduce
\begin{align*}
|\rho^\eps_k(t)|& \le  \overline{N}  \biggl( (\mu_{\rm F}+\mu)^2 \|AV^\eps(t)h_0^{-1}\| \, \|V^\eps(t)h_0^{-1}\| +  2 (\mu_{\rm F}+\mu)^3 \|AW^\eps(t)h_0^{-2}\| \,  (\|h_0^{-1}\|+\|V^\eps(t)h_0^{-1}\| )
\\ & \qquad\qquad\qquad\qquad +  (\mu_{\rm F}+\mu)^3  \|AW^\eps(t)h_0^{-2}\| \|W^\eps(t)h_0^{-1}\|   \biggr) \\
& \le  \overline{N} \nu_0 \biggl(   (\mu_{\rm F}+\mu)^2  \|h_0^{1/2}V^\eps(t)h_0^{-1}\|^2 +  (\mu_{\rm F}+\mu)^3  \|h_0^{1/2}W^\eps(t)h_0^{-2}\|  (2+ 2\|V^\eps(t)h_0^{-1}\|\\
&\qquad\qquad\qquad\qquad 
+  \|W^\eps(t)h_0^{-1}\|) \bigg).
\end{align*}
Next, we get
\begin{align*}
\|V^\eps(t)h_0^{-1}\| &\le \frac{\eps t^2}2 \|h_1 h_0^{-1}\| \le \frac{\eps t^2}2 \nu_0, \;\;\;\;
\|h_0^{1/2}V^\eps(t)h_0^{-1}\|  \le \frac{\eps t^2}2 \nu_1, \\
\|W^\eps(t)h_0^{-1}\| & \le \eps^2 t^3 \bigg( \frac{1}6 + \nu_0\nu_1 \frac{t}8 + (\eps t) t \left( \frac{\nu_0}{30}+\frac{\nu_1}{20} \right) + (\eps t)^2 t \frac{1}{72}   \bigg), \\
\|h_0^{1/2}W^\eps(t)h_0^{-2}\| & \le \eps^2 t^3 \lambda e^{\eta \eps t/2} \bigg( \frac{1}6 + \nu_1\nu_2 \frac{t}8 + (\eps t) t \left( \frac {\nu_1}{30}+\frac{\nu_2}{20} \right) + (\eps t)^2 t \frac{1}{72} \bigg).
\end{align*}
 It follows that there exists a constant $C$ depending only on $V$, ${\cal A}$ and $\mu_{\rm F}$, such that
$$
|\rho^\eps_k(t)| \le C \eps^2 t^3 \left( t+e^{\eta \eps t/2} \left(1 + t (1+(\eps t)^2) + (\eps t) t^2 (1+(\eps t)^3) + (\eps t)^4 t^3 (1+(\eps t)^2) \right)  \right),
$$
which leads to \eqref{eq:boundrho} when $\eps t \le 1$.
\end{proof}


\section{Insulators}\label{sec:ins}
In this section and the
following ones, we use the notation $O(f(\varepsilon,t,t',\delta))$ to
denote a term that is bounded in absolute value by
$C f(\varepsilon,t,t',\delta)$, where $C$ is a constant that can depend
on the system under consideration (through ${\cal A}$, $V$, $\mu_{\rm F}$, $e_\alpha$ and $e_\beta$), but not on the parameters
$\varepsilon,t,t',\delta$. We will use the notation $\gamma_{k}^{0}$
for $\gamma_{k}(0)$.

\medskip 

We now prove Theorem \ref{thm:insulator}. 
For insulators, $N_{k}=N_{\rm ins}$ for all $k$, and
$\lambda_{N_{k}+1,k} - \lambda_{N_{k},k}$, hence $g_k$, is
uniformly bounded away from zero and $s_{k} = +\infty$. We use the notation $L_{k}^{+}$ for
$L_{N_{\rm ins}, k}^{+}$. We apply Corollary \ref{cor:adiab} and
obtain by integrating over the Brillouin zone
\begin{align*}
 j_{\alpha,\beta}^\eps(t)
=& 
{-}(2\pi)^{-d}\int_{{\mathcal B} } \partial_{\alpha} \left( \Tr(H_{k-\eps\vec{e}_\beta t} \gamma_{k-\eps \vec{e}_{\beta}t}^{0})\right) dk
{-i} \eps(2\pi)^{-d} \int_{{\mathcal B} } 
  \Tr( \partial_{\alpha} H_{k{-}\eps\vec{e}_\beta t} 
  L^{+}_{{k-\vec{e}_{\beta}t}} \partial_{\beta}
  \gamma^0_{k-\eps\vec{e}_\beta t}) dk \\
  &+i\eps (2\pi)^{-d}\int_{{\mathcal B} } \Tr( \partial_{\alpha} H_{k-\eps \vec{e}_{\beta} t} \;
\widetilde{\mathcal U}^\eps_{\beta,k}(t)\ (L_{k}^{+}  \partial_{\beta} \gamma^0_{k}) \, \widetilde{\mathcal U}^\eps_{\beta,k}(t)^*) dk + O\left(\eps^2 t{{\rm e}^{\eta \eps t}} \right).
\end{align*}

As mentioned in Remark \ref{rem:threeterms}, these three terms are
adiabatic, static and oscillatory respectively.
\begin{itemize}
\item The first term of the right-hand side vanishes for all $t$, as
  the integral of the derivative of the smooth periodic function
  $k \mapsto \Tr(H_{k} \gamma^0_{k})$ on a unit cell.
\item The second term is dealt with using the
  relation
$$L_{k}^{+}( (\partial_{\alpha} H_{k})^{\rm OD})=  
[\gamma^0_k\;,\; \partial_{\alpha} \gamma^0_k],
$$
where
$(\partial_{\alpha} H_{k})^{\rm OD} = \gamma^{0}_{k}
(\partial_{\alpha} H_{k}) (1-\gamma^{0}_{k}) + (1-\gamma^{0}_{k})
(\partial_{\alpha} H_{k}) \gamma^{0}_{k}$. By periodicity, we have
\[
  \int_{\mathcal B} {\rm Tr}\left( \partial_{\alpha} H_{k{-}\eps
      \vec e_\beta t} L_{{k{-}\eps \vec e_\beta t}}^{+}
    \partial_{\beta} \gamma^0 _{k-\eps \vec e_\beta t}\right) dk =
  \int_{\mathcal B} {\rm Tr}\left( \partial_{\alpha} H_{k}
    L_{{k}}^{+} \partial_{\beta} \gamma^0 _{k}\right) dk,\] and we
observe that
\begin{align}
  {\rm Tr}\left( \partial_{\alpha} H_{k} L_{k}^{+}  \partial_{\beta} \gamma^0 _{k}\right) &={\rm Tr}\left(( \partial_{\alpha} H_{k})^{\rm OD}L_{k}^{+} \partial_{\beta} \gamma^0 _{k}\right) \nonumber \\
  &=  {\rm Tr}\left(L_{k}^{+} ( (\partial_{\alpha} H_{k})^{\rm OD})  \partial_{\beta} \gamma^0 _{k}\right) =  {\rm Tr}\left([\gamma^0_{k},  \partial_{\alpha} \gamma^0_{k} ]  \partial_{\beta} \gamma^0_{k}\right), 
\end{align}
so that
\begin{align}
  \nonumber
  \int_{\mathcal B} {\rm Tr}\left( \partial_{\alpha} H_{k-\eps \vec e_\beta t} L_{k-\eps \vec e_\beta t}^{+}   \partial_{\beta} \gamma^0 _{k-\eps \vec e_\beta t} \right) dk &=
  \int_{\mathcal B} {\rm Tr}\left(\gamma^0_{k}[ \partial_{\alpha} \gamma^0_{k} ,  \partial_{\beta} \gamma^0 _{k}]\right) dk .
\end{align}

\item 
  We now focus on the time-average of the oscillating term
$$
\omega^\eps(t):= \frac1 t \int_0^t dt' \, \int_{{\mathcal B} } \Tr(
\partial_{\alpha} H_{k-\eps \vec{e}_{\beta} t'} \;
\widetilde{\mathcal U}^\eps_{\beta,k}(t')\ (i L_{k}^{+}
\partial_{\beta} \gamma^0_{k}) \, \widetilde{\mathcal
  U}^\eps_{\beta,k}(t')^*) \, dk.
$$
In order to bound this term, we apply Lemma~\ref{lem:oscillatory} to
$A=i L_{k}^{+} \partial_{\beta}\gamma^0_{k}$, which is a
self-adjoint off-diagonal operator for the decomposition
$L^2_{\rm per}=\mbox{Ran}(\gamma^0_{k})\oplus
\mbox{Ker}(\gamma^0_{k})$. We thus get
$$
\widetilde{\mathcal U}^\eps_{\beta,k}(t) \left( i L_{k}^{+}
  \partial_{\beta}\gamma^0_{k}\right) \widetilde{\mathcal
  U}^\eps_{\beta,k}(t)^* = \frac{d}{dt} \left( i L^{+}_{k- \varepsilon
    e_{\beta} t}\Big( \widetilde{\mathcal U}^\eps_{\beta,k}(t) (i
  L_{k}^{+} \partial_{\beta}\gamma^0_{k}) \widetilde{\mathcal
    U}^\eps_{\beta,k}(t)^*\Big) \right) + \widetilde R_k^\eps(t),
$$
where
\begin{align*}
  \widetilde  R_k^\eps(t)& =2 \widetilde{\mathcal U}^\eps_{\beta,k}(t) r_k^\eps(t) (i L_{k}^{+}  \partial_{\beta}\gamma^0_{k})  r_k^\eps(t)  \widetilde{\mathcal U}^\eps_{\beta,k}(t)^{*} - \left( \widetilde{\mathcal U}^\eps_{\beta,k}(t) (1-2\gamma^0_{k}) (i L_{k}^{+}  \partial_{\beta}\gamma^0_{k}) r_k^\eps(t)  \widetilde{\mathcal U}^\eps_{\beta,k}(t)^{*} + {\rm h.c.} \right) \\
  & \quad - \eps i  \partial_{\beta}L^+_{k-\eps t e_\beta} \widetilde{\mathcal U}^\eps_{\beta,k}(t) (i L_{k}^{+}  \partial_{\beta}\gamma^0_{k}) \widetilde{\mathcal U}^\eps_{\beta,k}(t))^*
\end{align*}
and
$$
r_k^\eps(t)= i\eps \widetilde{\mathcal U}^\eps_{\beta,k}(t)^* \left(
  L_{k-\eps t e_\beta}^{+} \partial_{\beta}\gamma^0_{k-\eps t
    e_\beta} \right) \widetilde{\mathcal U}^\eps_{\beta,k}(t) - i \eps
L_k^+ \partial_{\beta}\gamma^0_{k} + \widetilde{\mathcal
  U}^\eps_{\beta,k}(t) ^* R_k^\eps(t) \widetilde{\mathcal
  U}^\eps_{\beta,k}(t),
$$
where $R_k^\eps(t)$ is defined in \eqref{eq:defRepsk}. Therefore,
\begin{align*}
  & \Tr( \partial_{\alpha} H_{k-\eps \vec{e}_{\beta} t} 
  \widetilde{\mathcal U}^\eps_{\beta,k}(t)\ (iL_{k}^{+}  \partial_{\beta} \gamma^0_{k}) \, \widetilde{\mathcal U}^\eps_{\beta,k}(t)^*) \\
  & \qquad =
  \Tr \left(\partial_{\alpha} H_{k-\eps \vec{e}_{\beta} t}  \frac{d}{dt} \left(
      \widetilde{\mathcal U}^\eps_{\beta,k}(t)\ (iL_{k}^{+}  \partial_{\beta} \gamma^0_{k}) \, \widetilde{\mathcal U}^\eps_{\beta,k}(t)^* \right) \right) \, 
  + \Tr \left( \partial_{\alpha} H_{k-\eps \vec{e}_{\beta} t} \widetilde R_k^\eps(t) \right) \\
  & \qquad =
  \frac{d}{dt}  \Tr \left(\partial_{\alpha} H_{k-\eps \vec{e}_{\beta} t} \left(
      \widetilde{\mathcal U}^\eps_{\beta,k}(t)\ (iL_{k}^{+}  \partial_{\beta} \gamma^0_{k}) \, \widetilde{\mathcal U}^\eps_{\beta,k}(t)^* \right) \right) \,  
  + \Tr \left( \partial_{\alpha} H_{k-\eps \vec{e}_{\beta} t} \widetilde R_k^\eps(t) \right)
\end{align*}
since
\begin{align*}
  \Tr \left( \frac{d}{dt}(\partial_{\alpha} H_{k-\eps \vec{e}_{\beta} t}) \;  \widetilde{\mathcal U}^\eps_{\beta,k}(t)\ (iL_{k}^{+}  \partial_{\beta} \gamma^0_{k}) \, \widetilde{\mathcal U}^\eps_{\beta,k}(t)^* \right)  &= -\eps   \Tr \left(  \partial_{k_\alpha k_\beta} H_{k-\eps \vec{e}_{\beta} t}
    \widetilde{\mathcal U}^\eps_{\beta,k}(t) \;  (iL_{k}^{+}  \partial_{\beta} \gamma^0_{k}) \, \widetilde{\mathcal U}^\eps_{\beta,k}(t)^* \right) \\
  &= -\eps  e_{\alpha} \cdot e_{\beta}  \Tr \left(\widetilde{\mathcal U}^\eps_{\beta,k}(t) \; (iL_{k}^{+}  \partial_{\beta} \gamma^0_{k}) \, \widetilde{\mathcal U}^\eps_{\beta,k}(t)^* \right) \\ & = 
  -\eps  e_{\alpha} \cdot e_{\beta}  \Tr \left( iL_{k}^{+}  \partial_{\beta} \gamma^0_{k} \right) = 0,
\end{align*}
where we have used the fact that
$\partial_{k_\alpha k_\beta} H_k = - e_{\alpha} \cdot e_{\beta}$ and
the off-diagonal character of
$iL_{k}^{+} \partial_{\beta} \gamma^0_{k}$. Hence, using the bounds
from Proposition~\ref{prop:adiab1}, we obtain
\begin{align*}
  \omega^\eps(t)&= \frac1 t  \int_{{\mathcal B}} \Tr \left(\partial_{\alpha} H_{k-\eps \vec{e}_{\beta} t} \left(
      \widetilde{\mathcal U}^\eps_{\beta,k}(t)\ (iL_{k}^{+}  \partial_{\beta} \gamma^0_{k}) \, \widetilde{\mathcal U}^\eps_{\beta,k}(t)^* \right) \right) \,  dk 
  +  \frac1 t  \int_0^t dt'   \int_{{\mathcal B}}  \Tr \left( \partial_{\alpha} H_{k-\eps \vec{e}_{\beta} t'} \widetilde  R_k^\eps(t') \right) \, dk \\
  &= O\left( \left( \frac 1t + \eps \right) e^{\eta \eps t} \right).
\end{align*}
\end{itemize}
The result follows.

\section{Metals}\label{sec:met}
We prove the two assertions of Theorem \ref{thm:metal} in sequence.

\subsection{Linear response}
We prove the first assertion of Theorem~\ref{thm:metal}: 
We first note that, for $\varepsilon > 0$ small enough and $t \le
\frac 1 \varepsilon \varepsilon^{\theta}$, the function $k\mapsto
  \lambda_{N_{k}+1, k-\eps \vec{e}_{\beta} t} - \lambda_{N_{k}, k-\eps \vec{e}_{\beta} t}$
is bounded away from zero, 
and therefore so is $g_k(\eps t)$. We can
therefore apply Corollary \ref{cor:adiab} on each $B_{N}$ to obtain
\begin{align}
\label{eq:formule0}
j_{\alpha,\beta}^{\varepsilon}(t)
= 
(2\pi)^{-d} \sum_{N \in \N} \Bigg( &{- \int_{B_{N}}}  \Tr( \partial_{\alpha} H_{k-\eps\vec{e}_\beta t} P_{N,k-\eps \vec{e}_{\beta}t})  dk\\
\nonumber
  &
  {-i} \eps \int_{B_{N}}
  \Tr( \partial_{\alpha} H_{k-\eps\vec{e}_\beta t} 
  L^{+}_{{k-\varepsilon \vec{e}_{\beta}t}}\partial_{\beta}
  P_{N,k-\eps \vec{e}_{\beta}t})dk \\
  \nonumber
  &{+i} \eps \int_{B_{N}} \Tr( \partial_{\alpha} H_{k-\eps \vec{e}_{\beta} t} 
  \widetilde{\mathcal U}^\eps_{\beta,k}(t)\ L^{+}_{k} \partial_{\beta} P_{N,k}\, \widetilde{\mathcal U}^\eps_{\beta,k}(t)^*) dk \Bigg)\\
\nonumber
&\hspace{-1.5cm}+{O\left(\eps^2 t{{\rm e}^{\eta \eps t}} \right)}
\end{align}
and so
\begin{align}
j_{\alpha,\beta}^{\varepsilon}(t)=& - (2\pi)^{-d} \sum_{N \in \N}{\int_{B_{N}}}  \Tr( \partial_{\alpha} H_{k-\eps\vec{e}_\beta t} P_{N,k-\eps \vec{e}_{\beta}t}) dk + O(\varepsilon)
\end{align}
when $t \le
\frac 1 \varepsilon \varepsilon^{\theta}$. In contrast to the case of
insulators however, the adiabatic term
\begin{align*}
  -(2\pi)^{-d} \sum_{N \in \N} \int_{\BZ_{N}}  \partial_{\alpha}  \Tr( H_{k{-}\eps\vec{e}_\beta t} P_{N,k{-}\eps \vec{e}_{\beta}t}) dk
  &={} \varepsilon t (2\pi)^{-d} \sum_{N \in \N} \int_{\BZ_{N}} { \partial_\alpha\partial_\beta}
   \left( \Tr(H_{k} P_{N,k})\right) dk + O(\varepsilon^{2}t^2 )
\end{align*}
has a non-zero first-order contribution (the zeroth-order term
vanishes by Proposition~\ref{prop:null_current}). The proportionality factor is computed by
the Stokes formula as
\begin{align}\nonumber 
  \sum_{N \in \N^\ast} \int_{B_N}   { \partial_\alpha\partial_\beta} \Tr(H_{k}\gamma^0_{k})dk 
  = &\sum_{N \in \N^\ast} \int_{B_N}  { \partial_\alpha\partial_\beta}\sum_{n=1}^{N} \lambda_{n,k} = \sum_{N\in\N^\ast} \left(\int_{\mathcal S_{N}}-\int_{\mathcal S_{N-1}}\right)  \partial_{\alpha} \sum_{n=1}^{N} \lambda_{n,k} (ds \cdot \vec e_{\beta})\\
  \label{Stokes}
  = & \sum_{N\in\N^\ast} \int_{\mathcal S_{N}}  \partial_{\alpha} \lambda_{N,k} (ds \cdot \vec e_{\beta})=(2\pi)^d D_{\alpha,\beta}
\end{align}
and the result follows.

\subsection{Bloch oscillations}
Under the assumptions of the second assertion, $N_{k}$ is either
$N_{\rm met}$ or $N_{\rm met}-1$, and in both cases
\begin{align*}
  \lambda_{N_{k}+1, k-\eps \vec{e}_{\beta} t} - \lambda_{N_{k}, k-\eps \vec{e}_{\beta} t}
\end{align*}
is bounded away from zero uniformly in $k$,$t$. We can therefore apply
Corollary \ref{cor:adiab} and obtain
\begin{align*}
  j_{\alpha,\beta}^{\varepsilon}(t)
&= 
{- (2\pi)^{-d} \sum_{N \in \N}\int_{B_{N}}} \left(  \partial_{\alpha} \Tr(H_{k{-}\eps\vec{e}_\beta t} P_{N_k,k{-}\eps \vec{e}_{\beta}t})\right) dk + O((\varepsilon + \varepsilon^{2} t){\rm e}^{\eta\eps t}).
\end{align*}
From the decomposition
\begin{align*}
P_{N_{k},k{-}\eps \vec{e}_{\beta}t} = P_{N_{\rm
    met}-1, k{-}\eps \vec{e}_{\beta}t} + \mathds 1(\lambda_{N_{\rm met}, k}
\leq \mu_{\rm F}) |u_{N_{\rm met},k{-}\eps \vec{e}_{\beta}t}\rangle\langle u_{N_{\rm met},k{-}\eps \vec{e}_{\beta}t}|
\end{align*}
and since $k \mapsto P_{N_{\rm met}-1, k{-}\eps \vec{e}_{\beta}t}$ is smooth and $\cR^\ast$-periodic, we have
\begin{align*}
  j_{\alpha,\beta}^{\varepsilon}(t)
&=
{- (2\pi)^{-d}\int_{\BZ}}  \mathds 1(\lambda_{N_{\rm met}, k}
\leq \mu_{\rm F}) \langle u_{N_{\rm met},k{-}\eps \vec{e}_{\beta}t}| \partial_{\alpha}H_{k{-}\eps\vec{e}_\beta t} |u_{N_{\rm met},k{-}\eps \vec{e}_{\beta}t}\rangle dk + O((\varepsilon + \varepsilon^{2} t){\rm e}^{\eta \eps t})\\
&=
{- (2\pi)^{-d}\int_{\BZ}}  \mathds 1(\lambda_{N_{\rm met}, k}
\leq \mu_{\rm F}) \partial_{\alpha} \lambda_{N_{\rm met}, k{-}\varepsilon e_{\beta} t}dk+ O((\varepsilon + \varepsilon^{2} t){\rm e}^{\eta\eps t}),
\end{align*}
which concludes the proof.
\section{Semi-metals}\label{sec:semimet}

We prove here Theorem \ref{theo:semi-metal}. We decompose the integral
defining~$j_{\alpha,\beta}^\eps(t)$ into several parts depending
whether one integrates far from the Dirac points or not.

\medskip 

We introduce a small parameter $\delta>0$ controlling the size of the
neighborhood of the Dirac points, which is independent of
$t,\varepsilon$. We decompose $\mathcal B$ as the disjoint union
\begin{align*}
  \mathcal B=\mathcal B_{\rm out}^{\delta}\; \cup  \, (\cup_{i \in
  \mathcal I} \mathcal B_{i}^{\delta})
\end{align*}
with
\[ \mathcal B_i^\delta= B(k_{i}, \delta),\]
where $\delta > 0$ is small enough so that 
\[\mathcal B_{\rm out}^{\delta}\subset \{k\in\mathcal B,\;
  \lambda_{N_{\rm sm},k} \leq \mu_F-c\delta\}\]
  for some constant $c>0$. 
Note that this decomposition is time-reversal
symmetric in the sense that 
\[
  -\mathcal B_{\rm out}^{\delta} = \mathcal
B_{\rm out}^{\delta}\;\; \mbox{and}\;\;
  - (\cup_{i \in \mathcal I} \mathcal B_{i}^{\delta}) = (\cup_{i \in \mathcal I} \mathcal B_{i}^{\delta}).
\]
We work in the regime $\varepsilon t \ll \delta \ll 1$,
$\varepsilon \ll \delta \ll 1$.

\medskip 

In the following analysis, we first treat the regions
$\mathcal B_{\rm out}^{\delta}$, where we will use adiabatic theory
with a non-zero gap larger than a constant times $\delta$. In the sets
$\mathcal B_{i}^{\delta}$, where the gap closes, we study the structure of the Taylor expansion of the Hamiltonian~$H_k$ close to the Dirac points  and construct  two-band reduced Hamiltonians~$H^R_{i,k}$. Then, we use the linear
response Proposition \ref{prop:Dyson}, reducing successively from the
Hamiltonian $H$ to the reduced Hamiltonian
$H^{R}_{i,k}$,
 and finally to the
Dirac Hamiltonian 
\[H^{D}_k = \begin{pmatrix} 0 & k_1-ik_2 \\ k_1+ik_2 & 0\end{pmatrix}\] 
for which we can explicitly
compute the current. Adding the contributions, we will obtain
\begin{align*}
\sigma_{\alpha,\beta}:=  \lim_{t \to \infty} \lim_{\varepsilon \to 0}\frac 1 {\varepsilon t} \int_{0}^{t} j_{\alpha,\beta}^{\varepsilon}(t') dt' = \frac {|\mathcal I|} {16}e_{\alpha} \cdot e_{\beta}  + O(\delta)
\end{align*}
Finally, we will pass to the limit $\delta \to 0$.

\subsection{Far from the Dirac points}
We set 
\[  j^{\eps,{\rm out}}_{\alpha,\beta}(t'):= -\frac 1{4\pi^2} \int_{\mathcal B^\delta_{\rm out}}{\rm Tr} \left(\partial_\alpha H_k\widetilde{\mathcal U}^\eps_{\beta,k}(t')\, \gamma_k^{0}\, \widetilde{\mathcal U}^\eps_{\beta,k}(t')^*\right) \, dk.
\]
Let $k \in \mathcal B_{\rm out}^{\delta}$. In the regime we consider,
$\gamma^{0}_{k - \varepsilon e_{\beta} t} = P_{N_{\rm sm},k-\varepsilon
  e_{\beta} t}$ is gapped with a gap larger than a constant times~$ \delta$. Applying
the analysis of the previous sections, we obtain that
\begin{align*}
 -{\rm Tr} \left(\partial_\alpha H_k\widetilde{\mathcal U}^\eps_{\beta,k}(t)\, \gamma_k^{0}\, \widetilde{\mathcal U}^\eps_{\beta,k}(t)^*\right) \\
 =&
-  \partial_\alpha \left( \Tr(H_{k{-}\eps\vec{e}_\beta t} \gamma^{0}_{k{-}\eps \vec{e}_{\beta}t})\right) 
-i \eps  
  \Tr(\partial_\alpha H_{k-\eps\vec{e}_\beta t} 
  L^{+}_{k-\eps \vec{e}_{\beta}t}\partial_{\beta}
  \gamma^{0}_{k-\eps\vec{e}_\beta t})  \\
  &+i \eps\Tr(\partial_\alpha H_{k{-}\eps \vec{e}_{\beta} t} 
  \widetilde{\mathcal U}^\eps_{\beta,k}(t)\left( L_{{k}}^{+} \partial_{\beta} \gamma^{0}_{k}\right) \widetilde{\mathcal U}^\eps_{\beta,k}(t)^*)  + O\left(\eps^2 t \delta^{-6}\right).
\end{align*}

We treat each term separately.
\begin{itemize}
\item For the first (adiabatic) term, we have
  \begin{align*}
    -\partial_\alpha\Tr(H_{k-\eps\vec{e}_\beta t} \gamma^{0}_{k-\eps \vec{e}_{\beta}t})  = -\partial_\alpha \Tr( H_{k} \gamma^{0}_{k}) &+ \varepsilon t \partial_{\alpha} \partial_{\beta} \Tr(H_k \gamma^{0}_k)\\
    &+ O\left({\varepsilon^{2} t^{2}}{\delta^{-4}}\right).
  \end{align*}
  By time-reversal symmetry, the first term vanishes when integrated
  on $\mathcal B_{\rm out}^{\delta}$. Using Stokes formula for the
  second term as in the metallic case, we get
  \begin{equation}\label{secondslot}
    \int_{\mathcal B_{\rm out}^{\delta}} -\partial_\alpha \left( \Tr(H_{k-\eps\vec{e}_\beta t} \gamma^{0}_{k-\eps \vec{e}_{\beta}t})\right) dk = \varepsilon t \sum_{ n\leq N_{\rm sm}} \int_{\partial{\mathcal B}_{\rm in}^{\delta}}\partial_\alpha \lambda_{n,k} (ds\cdot \vec e_\beta) + O\left({\varepsilon^{2} t^{2}}{\delta^{-4}}\right)
  \end{equation}

\item For the second (static) term we similarly expand in
  $\varepsilon$
  \begin{align*}
    -i\eps \int_{\mathcal B_{\rm out}^{\delta}}
  \Tr(
  \partial_\alpha H_{k-\eps\vec{e}_\beta t} 
  L_{k-\eps \vec{e}_{\beta}t}^{+}\partial_{\beta}
  \gamma^{0}_{k-\eps\vec{e}_\beta t}
  ) dk
 &=  -i \eps \int_{\mathcal B_{\rm out}^{\delta}} {\rm Tr}\left(\gamma^0_{k}[ \partial_{\alpha} \gamma^0_{k} ,  \partial_{\beta} \gamma^0 _{k}]\right) dk
 +O(\eps^2 t \delta^{-6})\\
 &=O(\eps^2 t \delta^{-6}),
  \end{align*}
where we used the fact that the function $k \mapsto {\rm
  Tr}\left(\gamma^0_{k}[ \partial_{\alpha} \gamma^0_{k} ,
  \partial_{\beta} \gamma^0 _{k}]\right)$ is odd.
\item For the third (oscillatory) term, we use the same arguments as
  in the insulating case, and obtain
  \begin{align*}
  &\frac{1}{t}\int_{0}^{t} dt' \int_{\mathcal B_{\rm out}^{\delta}} i\varepsilon\Tr( \partial_{\alpha} H_{k{-}\eps \vec{e}_{\beta} t'} 
  \widetilde{\mathcal U}^\eps_{\beta,k}(t')\left( L_{k}^{+}  \partial_{\beta} \gamma^0_{k} \right)\, \widetilde{\mathcal U}^\eps_{\beta,k}(t')^*) dk = O\left(\varepsilon  \delta^{-6} \left( \frac{1}{t}+\eps\right) \right).
  \end{align*}
\end{itemize}

We are left with
\begin{align}
\nonumber
\frac 1 {t} \int_{0}^{t}  j^{\eps,{\rm out}}_{\alpha,\beta}(t') \, dt'
= & \frac {\eps}{4\pi^2 t} \int_{0}^{t}t' dt'   \sum_{ n\leq N_{\rm sm}} \int_{\partial{\mathcal B}_{\rm in}^{\delta}}\partial_\alpha \lambda_{n,k} (ds\cdot \vec e_\beta) 
+ O\left( \frac{\varepsilon}{\delta^{6}} \left(\frac 1 t + \varepsilon(1+t^2) \right)\right)\\
    \label{Bout}
  = & {\frac {  \eps t}{2\pi^2}   \sum_{ n\leq N_{\rm sm}} \int_{\partial{\mathcal B}_{\rm in}^{\delta}}\partial_\alpha \lambda_{n,k} (ds\cdot \vec e_\beta) + O\left( \frac{\varepsilon}{\delta^{6}} \left(\frac 1 t + \varepsilon(1+t^2) \right)\right).}
\end{align}

\subsection{Close to Dirac points: reduction to the 2-band case}
\label{sec:dirac_reduc}

We set 
\begin{equation} \label{eq:jepsin}
j^{\eps,{\rm in}}_{\alpha,\beta}(t):= -\frac1{4\pi^2} \sum_{i \in {\cal I}} \int_{\mathcal B^\delta_{\rm in}}{\rm Tr} \left(\partial_\alpha H_k\widetilde{\mathcal U}^\eps_{\beta,k}(t)\, \gamma_k^{0}\, \widetilde{\mathcal U}^\eps_{\beta,k}(t)^*\right) \, dk.
\end{equation}
Using the linear response Proposition \ref{prop:Dyson}, we have for almost all  $k\in \mathcal B^\delta_{\rm in}$ and all $\eps,t \ge 0$,  
\begin{align*}
  -{\rm Tr} \left(\partial_\alpha H_{k - \varepsilon e_{\beta} t}\widetilde {\mathcal U}^\eps_{\beta,k}(t)\, \gamma_k^{0}\, \widetilde{\mathcal U}^\eps_{\beta,k}(t)^*\right) = 
  &-\partial_\alpha {\rm Tr} \left(H_k \gamma_k^{0}\right) 
  + \varepsilon t  { \partial_{\alpha}\partial_{\beta}} \left( \Tr (H_{k} \gamma_{k}^{0}) \right)\\
  & - i \varepsilon \Tr(\partial_{\alpha} H_{k} (e^{-i t L_k} - 1) L_{k}^{+} \partial_{\beta} \gamma_{k}^{0}) + O(\varepsilon^{2} {t}^{3}(1+t^3)).
\end{align*}
Using~\eqref{eq:first_bound},  the equality
${\rm Tr} \left(H_k \gamma_k^{0}\right) = \sum_{n=1}^{N_{\rm sm}}
\lambda_{n,k}$ and Assumption~\ref{ass:semimetal}, it is easily seen that the
left-hand side, as well as the first and fourth terms of the right-hand side of that equation,
are bounded uniformly in $k$ and therefore integrable on
$\mathcal B_{i}^{\delta}$. Besides, for $k\in\mathcal B^\delta_{\rm in}$, the second term is bounded by a constant
multiple of $(1/|k-k_{i}|)$ as $k \to k_{i}$, and is therefore
integrable. It follows that the third term is also integrable on
$\mathcal B_{i}^{\delta}$.

\medskip 

We treat the three leading terms of the right-hand side in sequence.
\begin{itemize}
\item The first term vanishes when integrated on the time-reversal symmetric
set $\cup_{i \in \mathcal I} \mathcal B_{i}^{\delta}$.
\item For the second, arguing as in the
metallic case, we get
\begin{align*}
  \int_{\mathcal B_{i}^{\delta}} \varepsilon t \partial_{\alpha\beta} \left( \Tr (H_{k} \gamma_{k}^{0}) \right) dk = -\varepsilon t \sum_{ n\leq N_{\rm sm}} \int_{\partial{\mathcal B}_{i}^{\delta}}\partial_\alpha \lambda_{n,k} (ds\cdot \vec e_\beta),
\end{align*}
so that the corresponding term in \eqref{eq:jepsin} cancels the contribution~\eqref{Bout} from
$\mathcal B_{\rm out}^{\delta}$.
\item For the third term, we use
  \begin{align*}
  &  \Tr(\partial_{\alpha} H_{k} (e^{-i t L_{k}} - 1) L_{k}^{+} \partial_{\beta} \gamma_{k}^{0}) = \Tr(\partial_{\alpha} H_{k} (e^{-i t L_{k}} - 1) (L_{k}^{+})^{2} [\gamma_{k}^{0}, \partial_{\beta} H_{k}])\\
    &= \sum_{n \le N_{\rm sm}}\sum_{m > N_{\rm sm}} (e^{-i t (\lambda_{n,k} - \lambda_{m,k})} - 1)\frac{\langle u_{n,k}, \partial_{\beta} H_{k} u_{m,k} \rangle\langle u_{m,k}, \partial_{\alpha} H_{k} u_{n,k}\rangle}{(\lambda_{m,k} - \lambda_{n,k})^{2}} - \rm c.c.
  \end{align*}
  with the sum converging from the asymptotics \eqref{eq:asymptotics_eigenvalues}.

  When $n \neq N_{\rm sm}$ or $m \neq N_{\rm sm}+1$, the denominators
  in that equation are bounded from below independently of $\delta$.
  The constant term vanishes when integrated over the time-reversal
  symmetric set $\cup_{i \in \mathcal I} \mathcal B_{i}^{\delta}$, and
  the oscillatory term can be treated using the formula
  \begin{align*}
    \frac 1 t \int_{0}^{t} e^{-i\omega t'} dt' = \frac{e^{-i\omega t} - 1}{-i\omega t}
  \end{align*}
  with $\omega = \lambda_{n,k} - \lambda_{m,k}$ bounded away from zero
  independently of $\delta$.
\end{itemize}

Putting all the results of the previous two sections together, we get that
\begin{align*}
  \frac 1 t \int_{0}^{t} j_{\alpha,\beta}^{\varepsilon}(t')dt' &=  \frac 1 t \int_{0}^{t} j_{\alpha,\beta}^{\varepsilon,{\rm out}}(t')dt' +  \frac 1 t \int_{0}^{t} j_{\alpha,\beta}^{\varepsilon,{\rm in} }(t')dt' \\
  &=-\frac {i\varepsilon }{4\pi^2 t}\sum_{i \in \mathcal I} \int_{0}^{t} dt'\int_{\mathcal B_{i}^{\delta}} (e^{-i t' (\lambda_{N_{\rm sm},k} - \lambda_{N_{\rm sm+1},k})} - 1)\\
  &\qquad\qquad \qquad\qquad \times \frac{\langle u_{N_{\rm sm},k}, \partial_{\beta} H_{k} u_{N_{\rm sm+1},k} \rangle\langle u_{N_{\rm sm+1},k},\partial_{\alpha} H_{k}u_{N_{\rm sm},k}\rangle}{(\lambda_{N_{\rm sm}+1,k} - \lambda_{N_{\rm sm},k})^{2}}dk - \rm c.c.\\
  & \qquad\qquad  + O\left( {\varepsilon^{2}\delta^{-6}(1+t^{2})} + {\varepsilon}{\delta^{-6} t^{-1}}+{\eps^2t^3(1+t^3)}\right).
\end{align*}
At this stage of the proof, only two modes are involved in the formula giving the current, namely the two modes that cross at the Fermi level. Everything happens as for a two-band model that we now study. We write for short
 \[ 
 \frac 1 t \int_{0}^{t} \frac{j_{\alpha,\beta}^{\varepsilon}(t')}{\eps} \, dt' =  -\frac {i }{4\pi^2 t}\sum_{i \in \mathcal I} \int_{0}^{t}  I^{R,i}_{\alpha,\beta}(t') \, dt'+O\left( {\varepsilon \delta^{-6}(1+t^{2})} + {\delta^{-6} t^{-1}}+{\eps t^3(1+t^3)}\right),
 \]
with 
\[ I^{R,i}_{\alpha,\beta}(\delta,t)=\int_{\mathcal B_{i}^{\delta}} (e^{-i t (\lambda_{N_{\rm sm},k} - \lambda_{N_{\rm sm+1},k})} - 1)\frac{\langle u_{N_{\rm sm},k}, \partial_{\beta} H_{k} u_{N_{\rm sm+1},k} \rangle\langle u_{N_{\rm sm+1},k},\partial_{\alpha} H_{k}u_{N_{\rm sm},k}\rangle}{(\lambda_{N_{\rm sm}+1,k} - \lambda_{N_{\rm sm},k})^{2}} \, dk - {\rm c.c.}.
\]

\subsection{Close to the Dirac points: the local model}
We now are interested in the computation of $I^{R,i}_{\alpha,\beta}(t)$. 
In the following, we drop the index $i$ and assume without loss of
generality that $k_{i} = 0$.

Hypothesis \eqref{hyp:semimetals_1} implies that, for $k$ small
enough, the Bloch Hamiltonian $H_k$ has exactly two eigenvalues
(counting multiplicities) close to $\mu_{\rm F}$. Consider an
arbitrary orthonormal basis $(v_0,w_0)$ of
$\Ran(P_{N_{\rm sm}+1,0} - P_{N_{\rm sm}-1,0})$. For all $k$ small
enough, we can construct an orthonormal basis $(v_k,w_k)$ of
$\Ran(P_{N_{\rm sm}+1,k} - P_{N_{\rm sm}-1,k})$ by L\"owdin
orthonormalization of
$((P_{N_{\rm sm}+1,k} - P_{N_{\rm sm}-1,k})v_0,(P_{N_{\rm sm}+1,k} -
P_{N_{\rm sm}-1,k})w_0)$, and set
$$
H_k^R = [v_{k} | w_{k}]^{*}
H_k[v_{k} | w_{k}] = \left( \begin{array}{cc} \langle v_k, H_k v_k \rangle &  \langle v_k, H_k w_k \rangle \\
\langle w_k, H_k v_k \rangle &  \langle w_k, H_k w_k \rangle \end{array} \right).
$$
It follows that the reduced Hamiltonian
\begin{align*}
  H_{k}\Big|_{\Ran(P_{N_{\rm sm}+1,k} - P_{N_{\rm sm-1},k}) }
\end{align*}
is equivalent through a unitary transform that depends analytically on $k$
to the reduced $2 \times 2$ Hamiltonian
\begin{align}
  \label{eq:Hmatk}
  H^{R}_{k} = \sum_{p=0}^3 b^{p}(k) \sigma_p,
\end{align}
where $$\sigma_{0} = {\rm Id}_{\C^{2}}, \;\; \sigma_1=\left( \begin{array}{cc} 0 & 1 \\ 1 & 0 \end{array} \right), \;\;\sigma_2=\left( \begin{array}{cc} 0 & -i \\ i & 0 \end{array} \right), \;\;\sigma_3=\left( \begin{array}{cc} 1 & 0 \\ 0 & -1 \end{array} \right)$$ 
are the Pauli matrices, and $(b^{p})_{p=0,1,2,3}$ are real-valued
analytic functions of $k$ in a neighborhood of $0$.

The matrix $H^{R}_{k}$ has eigenvalues
$$  \lambda_{\pm}(k) = b^{0}(k) \pm \sqrt{\sum_{p=1}^{3}b^{p}(k)^{2}}.
$$
It follows that
\begin{align*}
  b^{0}(k) &= \mu_{F} + O(|k|^{2}), \quad 
  b^{p}(k) = v_{F}\langle q^{p}, k\rangle + O(|k|^{2}), \quad p=1,2,3,
\end{align*}
where the $(q^{p})_{p=1,2,3}$ are the rows of a $3 \times 2$
matrix $Q$ with orthogonal columns, so that
\begin{align}\label{def:HR}
  H^{R}_{k} = \mu_{F} + v_{F} (Q k) \cdot \sigma + O(|k|^{2}).
\end{align}
Let $R \in {\rm SO}(3)$ be a rotation matrix that maps
${\rm Ran}(Q)$ to ${\rm Span}(e^0_{1},e^0_{2})$, where $(e^0_{1},e^0_{2},e^0_{3})$ is the canonical basis of $\R^3$. Let $U$ be one of
its associated $2 \times 2$ unitary matrices through the two-to-one ${\rm
  SU(2)} \to {\rm SO}(3)$ mapping, so that \cite{cornwell1984group}
\begin{align*}
  R_{pq} = \frac 1 2 \Tr(\sigma_{p} U \sigma_{q} U^{*}).
\end{align*}
It follows that
\begin{align*}
  \Tr(\sigma_{3} U H^{R}_{k} U^{*}) &= v_{F}\sum_{q=1}^{3} (Q k)_{q} \Tr(\sigma_{3} U \sigma_{q} U^{*}) + O(|k|^{2})\\
  &= 2 v_{F}\sum_{q=1}^{3} \langle  e^0_{3}, R Q k\rangle+ O(|k|^{2})\\
  &= O(|k|^{2}).
\end{align*}
Up to a unitary transform, we can therefore assume $Q$ to be
a $2\times 2$ matrix in~\eqref{def:HR}.

\subsection{The two-band case: reduction to the Dirac Hamiltonian}
\paragraph{Reduction to $H_{k}^{R}$}
For $k \neq 0$, let $\lambda^{R}_{\pm,k}$ be the larger and smaller
eigenvalues of $H^{R}_{k}$ respectively, and $u^{R}_{\pm,k}$
associated orthonormal eigenvectors in $\C^2$. We have
$\lambda_{-,k}^{R} = \lambda_{N_{\rm sm},k}, \lambda_{+,k}^{R} =
\lambda_{N_{\rm sm}+1,k}$, and
\begin{align*}
  [v_{k} | w_{k}]
u_{+,k}^{R} = e^{i\theta_{+}(k)}u_{N_{\rm sm} + 1,k},\qquad[v_{k} | w_{k}]
u_{-,k}^{R} = e^{i\theta_{-}(k)}u_{N_{\rm sm},k}
\end{align*}
for some phases $\theta_{\pm}(k) \in \R$. We have
\begin{align*}
  \partial_{\alpha} H^{R}_{k}
  &= [v_{k} | w_{k}]^{*}
\partial_{\alpha} H_k[v_{k} | w_{k}]+ \partial_{\alpha} [v_{k} | w_{k}]^{*} [v_{k} | w_{k}] H^{R}_{k} + H^{R}_{k}  [v_{k} | w_{k}]^{*}\partial_{\alpha} [v_{k} | w_{k}]
\\
&= [v_{k} | w_{k}]^{*}
\partial_{\alpha} H_k[v_{k} | w_{k}] + O(|k|)
\end{align*}
where we have used for the first line that $H_{k}$ commutes with
$P_{N_{\rm sm}+1,k} - P_{N_{\rm sm},k} = [v_{k} | w_{k}]
[v_{k} | w_{k}]^{*}$, and for the second that $H_{k}^{R} = \mu_{F} {\rm Id}_{2} + O(|k|)$ and
$\partial_{\alpha} \Big([v_{k} | w_{k}]^{*} [v_{k} | w_{k}] \Big) =
\partial_{\alpha} {\rm Id}_{2} = 0$. We therefore obtain
\begin{align*}
  \langle u^{R}_{+,k},\partial_{\alpha} H^{R}_{k}u^{R}_{-,k}\rangle&= e^{-i(\theta_{+}(k)-\theta_{-}(k))}\langle u_{N_{\rm sm}+1,k},\partial_{\alpha} H_{k}u_{N_{\rm sm},k}\rangle + O(|k|)\\
  \langle u^{R}_{-,k}, \partial_{\beta} H^{R}_{k} u^{R}_{+,k} \rangle&= e^{+i(\theta_{+}(k)-\theta_{-}(k))}\langle u_{N_{\rm sm},k},\partial_{\beta} H_{k}u_{N_{\rm sm}+1,k}\rangle + O(|k|).
\end{align*}
Since $(\lambda^{R}_{-,k} - \lambda^{R}_{+,k})$ is bounded from below
by a constant multiple of $|k|$, it follows that
\begin{equation}
  \label{eq:IRdt}
  I^{R}_{\alpha,\beta}(\delta,t) = \int_{B(0,\delta)} (e^{-i t (\lambda^{R}_{-,k} - \lambda^{R}_{+,k})} - 1)\frac{\langle u^{R}_{-,k}, \partial_{\beta} H^{R}_{k} u^{R}_{+,k} \rangle\langle u^{R}_{+,k},\partial_{\alpha} H^{R}_{k}u^{R}_{-,k}\rangle}{(\lambda^{R}_{+,k} - \lambda^{R}_{-,k})^{2}}dk - {\rm c.c.} + O(\delta).
\end{equation}

\paragraph{Reduction to $H_{k}^{D}$}
By standard results of perturbation theory \cite{Kato} applied to
$H^{R}_{k} = \mu_{F} + v_{F} (Q k) \cdot \sigma + O(|k|^{2})
$ with gap greater than a constant multiple of $|k|$, 
\[
  \lambda^{R}_{\pm,k} = \lambda^{Q}_{\pm,k} +
  O(|k|^{2})\;\;\mbox{and}\;\;   u^{R}_{\pm,k} = u^{Q}_{\pm,k} + O(|k|)
\]
where the superscript $Q$ refers to eigenvalues and appropriately
chosen orthonormal eigenvectors of the Hamiltonian
\begin{align*}
  H^{Q}_{k} = v_{F}(Qk) \cdot \sigma.
\end{align*}
It follows that
\begin{align*}
  I^{R}_{\alpha,\beta} (\delta,t) &= I^{Q}_{\alpha,\beta}(\delta,t) + O(\delta),
\end{align*}
where $I^{R}_{\alpha,\beta}(\delta,t)$ is defined similarly to
\eqref{eq:IRdt} as
\begin{align*}
  I^{Q}_{\alpha,\beta}(\delta,t) = \int_{B(0,\delta)}  (e^{-i t (\lambda^{Q}_{-,k} - \lambda^{Q}_{+,k})} - 1)\frac{\langle u^{Q}_{-,k}, \partial_{\beta} H^{Q}_{k} u^{Q}_{+,k} \rangle\langle u^{Q}_{+,k},\partial_{\alpha} H^{Q}_{k}u^{Q}_{-,k}\rangle}{(\lambda^{Q}_{+,k} - \lambda^{Q}_{-,k})^{2}}dk - {\rm c.c.}.
\end{align*}

We perform the change of variable $k' = Qk$ (recall that $Q$ is orthogonal) and obtain
\begin{align*}
I^{Q}_{\alpha,\beta}(\delta,t) = \, e_\beta^T \,   I^D(\delta,t) e_\alpha,
\end{align*}
where the coefficients $I^D_{ij}(\delta,t) $ of the $2\times 2$ matrix  $I^D(\delta,t) $ are given by 
\begin{align*}
  I^{D}_{ij}(\delta,t) = \int_{B(0,\delta)} (e^{-i t (\lambda^{D}_{-,k} - \lambda^{D}_{+,k})} - 1)\frac{\langle u^{D}_{-,k}, \partial_{k_j} H^{D}_{k} u^{D}_{+,k} \rangle\langle u^{D}_{+,k},\partial_{k_i} H^{D}_{k}u^{D}_{-,k}\rangle}{(\lambda^{D}_{+,k} - \lambda^{D}_{-,k})^{2}}dk - {\rm c.c.}
\end{align*}
and the superscript $D$ refers to the Dirac Hamiltonian
\begin{align*}
  H^{D}_{k} = v_{F} k \cdot \sigma.
\end{align*}

\paragraph{The Dirac Hamiltonian $H^{D}_{k}$}
We finish by computing $I^{D}(\delta,t)$ explicitly. Let
$k=r(\cos \theta, \sin \theta)$. We have
\[  \lambda^{D}_{\pm,k} = \pm v_{F} r,\;\;
  u^{D}_{+,k} = \frac 1 {\sqrt 2}
  \begin{pmatrix}
    1\\e^{i\theta}
  \end{pmatrix}\;\;\mbox{and}\;\;
  u^{D}_{-,k}= \frac 1 {\sqrt 2}
  \begin{pmatrix}
    -e^{-i\theta}\\1
  \end{pmatrix}.
  \]
By an explicit calculation, we obtain 
\begin{align*}
  \int_{0}^{2\pi}(e^{-i t (\lambda^{D}_{-,k} - \lambda^{D}_{+,k})}-1)\frac{\langle u^{D}_{-,k}, \partial_{k_i} H^{D}_{k} u^{D}_{+,k} \rangle\langle u^{D}_{+,k},\partial_{k_j} H^{D}_{k} u^{D}_{-,k}\rangle}{(\lambda^{D}_{+,k} - \lambda^{D}_{-,k})^{2}} \, d\theta - \text{c.c.} &= i\pi \frac 1 {2r^{2}}\sin(2v_Frt) \delta_{ij}.
\end{align*}
It follows that
\begin{align*}
  \frac 1 t \int_{0}^{t}I^{D}_{ij}(\delta,t') \, dt' &= \frac {i\pi}{2}\delta_{ij} {1\over t}\int_0^t \int_{0}^{\delta}\frac{\sin(2v_Frt')}{r} \, dr \, dt' = \frac{i\pi}{4 t}\delta_{ij}\int_{0}^{\delta}\frac{1 -  \cos(2v_Frt)}{v_Fr^{2}} \, dr\\
  &=\frac{i\pi}{4} \delta_{ij} \int_{0}^{\delta v_Ft}\frac{1-\cos(2r')}{(r')^2} \, dr' = \frac{i \pi^{2}}{4}\delta_{ij} + O\left((\delta t)^{-1}\right).
\end{align*}
We finally get by summing all the estimates
\begin{align*}
  \frac 1{t} \int_{0}^{t}  \frac{j_{\alpha,\beta}^{\varepsilon}(t')}{\eps} \, dt' = \frac {|\mathcal I|} {16} e_{\alpha} \cdot e_{\beta}   +  O\left( \delta +\varepsilon t^3 (1+t^3) +\frac{1}{\delta^{6}} \left(\frac 1 t + \varepsilon(1+t)^{2} \right)\right),
\end{align*}
hence the result.

\appendix

\section{Proofs of two technical lemmata} 

\subsection{Proof of Lemma~\ref{MagneticFarisLavine}}\label{sec:FL_lemma}
\begin{proof}
We replicate the proof of the Faris-Lavine Theorem given in \cite{ReedSimon2}, replacing the Laplacian by $\frac 12(-i\nabla + \mathcal{A})^2$. It consists in verifying the following two hypotheses of \cite[Theorem X.37]{ReedSimon2}.  Let $A = \frac 12(-i\nabla+\mathcal{A})^2 + W+V$ and $N=A+2c|x|^2 +b$, where $b\in\R$ will be specified below: 
\begin{equation}
\label{HypX37.1}\textit{there exists $h$, such that for any } \phi \in \mathcal{C},
\quad \|A\phi\|\leq h \|N\phi\|;
\end{equation}
\begin{equation}
\label{HypX37.2}
\textit{for some $\ell$, for any }  \phi \in \mathcal{C},\quad |(A\phi,N\phi) - (N\phi,A\phi)| \leq \ell \|N^\frac{1}{2} \phi\|^2.
\end{equation}

By hypothesis \textit{3} in Lemma~\ref{MagneticFarisLavine} and the conditions on $W$, it is possible to choose $b$ so that $N\geq 1$.
As quadratic forms on $\mathcal{C}$, 
\[N^2 = (A+b)^2 + 4c \sum_{j=1}^d x_j (A+b+c|x|^2)x_j - 2cd.  \]
Hypotheses \textit{1} and \textit{3} guarantee that $A+b+c|x|^2$ is bounded below.
Hence, increasing the value of~$b$ if necessary to make this operator positive, we have
\[ \|(A+b)\phi\|^2_{L^2} \leq \|N\phi \|^2_{L^2} +4cd\|\phi \|^2_{L^2},\]
which proves (\ref{HypX37.1}). 

\medskip 

For (\ref{HypX37.2}), we observe that 
\[ \pm i [A,N] = \pm 2c(x \cdot (-i\nabla+\mathcal{A}) + (-i\nabla +\mathcal{A})\cdot x) \leq 2 c \left((-i\nabla +\mathcal{A})^2 +|x|^2\right) \leq \ell N, \]
where we have used
\[(-i\nabla + \mathcal{A})^2 + |x|^2 \pm (x\cdot(-i\nabla + \mathcal{A}) + (-i\nabla + \mathcal{A})\cdot x) = (-i\nabla + \mathcal{A} \pm x)^2 \geq 0 \]
and 
\[ N = \left(\frac a 2 (-i\nabla + \mathcal{A})^2 + V\right) + (W+c|x|^2) + \frac{1-a}{2}(-i\nabla + \mathcal{A})^2 + c|x|^2 +b \geq e((-i\nabla + \mathcal{A})^2+|x|^2), \]
where $e = \min(c,\frac{1-a}{2})>0$ and where $b$ is chosen so that 
$$b- f +\min \,\sigma\left(\frac a 2 (-i\nabla + \mathcal{A})^2 + V\right) \geq 0.$$
This proves  (\ref{HypX37.2}). 
Hence $A$ is essentially self-adjoint on $\mathcal{C}$.
\end{proof}

\subsection{Proof of Lemma~\ref{lemmaUnitaryPropagator}}\label{sec:unitaryprop}

\begin{proof}By the Kato-Rellich theorem, for any $0\leq t\leq T$,
  $H(t)$ is self-adjoint on $L^2_{\rm per}$ with domain~$H^2_{\rm per}$, and
  bounded below. We will show that there exists $\mu>0$ so that the
  graph norm of $(H(t)+\mu)$ for any $0\leq t \leq T$ is equivalent to
  the $H^2_{\rm per}$-norm. This will prove Lemma~\ref{lemmaUnitaryPropagator} by Proposition 2.1
  in \cite{teufel2003adiabatic} (see also Theorem X.70 in
  \cite{ReedSimon2}).

\medskip 

We have for any $\mu>0$, $0\leq t\leq T$ and $\phi \in H^{2}_{\rm per}$,
\[ \|(H(t)+\mu)\phi \|_{L^{2}_{\rm per}} \leq (1+a)\|H_0 \phi \|_{L^{2}_{\rm per}} + (b+\mu) \|\phi\|_{L^{2}_{\rm per}} \leq (1+a+b+\mu) \|\phi\|_{H^2_{\rm per}},\]
and so the graph norm is controlled by the $H^2_{\rm per}$-norm. 

\medskip 

For the other inequality, we relate the resolvent of $H(t)$ to that of $H_0$ by a bounded operator, with bounded inverse. 
Notice that, for any $\mu>0$, since $H_0$ is positive, 
\[\forall \; 0\leq t\leq T, \quad (H(t)+\mu) = (1+H_1(t)(H_0+\mu)^{-1})(H_0+\mu).\]
Furthermore,
\[\forall \; 0\leq t\leq T, \quad\|H_1(t)(H_0+\mu)^{-1}\| \leq
  a\|H_0(H_0+\mu)^{-1}\| +b\|(H_0+\mu)^{-1}\| \leq a
  +\frac{b}{\mu}.\]
and so, for $\mu > \frac{b}{1-a}$, the operator
$1+ H_1(t)(H_0+\mu)^{-1}$ is bounded and invertible with bounded
inverse in $L^{2}_{\rm per}$. Therefore $(H(t)+\mu)^{-1}$ is bounded 
from $L^2_{\rm per}$ to $H^2_{\rm per}$, which means there exists $C>0$ such
that, for any $\phi \in H^2_{\rm per}$ and $0\leq t\leq T$,
\[  \|\phi\|_{H^2_{\rm per}} = \|(H(t)+\mu)^{-1}(H(t)+\mu)\phi\|_{H^2_{\rm per}} \leq C\|(H(t)+\mu)\phi\|_{L^{2}_{\rm per}}, \]
which concludes the proof.
\end{proof}

\vspace{1cm}

\noindent {\bf Acknowledgements}. We are grateful to Caroline Lasser
for stimulating discussions all along the preparation of this article.
We thank the two reviewers for their constructive suggestions, and in
particular for pointing out to us a more elegant proof of
Proposition~\ref{prop:null_current} and the extension to uniform
magnetic fields discussed in Remark \ref{rem:only_depends_on_Hk}. This
project has been supported by Labex Bezout and has received funding
from the European Research Council (ERC) under the European Union's
Horizon 2020 research and innovation programme (grant agreement No
810367). The second author thanks the mathematics department of the
Technische Universit\"at M\"unchen for hosting her during the final
writing of this article.

\bibliographystyle{plain}
\bibliography{refs}

\vskip 1cm

\noindent *Universit\'e Paris-Est, Laboratoire d'Analyse et de Math\'ematiques Appliqu\'ees (UMR 8050), UPEM, UPEC, CNRS, F-94010, Cr\'eteil, France (clotilde.fermanian@u-pec.fr)\\
$\dagger$ Universit\'e Paris-Est, CERMICS, \'Ecole des Ponts ParisTech and Inria Paris,
F-77455, Marne-la-Vall\'ee, France
(eric.cances@enpc.fr, antoine.levitt@inria.fr, sami.siraj-dine@enpc.fr)\\

\end{document}